\documentclass[
]{ceurart}

\newif\ifextended
\extendedtrue

\newif\ifanonymous
\anonymousfalse

\usepackage[english]{babel}

\ifextended
  \usepackage[bibliography=common]{apxproof}
\else
  \usepackage[bibliography=common,appendix=strip]{apxproof}
\fi

\usepackage{url}
\usepackage[utf8]{inputenc} %
\usepackage[small]{caption} %
\usepackage{graphicx}
\usepackage{amsmath, amssymb, amsfonts, amsthm}
\usepackage{mathtools}
\usepackage{booktabs}
\usepackage{algorithm}
\usepackage{algorithmic}
\usepackage{xspace}
\usepackage{wrapfig}

\usepackage{subcaption}

\usepackage{natbib}
\usepackage{amsfonts}

\usepackage[textwidth=13mm,tickmarkheight=1em,disable]{todonotes}

\usepackage[capitalise]{cleveref}

\usepackage{enumitem}
\usepackage{tikz}
\usetikzlibrary{arrows,automata,positioning}
\usepackage{accents}

\newtheoremrep{theorem}{Theorem}
\newtheorem{example}[theorem]{Example}
\newtheoremrep{lemma}[theorem]{Lemma}
\newtheoremrep{proposition}[theorem]{Proposition}
\newtheoremrep{definition}[theorem]{Definition}
\newtheoremrep{observation}[theorem]{Observation}
\newtheoremrep{corollary}[theorem]{Corollary}

\title{Constructive Preference Relations: Navigating Undecidability in Rational LTL Contraction}

\ifanonymous
\author{(Anonymous Authors)}
\else
\author[1]{Hannes Gaißer}[%
email=hannes-carl.gaisser@student.uni-tuebingen.de,
]
\address[1]{University of T\"ubingen, Germany}

\author[2]{Dominik Klumpp}[%
email=klumpp@lix.polytechnique.fr,
]
\address[2]{LIX -- CNRS -- \'Ecole Polytechnique, France}

\author[3]{Jandson S Ribeiro}[%
email=ribeiroj@cardiff.ac.uk,
]
\address[3]{Cardiff University, United Kingdom}
\fi

\newcommand\myskip[2]{}

\newcommand\quot[1]{\emph{``#1''}}

\newcommand\yes{\texttt{\upshape yes}\xspace}
\newcommand\no{\texttt{\upshape no}\xspace}

\newcommand\decproblem[2]{%
\begin{center}%
  \fbox{\begin{minipage}{0.95\columnwidth}%
    \itshape
    \begin{description}%
      \item[\itshape Input:] #1.\vspace*{-0.75em}%
      \item[\itshape Output:] \yes if #2, \no otherwise.%
    \end{description}%
  \end{minipage}}%
\end{center}%
}%

\newcommand\incomp[1][]{\mathrel{\#_{#1}}}

\makeatletter
\newcommand*{\propdef}[2][\@nil]{%
  \def\tmp{#1}%
  \ifx\tmp\@nnil%
    \expandafter\gdef\csname propref#2\endcsname{\hyperref[prop:#2]{\textbf{\upshape(#2)}}}%
    \textbf{\upshape(#2)}%
  \else%
    \expandafter\gdef\csname propref#2\endcsname{\hyperref[prop:#2]{\textbf{\upshape(#1)}}}%
    \textbf{\upshape(#1)}%
  \fi%
  \phantomsection\label{prop:#2}%
}
\makeatother

\newcommand\prop[1]{%
  \ifcsname propref#1\endcsname%
    \csname propref#1\endcsname%
  \else%
    \errmessage{Property #1 has not been defined (use \textbackslash{}propdef to define it).}%
  \fi%
}

\newcommand\powerset[1]{\mathcal{P}(#1)}
\newcommand\N{\mathbb{N}}
\newcommand\Z{\mathbb{Z}}
\newcommand\R{\mathbb{R}}

\newcommand\Fm{\mathit{Fm}}
\newcommand\CnFun{\mathit{Cn}}
\newcommand\Cn[1]{\CnFun(#1)}
\newcommand\CCT[1][]{\mathit{C\!C\!T}_{\!#1}}
\newcommand\kb{\mathcal{K}}

\newcommand\limplies\to

\newcommand\compl[1]{\overline{\omega}(#1)}

\newcommand\exc{\mathbb{E}}

\newcommand\AP{\mathit{AP}}
\newcommand\LTL{\mathit{L\!T\!L}}

\newcommand\FmLTL{\mathit{Fm}_{\LTL}}
\newcommand\CnLTL[1]{\CnFun_{\LTL}(#1)}

\newcommand{\Next}[1][]{\mathbf{X}^{#1}\,}
\newcommand\Globally{\mathbf{G}\,}
\newcommand\Finally{\mathbf{F}\,}
\newcommand\Until{\mathbin{\mathbf{U}}}

\newcommand\id[1]{\mathit{id}(#1)}

\newcommand\UP{\mathit{UP}}
\newcommand{\lang}[1]{\mathcal{L}(#1)}

\newcommand\init{\mathsf{init}}

\newcommand\Traces[1]{\mathit{Traces}(#1)}
\newcommand\ltlbuchi[1]{A_{#1}}
\newcommand\kripkebuchi[1]{A_{#1}}

\newcommand\support[1]{\mathcal{S}(#1)}
\newcommand\buchi{\textnormal{B\"uchi}}
\newcommand\excBuchi{\exc_\buchi}

\newcommand{\relFn}{\mathcal{R}}
\newcommand{\rel}[1]{{\relFn(#1)}}

\newcommand{\cp}[1][1]{$\mathbf{(K^{-}_{#1})}$}

\makeatletter
\newcommand{\oset}[3][0ex]{%
  \mathbin{\mathop{#3}\limits^{
    \vbox to#1{\kern-2\ex@
    \hbox{$\scriptstyle#2$}\vss}}}}
\makeatother
\newcommand\dotmin[1][]{\oset[-.7ex]{\textbf{\large.}}{-}_{#1}}

\newcommand\pref{\prec}
\newcommand\npref{\not\pref}

\newcommand{\leftend}{\mathord{<}}
\newcommand{\rightend}{\mathord{>}}
\newcommand{\Iota}{\textnormal{I}}
\newcommand\headp[1]{\accentset{\bullet}{#1}}

\DeclareMathOperator{\enc}{enc}

\newcommand\earliestFuture[2]{\mathsf{fst}_{#1}(#2)}

\newcommand{\tracesof}[1]{\lVert  #1 \rVert}
\newcommand{\trace}{\tau}

\begin{document}

\copyrightyear{2026}
\copyrightclause{Copyright for this paper by its authors.
  Use permitted under Creative Commons License Attribution 4.0
  International (CC BY 4.0).}

\conference{24th International
	Workshop on Nonmonotonic Reasoning, July 17–19, 2026, Lisbon,
	Portugal}

\maketitle

\pagestyle{plain}
\thispagestyle{plain}

\begin{abstract}
We study the computational aspects of \emph{epistemic preference relations} in non-classical logics,
particularly \emph{linear temporal logic} (LTL).
Epistemic preferences form the backbone of \emph{belief contraction operators},
which describe how to rationally relinquish obsolete beliefs.
These preference relations have to satisfy certain innocuous conditions;
and constructing such relations is usually assumed to be a trivial process.
However, in the case of LTL, where relations are represented with Büchi automata, we show that this is a challenging task:
the core condition, which guarantees the success of contraction, is in fact \emph{undecidable}.
Towards achieving effective LTL belief contraction,
we then propose several concrete constructions of novel preference relations
that satisfy the required conditions \emph{by design}.
These constructions include, among others,
(1) generalisations of distance measures (e.g.\ Dalal) beyond the classical setting,
as well as (2) the ability to hierarchically compose different preference relations.
Our results not only provide rich families of preference relations for LTL, but also generalise the limited pool of concrete preference relations for the classical cases, allowing us to go beyond Dalal to achieve full rationality.
\end{abstract}

\section{Introduction}
In \emph{belief change} \citep{gardenfors:flux,hansson:belief-dynamics}, the process of relinquishing an obsolete belief is studied under the name of \emph{contraction}.
Ideally, only beliefs associated with the entailment of the obsolete belief should be discarded, so the incurred changes are minimised.
This principle of minimal change is conceptualised via sets of rationality postulates
that prescribe what a rational change is.
Families of constructive operators that abide by such postulates were identified, called rational operators.
Contraction is central in belief change, as several other change processes internalise contraction. For example,  \emph{revision}, the process of accommodating a new belief while maintaining consistency, consists of contracting  potential conflicts with the new belief and then incorporating it.
This makes contraction  an ideal component %
for identifying and understanding how modifications are conducted and, therefore, to properly address minimal change.
We consider belief contraction in the setting of reasoning with temporal information, specifically \emph{linear temporal logic} (LTL, for short)~ \citep{pnueli:ltl}. 

Preference relations are the \emph{de facto} strategy to construct rational change operators. %
For instance, in \emph{iterated belief revision} \citep{DarwicheP97,SchwindKP22} whose focus is on the impact of the order in which new beliefs arrive, %
total relations and rankings%
\footnote{A ranking is a special kind of preference relation where each object is assigned a confidence/preference number.}
are the canonical way %
for representing an agent's beliefs,
while the constructive operators essentially
bend a preference relation into a new one. %
Although most of the research in belief change concentrates in the classical settings \citep{DelgrandePW18, BoothMVW14}, such as classical propositional logic, first order logic and some sub-classical cases, such  as Horn, recently efforts have been put to extend belief change to more expressive logics capable of expressing intricate knowledge domains  such as description logics \citep{DLBook, BaaderW24}, CTL \citep{ClarkeE08,clarke:model-checking} and LTL \citep{pnueli:ltl}.
Due to the higher expressive power of these logics, %
the constructive contraction operators in these settings differ from the classical ones, though preferences relations still remain the main component to obtain such rational constructive operators \citep{jandson:towards-contraction, jandson:sans-compactness, kr25:effectiveAGMContraction}. %
In this sense, preference relations form the backbone of rational operators, even beyond the classical cases. %
Such preference relations must satisfy some innocuous conditions to guarantee rationality of the contraction operators.

Belief change investigates and provides solutions for  problems intrinsically connected with the flux of knowledge such as \textit{ontology repair, reasoning in the presence of inconsistency, data integration} and \textit{non-monotonic reasoning}, to cite a few. 
This makes belief change a central component to successfully support and implement various forms of knowledge dynamics and their respective applications. 
Therefore, it is paramount to understand  the computational powers and limitations of belief change.
Despite this central role of belief change in handling knowledge dynamics,
the computational aspects of belief change are almost an untouched subject.
The few works that look at it concentrate on specific change operators within classical propositional logics or Horn \citep{Nebel1998, EiterG92, schwind:compute-iterated, SB22}.
Only recently, this question started to receive attention for logics beyond the classical spectrum, and revealed dire disruptive negative results.
\citet{guerra2018} have investigated some decidability problems regarding the traditional partial meet contraction operators in LTL, and have shown that deciding whether a given contraction operator is a partial meet operator is undecidable.
\citet{Souza22}  investigated the problem of reasoning about belief change when the underlying knowledge is specified in CTL, and revision operators are encapsulated in  the object language of preference logics. 
\citet{kr25:effectiveAGMContraction}
have shown that there are uncountably many uncomputable contraction operators in LTL and other non-classical logics.
They framed the precise class of computable contraction operators in LTL, using \emph{B\"uchi automata}, and \emph{B\"uchi-Mealy automata} to specify preference relations.

Given the central role of contraction in supporting and obtaining other forms of belief change operators,
contraction is the ideal candidate for studying the computational aspects of belief change, 
as by effectively computing contraction, we lay the foundation to  construct other forms of change operators. 
Following the results by \citet{kr25:effectiveAGMContraction}, using 
\buchi-Mealy automata to specify preference relation in LTL, %
we consider the problem of verifying whether a preference relation specified as a given \buchi-Mealy automaton satisfies the conditions required to secure rationality of contraction.
We show both positive and negative results,
and discuss how to address the negative results.

\textit{Contributions and roadmap.}
\cref{sec:preliminaries_ltl} presents the basic notation and definitions used throughout the paper, and briefly reviews LTL and \buchi\ automata,
while
\cref{sec:AGM_contraction} reviews the rationality postulates of belief contraction,
and the respective preference relations and rational constructions  for LTL. %
The subsequent two sections contain our main contributions:
\begin{itemize}
\item \Cref{sec:morrorring_and_cuts} investigates the decidability of the two central conditions for rationality.
  One of them, \emph{mirroring}, is decidable (\cref{thm:mirroring-dec}),
  while the other condition, \emph{maximal cut}, which ensures success of the contraction,
  is undecidable (\cref{thm:max-cut-undec}).
  The same applies to a stricter form of maximal cut, \emph{well-foundedness}.
  We explain why this rules out the possibility of certain representation results.
\item In \cref{sec:constructions},
  we show how %
  to achieve effective LTL contraction despite these issues,
    by presenting constructive families of preference relations
    that guarantee \emph{mirroring} and \emph{maximal cut} by design. Such families of relations not only provide a way to compute contraction for LTL, but also provide richer alternatives of preference relations for the classical settings.
\end{itemize}
Finally, \cref{sec:conclusion} discusses our results, their impact on the field and potential future research paths.

\ifextended\else Proofs of our results are available in an extended version uploaded as supplementary material.\fi

\section{LTL and B\"{u}chi Automata}
\label{sec:preliminaries_ltl}

We introduce some notations and definitions used throughout the paper, and briefly review \emph{linear temporal logic}, LTL for short.
The power set of a set $X$ is denoted $\powerset{X}$. 
For the remainder of the paper, we fix a finite, non-empty set of atomic propositions $\AP$. 
\begin{definition}[LTL Formulae]

	\emph{The formulae of LTL} are generated by the grammar
	\[
	\varphi \mathrel{::=} \bot \ |\ p \ |\ \lnot \varphi \ |\ \varphi \lor \varphi \ |\ \Next\varphi \ |\ \varphi \Until \varphi\ ,
	\]%
	where $p$~ranges over~$\AP$.
	$\Fm_\LTL$ denotes the set of all LTL formulae.
\end{definition}

In LTL, time is interpreted as a single timeline that unfolds to the future.
Formally, timelines are \emph{traces}, infinite sequences $\trace =a_0 a_1 \cdots $ where each $a_i \in \powerset{\AP}$ is the set of atomic propositions that holds at the time step $i$.
We write $\tau^i$ for the trace~$a_i a_{i+1} \cdots$.
The semantic of LTL is given in terms of Kripke structures and their traces.   

\begin{definition}[Kripke Structure]
	A \emph{Kripke structure} $M = (S, I, T, \lambda)$ consists of
	a finite set of states~$S$;
	a non-empty set of initial states~$I \subseteq S$;
	a left-total transition relation~$T \subseteq S \times S$, i.e., for all $s \in S$ there exists $s' \in S$ with $(s,s')\in T$;
	and a labeling~$\lambda : S \to \powerset{\AP}$ of states with sets of atomic propositions.
\end{definition}

A trace of a Kripke structure $M$ is a sequence
${\tau = \lambda(s_0)\lambda(s_1)\lambda(s_2)\cdots}$ with
$s_0 \in I$, and for all $i \geq 0$,
$s_i \in S$ and $(s_i, s_{i+1})\in T$.
The set of all traces of a Kripke structure $M$ is given by $\Traces{M}$.

The satisfaction relation between Kripke structures and LTL formulae is defined in terms of the satisfaction between the Kripke structure's traces and LTL formulae.
\goodbreak

\begin{definition}[Satisfaction]\label{def:sattraces}
	The \emph{satisfaction relation} is the least relation ${\models} \subseteq \powerset{\AP}^\omega \times \Fm_\LTL$ between traces and LTL formulae such that, for all $\tau = a_0a_1\cdots \in \powerset{\AP}^\omega$:  %
	\newcommand\myiff{\mbox{iff }}
	\begin{align*}
		\begin{array}{lclllcl}
			\tau \not\models \bot &&& ~ \qquad\quad ~
			& \tau \models \varphi_1 \lor \varphi_2 &\myiff & \tau \models \varphi_1 \text{ or } \tau\models\varphi_2\\
			\tau \models p &\myiff & p\in a_0
			&& \tau \models \Next\varphi &\myiff & \tau^1 \models \varphi\\
			\tau \models \lnot\varphi &\myiff & \pi \not\models \varphi
			&& \tau \models \varphi_1 \Until \varphi_2 &\myiff &\text{ there exists } i \geq 0 \text{ s.t. } \tau^i \models \varphi_2\\
						&&&&&& \text{ and for all } j < i, \tau^j \models \varphi_1
		\end{array}
	\end{align*}
\end{definition}
The set of all traces satisfying an LTL formula $\varphi$ is denoted $\tracesof{\varphi}$.
Traces that do not satisfy~$\varphi$ are called \emph{contra-traces} of $\varphi$.
As a trace does not satisfy a formula  $\varphi$ iff it satisfies its negation $\neg \varphi$, the set of all contra-traces of $\varphi$ is $\tracesof{\neg \varphi}$. 
A Kripke structure $M$ satisfies a formula $\varphi$, denoted ${M \models \varphi}$, iff all traces of $M$ satisfy $\varphi$.
$M$ satisfies a set $X$ of formulae, $M \models X$, iff $M \models \varphi$ for all $\varphi \in X$. 

From the satisfaction relation, we  define a consequence operator $\CnFun_{\LTL}: \powerset{\FmLTL} \to \powerset{\FmLTL}$ which  maps each set of formulae $X$ to all its consequences.

\begin{definition}[Consequence Operator]
	The \emph{consequence operator} $\CnFun_\LTL$ maps each set $X$ of LTL formulae
	to the set of all formulae $\psi$,
	such that for all Kripke structures $M$,
	if~$M \models X$ then also $M \models \psi$.
\end{definition}

A \emph{theory} is a logically closed set of formulae $\kb$, that is,  $\CnLTL{\kb} = \kb$.
The expansion of a theory~$\kb$ by a formula $\varphi$ is the theory $\kb + \varphi := \CnLTL{\kb \cup \{\varphi\}}$.
A theory $\kb$ is \emph{consistent} if $\kb\neq \FmLTL$,
and it is \emph{complete} if for all formulae $\varphi \notin \kb$, we have $\kb + \varphi = \FmLTL$.
The set of all complete consistent theories is denoted as $\CCT$.

B\"uchi automata are finite automata widely used in formal specification and verification of systems,
especially in LTL model checking %
\citep{clarke:model-checking},
as well as in planning~\citep{GiacomoV99,PatriziLGG11} when goals are specified in LTL.  %
\begin{definition}[B\"uchi Automata]%
	A \emph{B\"uchi automaton}~$A = (Q, \Sigma, \Delta, Q_0, R)$
	consists of
	a finite set of states~$Q$;
	a finite, nonempty alphabet $\Sigma$ (whose elements are called \emph{letters});
	a transition relation $\Delta \subseteq Q \times \Sigma \times Q$;
	a set of initial states $Q_0 \subseteq Q$;
	and a set of \emph{recurrence states} $R \subseteq Q$.
\end{definition}

A B\"uchi automaton accepts an infinite word over the alphabet $\Sigma$
if the automaton visits a recurrence state infinitely often while reading the word.
Formally, an infinite word is a sequence $a_0 a_1 \ldots$ with $a_i\in \Sigma$ for all~$i$.
For a finite word $\rho = a_0\ldots a_n$, with $n \geq 0$, let $\rho^\omega$~denote the infinite word derived by infinite repetition of $\rho$
(we assume $\cdot\,^\omega$ binds stronger than concatenation).
The set of all infinite words is denoted by $\Sigma^\omega$.
An infinite word $a_0a_1a_2\ldots \in \Sigma^\omega$ is \emph{accepted} by a B\"uchi automaton $A = (Q, \Sigma, \Delta, Q_0, R)$ if
there exists a sequence $q_0,q_1,q_2,\ldots$ of states $q_i\in Q$ such that
$q_0\in Q_0$ is an initial state,
for all~$i$ we have that $(q_i, a_i, q_{i+1}) \in \Delta$
and there are infinitely many $i\in \mathbb{N}$ with $q_i \in R$.
The set~$\lang{A}$ of all accepted words is the \emph{language} of~$A$.

Emptiness of a B\"uchi automaton's language is decidable.
Further, B\"uchi automata for the union, intersection and complement of the languages of given B\"uchi automata can be effectively constructed~\citep{buchi:automata}.
Unless otherwise noted, we always consider B\"uchi automata over the alphabet $\Sigma = \powerset{\AP}$,
where letters are sets of atomic propositions and infinite words are traces.
The automata-theoretic treatment of LTL is based on the following result:

\begin{proposition}[\citet{clarke:model-checking}] \label{prop:phi_kripke_to_buchi}%
	\label{prop:ltl-kripke-buchi}%
	For each LTL formula $\varphi$ and Kripke structure $M$,
	there exist  B\"uchi automata~$\ltlbuchi{\varphi}$ and~$\kripkebuchi{M}$
	that accept precisely $\lang{\ltlbuchi{\varphi}} = \tracesof{\varphi}$ resp.\ $\lang{\kripkebuchi{M}} = \Traces{M}$.
	
\end{proposition}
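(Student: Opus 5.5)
The statement has two halves, and I plan to prove them separately. The first is the classical translation from LTL to Büchi automata. The second is the direct encoding of a Kripke structure as a Büchi automaton. Both constructions are effective, which is what the later decidability arguments in the paper need.

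For $\kripkebuchi{M}$ with $M=(S,I,T,\lambda)$, I will build the automaton
\[
\kripkebuchi{M} = (S\cup\{\iota\},\ \powerset{\AP},\ \Delta,\ \{\iota\},\ S\cup\{\iota\}),
\]
where $\iota$ is a fresh initial state and the transitions are:
\begin{itemize}
  \item $(\iota,\lambda(s),s)\in\Delta$ for every $s\in I$;
  \item $(s,\lambda(s'),s')\in\Delta$ for every $(s,s')\in T$.
\end{itemize}
Every state is a recurrence state, so any infinite run is accepting.

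An infinite run $\iota,s_0,s_1,\dots$ then reads exactly $\lambda(s_0)\lambda(s_1)\cdots$ with $s_0\in I$ and $(s_i,s_{i+1})\in T$. Conversely, every path of $M$ yields such a run. This shows $\lang{\kripkebuchi{M}}=\Traces{M}$. Left-totality of $T$ is not even needed for this equality; it only guarantees that traces exist.

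For $\ltlbuchi{\varphi}$, I will use the standard tableau (Vardi–Wolper) construction, treating $\Next$ and $\Until$ as primitives.

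\begin{enumerate}
  \item Let $\mathrm{cl}(\varphi)$ be the closure of $\varphi$: all subformulae together with their negations, identifying $\lnot\lnot\psi$ with $\psi$.
  \item Take as states the maximal consistent ``atoms'' $B\subseteq \mathrm{cl}(\varphi)$. Each atom contains exactly one of $\psi,\lnot\psi$, respects $\lor$, and satisfies the local expansion law: $\psi_1\Until\psi_2\in B$ iff $\psi_2\in B$ or ($\psi_1\in B$ and $\psi_1\Until\psi_2$ is required at the next step).
  \item Add a transition $B\xrightarrow{a}B'$ exactly when:
  \begin{itemize}
    \item $a=B\cap\AP$;
    \item $\Next\psi\in B$ iff $\psi\in B'$;
    \item $\psi_1\Until\psi_2\in B$ iff $\psi_2\in B$ or ($\psi_1\in B$ and $\psi_1\Until\psi_2\in B'$).
  \end{itemize}
  \item Take as initial states the atoms containing $\varphi$.
  \item For each until-subformula $\psi_1\Until\psi_2$, impose a generalized Büchi acceptance set $\{B : \psi_1\Until\psi_2\notin B \text{ or } \psi_2\in B\}$.
  \item Convert the generalized condition to an ordinary Büchi automaton with the usual counter product.
\end{enumerate}

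Correctness is proved by showing that a run $B_0B_1\cdots$ on $\tau$ is accepting iff $B_i=\{\psi\in\mathrm{cl}(\varphi):\tau^i\models\psi\}$ for all $i$.

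\begin{itemize}
  \item Right to left: the sets of formulae true at each position form an accepting run, by the semantics in \cref{def:sattraces}.
  \item Left to right: proceed by structural induction on $\psi$, showing $\psi\in B_i\iff\tau^i\models\psi$ for all $i$.
\end{itemize}

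The main obstacle is the until case in the left-to-right direction. The transition rules alone only enforce the fixpoint unfolding, which admits the spurious solution where $\psi_2$ is postponed forever. The acceptance sets are exactly what rule this out. If $\psi_1\Until\psi_2\in B_i$ but $\psi_2$ never occurs from $i$ on, then the transition rule propagates $\psi_1\Until\psi_2$ to every later $B_j$ with $\psi_2\notin B_j$. That run never visits the acceptance set for $\psi_1\Until\psi_2$, contradicting acceptance. The remaining cases are routine.

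Alternatively, since the paper attributes the proposition to \citet{clarke:model-checking}, it suffices to cite that construction and to verify the Kripke encoding above.
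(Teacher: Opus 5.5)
Your encoding of $M$ as $\kripkebuchi{M}$ is correct; it is the textbook construction the paper relies on, since the paper gives no proof of its own beyond the citation. Your tableau route for $\ltlbuchi{\varphi}$ is also the standard one, including the use of the acceptance sets to rule out indefinitely postponed eventualities.

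There is, however, one step that fails as written: the label condition $a = B\cap\AP$ on transitions. Since $B\subseteq\mathrm{cl}(\varphi)$, the set $B\cap\AP$ contains only propositions occurring in $\varphi$. Your automaton can therefore only read letters $a\subseteq\AP_\varphi$, where $\AP_\varphi$ is the set of propositions of $\varphi$. In the paper's setting the alphabet is $\powerset{\AP}$ for a fixed $\AP$, and formulae need not mention every proposition. Two examples:
\begin{itemize}
\item For $\varphi=\lnot\bot$ the only atom is $\{\lnot\bot\}$, so your automaton accepts only $\emptyset^\omega$ rather than all of $\powerset{\AP}^\omega$.
\item For $\varphi=p$ and some $q\in\AP\setminus\{p\}$, the trace $\{p,q\}^\omega$ is rejected.
\end{itemize}
The error surfaces exactly in your right-to-left step. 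The sequence of truth sets of $\tau$ violates the label condition as soon as $\tau$ makes some proposition outside $\varphi$ true. The left-to-right direction is unaffected, since the condition is merely too strong. The fix is to require only $a\cap\AP_\varphi = B\cap\AP$, i.e.\ $p\in a$ iff $p\in B$ for the propositions $p$ occurring in $\varphi$. Equivalently, put $p$ and $\lnot p$ into the closure for every $p\in\AP$. With this change your argument goes through.

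A smaller, presentational point: the atom condition in step 2 (``$\psi_1\Until\psi_2$ is required at the next step'') is not a property of a single atom. At the atom level you only want its local consequences: $\psi_2\in B$ implies $\psi_1\Until\psi_2\in B$, and $\psi_1\Until\psi_2\in B$ with $\psi_2\notin B$ implies $\psi_1\in B$. The full expansion law is then enforced on transitions, as your step 3 already does.
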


We use \buchi\ automata as structures to represent an agent's beliefs.
In the original AGM paradigm of belief change \citep{gardenfors:flux, alchourron:partial-meet}, an agent's corpus of beliefs is represented as a theory.
However, \buchi\ automata accept traces, rather than formulae.
This bridge is accomplished by the notion of support, which takes all formulae satisfied by all traces accepted by a \buchi\ automaton.
\Citet{kr25:effectiveAGMContraction} have shown that the support is indeed a theory.

\begin{definition}[Support \citep{kr25:effectiveAGMContraction}]\label{def:support}
	The \emph{support} of a B\"{u}chi automaton $A$ is the set %
	$
	\support{A} := \{\, \varphi \in \Fm_\LTL \mid \forall \pi\in\lang{A}\,.\, \pi \models \varphi \,\}
	$.
	If $\varphi \in \support{A}$, we say that $A$ \emph{supports} $\varphi$.
\end{definition}
\begin{wrapfigure}[7]{r}{0.22\textwidth}
\centering
\vspace*{-2em}
\begin{tikzpicture}[thick,inner sep=1,node distance=13mm,font=\scriptsize]
  \node[draw,circle] (q0) {$q_0$};
  \node[draw,circle,below of=q0] (q1) {$q_1$};
  \node[draw,double,circle,right of=q0] (q2) {$q_2$};
  \draw[<-] (q0) -- ++(left:7mm);
  \draw[->] (q0) edge[bend left] node[auto] {$\{r\}$} (q1);
  \draw[->] (q1) edge[bend left] node[auto] {$\emptyset,\{r\}$} (q0);
  \draw[->] (q0) -- node[auto] {$\{r\}$} (q2);
  \draw[->] (q2) edge[loop below] node[auto] {$\{r\}$} ();
\end{tikzpicture}
\captionsetup{hangindent=0pt,indention=0pt,labelfont=bf,format=plain}
\caption{B\"uchi automaton with initial state~$q_0$ and recurrence state~$q_2$.}
\label{fig:nba-example}
\end{wrapfigure}%
The space of all theories represented by \buchi\ automata is called \buchi\ excerpt, denoted $\excBuchi$.
Notably, some theories cannot be represented by B\"uchi automata.
In non-finitary logics like LTL, this is unavoidable for any method of finite representation~\cite{kr25:effectiveAGMContraction}.
\begin{example}
  Let the atomic proposition~$r$ denote \quot{it rains}.
  \Cref{fig:nba-example} shows a B\"uchi automaton~$A$,
  which supports the formulae $r$ (\quot{it rains today}), $\Next[2] r$ (\quot{it will rain in two days}), $\Next[4] r$ (\quot{it will rain in 4~days}), etc.
  However, $A$~does not support~$\Globally r$ (\quot{it rains every day}) because it accepts the trace~$\{r\}\,\emptyset\,\{r\}^\omega$.
\end{example}

\section{Belief Contraction in LTL}\label{sec:AGM_contraction}
\label{sec:ltl-contraction}

The field of belief change is founded on the AGM paradigm \citep{alchourron:partial-meet, gardenfors:flux}, originally devised for classical logics. 
In the original, a logic is treated as a pair $(\Fm, \CnFun)$ where $\Fm$ denotes the language of the logic, and $\CnFun$ is a consequence operator mapping each set of formulae to the set of all its entailments. 
The epistemic state of an agent is represented as a theory.  
A contraction function on a theory~$\kb$\footnote{Note that we are following the traditional local definition of contraction function rather than the global one. As thoughtfully studied by \citet{hansson:belief-dynamics}, global functions do not bring any benefit on one-shot contractions.} is a function ${\dotmin}: \Fm \to \powerset{\Fm}$ that,
given an obsolete piece of information~$\varphi$,
produces a theory  which does not entail~$\varphi$.
Contraction functions are subject to the following
rationality postulates~\citep{gardenfors:flux}:
\begin{tabbing}
kkkkk\= hhhhhghjhkhkjkjhjkhjhjjhk\= hjhjhhhhhhh\= kkkkk\= hhhhhghjhkhkjkjhjkhjhjjhjhjhjk\= hjhjhhh\kill
\cp[1] \>  $\kb \dotmin \varphi = \Cn{\kb \dotmin \varphi}$ \> (closure)
\> \cp[2]\> $\kb \dotmin \varphi \subseteq \kb$ \> (inclusion) \\
\cp[3]\>  If $\varphi \not \in \kb$ then $\kb \dotmin \varphi = \kb$ \> (vacuity)
\> \cp[4] \> If $\varphi \not \in \Cn{\emptyset}$ then $\varphi \not \in \kb \dotmin \varphi$ \> (success)\\
\cp[5]\> $\kb \subseteq (\kb \dotmin \varphi) + \varphi$ \> (recovery)
\> \cp[6]\> If $\varphi\equiv\psi$ then $\kb \dotmin \varphi = \kb \dotmin \psi$ \>  (extensionality)\\[1em]
\cp[7]\> $(\kb \dotmin\varphi)\cap(\kb \dotmin \psi)\subseteq \kb\dotmin{(\varphi\wedge \psi)}$ \>
\> \cp[8]\> If $\varphi\not\in \kb\dotmin{(\varphi\wedge \psi)}$ then
$\kb\dotmin{(\varphi\wedge \psi)}\subseteq \kb\dotmin\varphi$ \>
\end{tabbing}

For a detailed discussion on the rationale of these postulates,  see \citep{alchourron:partial-meet,gardenfors:flux,hansson:belief-dynamics}.
A contraction function that satisfies \cp[1] to \cp[6] is called a \emph{rational} contraction function.
If a  contraction function satisfies all the eight rationality postulates,  we say that it is \emph{fully rational}. 

Several classes of (fully) rational AGM contraction functions were proposed on classical logics \citep{hansson:belief-dynamics}. 
In more expressive, non-classical logics such as LTL, however, such classical classes of functions are too weak to capture recovery. To embrace more expressive logics,  \citet{jandson:towards-contraction} have proposed new classes of (fully) rational contraction functions. 
Upon these, \citet{kr25:effectiveAGMContraction} %
characterised precisely the class of all rational AGM contraction functions on LTL theories in the B\"uchi excerpt, called \buchi\ contraction functions.

B\"uchi contraction functions work by selecting the most plausible  contra-traces of the LTL formula to be contracted, and incorporating these traces.  %
The selection is realised by a \emph{B\"uchi choice function}, guaranteeing that the set of selected contra-traces is represented as a \buchi\ automaton.

\begin{definition}[B\"uchi Choice Functions]\label{def:choice}
	A \emph{B\"uchi choice function}~$\gamma$ maps each LTL formula to a single B\"uchi automaton, such that
	for all LTL formulae~$\varphi$ and~$\psi$,
	\begin{description}
		\item[\propdef{BF1}] the language accepted by $\gamma(\varphi)$ is non-empty;
		\item[\propdef{BF2}] $\gamma(\varphi)$ supports $\lnot\varphi$, if $\varphi$ is not a tautology; and
		\item[\propdef{BF3}] $\gamma(\varphi)$ and $\gamma(\psi)$ accept the same language, if $\varphi \equiv \psi$.
	\end{description} 
\end{definition}

As the purpose is to relinquish a formula $\varphi$, the choice function must select only among the contra-traces of $\varphi$~\prop{BF2}, except for tautologies, which cannot be removed.
To be successful, at least one contra-trace of $\varphi$ must be selected~\prop{BF1}.
The choice is not syntax-sensitive~\prop{BF3}.
The contraction corresponds to assimilating all selected traces into the current \buchi\ automaton, as long as the formula is not a tautology, in which case the theory remains untouched.

\begin{definition}[B\"uchi Contraction Functions]
	\label{def:contractp}
	Let $\kb$ be a theory in the B\"uchi excerpt
	and let
	$\gamma$ be a B\"uchi choice function.   %
	The \emph{B\"uchi Contraction Function} on $\kb$ induced by $\gamma$ is the function %
	$\dotmin[\gamma]$ such that
	$\kb \dotmin[\gamma] \varphi = \kb \cap \support{\gamma(\varphi)}$ if $\varphi \notin \Cn{\emptyset}$ and $\varphi \in \kb$,
	or $\kb \dotmin[\gamma] \varphi = \kb$ otherwise.
\end{definition}

\begin{theorem}[\citet{kr25:effectiveAGMContraction}]
	\label{cor:comp-basic-contraction}
	A contraction function $\dotmin$ in the \buchi\ excerpt 
	is rational if and only if  ${\dotmin}$ is a \buchi\ contraction function.
\end{theorem}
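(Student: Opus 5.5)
The statement is an equivalence, so I would prove the two directions separately. The setting is a fixed theory $\kb \in \excBuchi$, represented by a Büchi automaton $A_\kb$ with $\support{A_\kb} = \kb$. Two general facts would be used throughout. First, $\support{\cdot}$ is antitone in the language, and $\support{A} \cap \support{B} = \support{A \cup B}$, where $A \cup B$ is the union automaton. Second, as a consequence of \cref{prop:ltl-kripke-buchi}, every consistent theory of the Büchi excerpt is determined by its language up to the semantic closure of traces.

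For the \emph{if} direction, take a Büchi choice function $\gamma$ and check \cp[1]--\cp[6] for $\dotmin[\gamma]$.
\begin{itemize}
\item Closure holds because an intersection of two theories is a theory.
\item Inclusion is immediate from the definition.
\item Vacuity and extensionality follow directly from the case split in \cref{def:contractp} together with \prop{BF3}. Equivalent formulae yield automata with the same language, hence the same support.
\item Success: if $\varphi \notin \Cn{\emptyset}$ and $\varphi \in \kb$, then \prop{BF1} and \prop{BF2} give a trace accepted by $\gamma(\varphi)$ that falsifies $\varphi$. That trace also witnesses $\varphi \notin \support{\gamma(\varphi)}$.
\item Recovery: let $\psi \in \kb$. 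I would show $\varphi \limplies \psi \in \kb \cap \support{\gamma(\varphi)}$. It lies in $\kb$ because $\psi$ does. It lies in $\support{\gamma(\varphi)}$ because every accepted trace satisfies $\lnot\varphi$ by \prop{BF2}. Then $\psi \in (\kb \dotmin \varphi) + \varphi$ follows by modus ponens under $\CnLTL{\cdot}$.
\end{itemize}

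For the \emph{only if} direction, let $\dotmin$ be rational on $\kb$ with all outputs in $\excBuchi$. I would define $\gamma(\varphi)$ in three cases.
\begin{itemize}
\item If $\varphi$ is a tautology, or $\varphi \notin \kb$, take $\gamma(\varphi)$ to be any automaton with non-empty language supporting $\lnot\varphi$ when $\varphi$ is non-tautological. For example, use $\ltlbuchi{\lnot\varphi}$, or the universal automaton in the tautology case. This is harmless, because the definition of $\dotmin[\gamma]$ ignores $\gamma$ there, and it agrees with vacuity \cp[3].
\item Otherwise, let $B_\varphi$ be an automaton with $\support{B_\varphi} = \kb \dotmin \varphi$, which exists since the output lies in the excerpt. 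Set $\gamma(\varphi) := B_\varphi \cap \ltlbuchi{\lnot\varphi}$, the product automaton selecting the contra-traces of $\varphi$ accepted by $B_\varphi$.
\end{itemize}
Property \prop{BF2} holds by construction. \prop{BF1} follows from success: $\varphi \notin \support{B_\varphi}$, so some accepted trace falsifies $\varphi$. \prop{BF3} follows from extensionality, provided $B_\varphi$ is chosen canonically on equivalence classes, e.g.\ via a fixed representative. Even without that choice, the language is recoverable from the theory up to the relevant closure, so I would pick one $B$ per class.

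The remaining step is the key identity $\kb \dotmin \varphi = \kb \cap \support{\gamma(\varphi)}$ in the main case. The inclusion $\subseteq$ comes from inclusion \cp[2], together with the fact that shrinking the language to $\lang{B_\varphi} \cap \tracesof{\lnot\varphi}$ only enlarges the support. For $\supseteq$, take $\psi \in \kb$ with $\psi$ true on all traces of $B_\varphi$ violating $\varphi$. Then $\varphi \lor \psi$ holds on every trace of $B_\varphi$, so $\varphi \lor \psi \in \kb \dotmin \varphi$. By recovery, $\varphi \limplies \psi \in \kb \subseteq (\kb \dotmin \varphi) + \varphi$ gives $\lnot\varphi \lor \psi \in \kb \dotmin \varphi$ as well. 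Hence $\psi \in \kb \dotmin \varphi$ by closure.

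I expect the main obstacle to be this last step and the surrounding representation issues. I must be careful that support-based reasoning ($\psi$ holding on all accepted traces implies $\psi$ is supported) is exactly the definition used, so no compactness is needed. I also need that $\lnot\varphi \lor \psi$ actually lies in $\kb$ rather than merely in its closure. The rest is routine automata closure.
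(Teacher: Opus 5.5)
Your proof is correct. It takes a more direct route than the one the paper relies on. The paper does not prove this statement itself; it imports it from \citet{kr25:effectiveAGMContraction}. As the surrounding text indicates, they obtain it by instantiating the CCT-based representation of rational contraction from \citet{jandson:towards-contraction}: contraction is written as $\kb\cap\bigcap\delta(\varphi)$ for a selection $\delta(\varphi)$ of CCTs not containing $\varphi$. That result is then transported along the correspondence between CCTs and ultimately periodic traces. You bypass CCTs altogether and argue with supports and product automata. Your choice $\lang{B_\varphi}\cap\tracesof{\lnot\varphi}$ is the automaton-level counterpart of the canonical selection $\{X\in\compl{\varphi}\mid \kb\dotmin\varphi\subseteq X\}$. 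Your derivation of $\psi$ from $\varphi\lor\psi$ and $\lnot\varphi\lor\psi$ is the usual recovery argument. Your version is self-contained and makes B\"uchi-representability of the choice immediate via a single product construction. The CCT route instead separates a logic-independent representation theorem from the LTL-specific representability question, and the paper uses that separation when it moves to compendious logics.

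Three points should be made explicit.

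\emph{Deduction theorem.} The step ``by recovery \ldots\ gives $\lnot\varphi\lor\psi\in\kb\dotmin\varphi$'' uses the deduction theorem for $\CnFun_{\LTL}$. Since $\CnFun_{\LTL}$ is defined through Kripke structures, and $M\models\varphi$ quantifies over all traces of $M$, this is not automatic; it fails, for instance, for global consequence in modal logic. It does hold here. Suppose some $M\models X$ has a trace in $\tracesof{\varphi\land\lnot\chi}$. Then the non-empty $\omega$-regular set $\Traces{M}\cap\tracesof{\varphi\land\lnot\chi}$ contains an ultimately periodic trace. That trace is the unique trace of a lasso-shaped Kripke structure satisfying $X\cup\{\varphi\}$ but not $\chi$. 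This is the real subtlety in your final step. By contrast, membership of $\lnot\varphi\lor\psi$ in $\kb$ is immediate, since $\kb$ is closed.

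\emph{Tautology case.} For tautological $\varphi$ you cannot appeal to vacuity, because $\varphi\in\kb$. There, $\kb\dotmin\varphi=\kb$ follows from \cp[2], \cp[5] and \cp[1], since adding a tautology does not change consequences.

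\emph{The hedge for \textbf{(BF3)}.} Picking one automaton per theory is fine, but it is unnecessary: the support is injective on B\"uchi languages. If $\lang{A}\setminus\lang{A'}$ is non-empty, it contains an ultimately periodic trace $\tau$, and then $\lnot\id{\tau}\in\support{A'}\setminus\support{A}$. So the language is recoverable from the theory exactly, not only ``up to closure''.
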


For the supplementary postulates \cp[7] and \cp[8], the choice function has to be constrained further.
The main idea is that the choices must be based on epistemic preferences an agent has about its beliefs, with some beliefs being more plausible than others.
Such preferences are realised by a binary relation~${\pref}$ between the traces, where $\trace \pref \trace'$ indicates that $\trace'$~is more plausible than~$\trace$%
\footnote{
	As most of our results are founded on \citet{jandson:towards-contraction}, we follow their interpretation on preferences %
	as ``the higher, the better'', unlike the classical interpretation founded on the entrenchment notion ``the more grounded, the better''.}%
The choice then boils down to choosing the most plausible contra-traces of the formula~$\varphi$ being relinquished,
that is, $\lang{\gamma_{\pref}(\varphi)} = \max_{\pref} \tracesof{\neg \varphi}$.
The preference relation is subject to two central conditions:

\begin{description}
	\item[\propdef{Maximal Cut}] $\max_{\pref} \tracesof{\neg \varphi} \neq \emptyset$, if $\varphi$ is not a tautology.
	
	\item[\propdef{Mirroring}] If $\trace_1 \npref \trace_2$ and $\trace_2  \npref \trace_1$ but $\trace_1 \pref \trace_3$ then  $\trace_2 \pref \trace_3$.
\end{description}

 \prop{Maximal Cut} guarantees the success of the contraction, by ensuring the existence of maximally plausible contra-traces for each formula.
 The second property \prop{Mirroring} enforces that if two traces are incomparable, then the preferences over them must mimic one another. 
 To obtain a \buchi\ choice function, it still necessary to guarantee that $\max_{\pref}\tracesof{\neg \varphi}$ is recognised by a \buchi\ automaton.
 To this end, 
 \citet{kr25:effectiveAGMContraction} proposed the use of \emph{B\"uchi-Mealy automata} to represent such  relations.

\begin{definition}[B\"uchi-Mealy Automata]
	A \emph{B\"uchi-Mealy automaton} (BMA, for short) is a B\"uchi automaton over the alphabet~$\Sigma_\textnormal{BM} = \powerset{\AP} \times \powerset{\AP}$.
\end{definition}

A \buchi-Mealy automaton~$B$ accepts infinite sequences of pairs $(a_1, b_1) \cdots (a_i, b_i) \cdots$.
Such an infinite sequence encodes a pair of traces $(\trace_a, \trace_b)$, where 
$\trace_a = a_1 \cdots a_i \cdots $ and 
$\trace_b = b_1 \cdots b_i \cdots $. 
Hence, a BMA~$B$ recognises the binary relation %
$\rel{B} = \{ (a_1 \ldots, b_1 \ldots ) \in \Sigma^\omega \times \Sigma^\omega \mid (a_1, b_1) \cdots \in \lang{B} \}$.

\Citet{kr25:effectiveAGMContraction} have shown that the relation~$\rel{B}$ enjoys a key property:
for each LTL formula~$\varphi$,
the most plausible traces of a formula~$\varphi$ wrt.~$\rel{B}$ are recognised by a \buchi\ automaton,
which can be effectively computed.
In fact, this property, together with \prop{Maximal Cut}, guarantees that the preference relation~$\rel{B}$ of a BMA~$B$ induces a \buchi\ choice function~$\gamma_{B}$
with $\mathcal{L}\big(\gamma_{B}(\varphi)\big)  = \max_{\rel{B}} \tracesof{\neg \varphi}$.
The respective \buchi\ contraction function, denoted by~$\dotmin[\rel{B}]$, is guaranteed to be computable.
If $\rel{B}$~additionally satisfies \prop{Mirroring}, the contraction function~$\dotmin[\rel{B}]$~is fully rational.

\begin{theorem}[\citet{kr25:effectiveAGMContraction}]\label{theo:fullybuchi}
If the preference relation $\rel{B}$ of a BMA satisfies both \prop{Maximal Cut} and \prop{Mirroring},
the induced \buchi\ contraction function is fully rational and computable.
\end{theorem}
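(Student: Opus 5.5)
The plan is to obtain the statement by combining \cref{cor:comp-basic-contraction} with a direct check of the two supplementary postulates. Throughout, write $M(\varphi) := \max_{\rel{B}} \tracesof{\neg\varphi}$, so that $\lang{\gamma_B(\varphi)} = M(\varphi)$.

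\textbf{Step 1: $\gamma_B$ is a computable B\"uchi choice function.} For non-tautologies $\varphi$, the cited result of \citet{kr25:effectiveAGMContraction} lets us effectively compute a B\"uchi automaton recognising $M(\varphi)$, and we take this automaton as $\gamma_B(\varphi)$. For tautologies we can take any automaton with non-empty language, say one accepting all traces. Tautologyhood is decidable via emptiness of $\ltlbuchi{\neg\varphi}$. The three conditions then hold as follows.
\begin{itemize}
\item \prop{BF1} is exactly \prop{Maximal Cut}.
\item \prop{BF2} holds because $M(\varphi) \subseteq \tracesof{\neg\varphi}$.
\item \prop{BF3} holds because $M(\varphi)$ depends only on $\tracesof{\neg\varphi}$.
\end{itemize}
By \cref{cor:comp-basic-contraction}, $\dotmin[\rel{B}]$ is rational. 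It is computable because $\support{\cdot}$ membership reduces to B\"uchi emptiness and $\kb$ is itself given by an automaton.

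\textbf{Step 2: the supplementary postulates.} Both follow from the identity $\tracesof{\neg(\varphi\wedge\psi)} = \tracesof{\neg\varphi}\cup\tracesof{\neg\psi}$. Because $\support{\cdot}$ reverses inclusion of languages, each postulate becomes an inclusion between sets of maximal traces. The degenerate cases are handled separately by unfolding \cref{def:contractp}: these are the cases where $\varphi$, $\psi$ or $\varphi\wedge\psi$ is a tautology, or is not in $\kb$.

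\emph{\cp[7].} It suffices to show $M(\varphi\wedge\psi) \subseteq M(\varphi)\cup M(\psi)$. This holds for an arbitrary relation. Take $\tau \in M(\varphi\wedge\psi)$ and suppose, say, $\tau \models \neg\varphi$. Then $\tau$ has no strictly better trace in the larger set, hence none in $\tracesof{\neg\varphi}$, so $\tau \in M(\varphi)$.

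\emph{\cp[8].} Assume $\varphi \notin \kb\dotmin(\varphi\wedge\psi)$. Then some $\tau' \in M(\varphi\wedge\psi)$ satisfies $\neg\varphi$ (otherwise $\kb\cap\support{\cdot}$ would contain $\varphi$), and therefore $\tau'\in M(\varphi)$ as above. It suffices to show $M(\varphi) \subseteq M(\varphi\wedge\psi)$.

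Take $\tau \in M(\varphi)$ and suppose $\tau \pref \sigma$ for some $\sigma$ in the union. Since $\tau$ and $\tau'$ are both maximal in $\tracesof{\neg\varphi}$, either $\tau=\tau'$ or they are incomparable. If $\tau=\tau'$, this immediately contradicts the maximality of $\tau'$ in the union. If they are incomparable, \prop{Mirroring} yields $\tau'\pref\sigma$, which again contradicts the maximality of $\tau'$ in the union.

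\textbf{Main obstacle.} The \cp[8] argument is the only place where \prop{Mirroring} is used, and it is the delicate step. One must correctly extract the witness $\tau'$ from the hypothesis $\varphi\notin\kb\dotmin(\varphi\wedge\psi)$, intersecting with $\kb$ so that the witness is genuinely a trace of $\gamma_B(\varphi\wedge\psi)$. One must also handle the case where $\varphi\wedge\psi\notin\kb$ but $\varphi\in\kb$, which needs vacuity-style bookkeeping. The remaining steps are routine.
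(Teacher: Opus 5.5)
Your argument is correct, but it does not follow the paper's route. The paper proves nothing here and imports the result from \citet{kr25:effectiveAGMContraction}. As the appendix explains, that work rests on instantiating the CCT-based representation theorem of \citet{jandson:thesis} for Blade Contraction Functions. This uses the correspondence between LTL CCTs and ultimately periodic traces, and the equivalence of trace-level and CCT-level \prop{Maximal Cut}.

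Your Step~1 coincides with the paper's informal justification just before the statement. The difference is Step~2, where you verify \cp[7] and \cp[8] directly on traces. You use $\support{X\cup Y}=\support{X}\cap\support{Y}$ and the inclusion $M(\varphi\wedge\psi)\subseteq M(\varphi)\cup M(\psi)$, which holds for any relation. Under the \cp[8] hypothesis you add $M(\varphi)\subseteq M(\varphi\wedge\psi)$, obtained by applying \prop{Mirroring} to the witness $\tau'$. The degenerate cases go through by inclusion, by vacuity, or, when one conjunct is a tautology, by extensionality.

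What your route buys: it is short and self-contained. It also makes explicit which hypothesis feeds which postulate: \prop{Maximal Cut} is used only for success via \prop{BF1}, and \prop{Mirroring} only for \cp[8]. And it never has to relate maxima over all traces to maxima over CCTs. What the cited route buys: a two-directional representation theorem, uniform over compendious logics, which is what \cref{thm:repr-result} builds on.

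Your special treatment of tautologies is needed, since there $\max_{\rel{B}}\tracesof{\neg\varphi}=\emptyset$, which violates \prop{BF1}. It is also harmless, because \cref{def:contractp} never consults $\gamma$ on tautologies.

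Finally, the one ingredient you still take from the citation, an effectively constructed automaton for $M(\varphi)$, also follows from the paper's appendix. The set $M(\varphi)$ is $\tracesof{\neg\varphi}$ minus the first projection (\cref{lem:bma-exists-projection}) of $\rel{B}\cap(\tracesof{\neg\varphi}\times\tracesof{\neg\varphi})$ (\cref{lem:buchi-product-bma,lem:bma-closed-ops}). The set difference is then handled by B\"uchi complementation and intersection.
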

\begin{example}
  \label{ex:bma-preferences}
  Let the proposition~$r$ once again denote that \quot{it rains}.
  \Cref{fig:bma-example} shows an example of a B\"uchi-Mealy automaton~$B$.
  By convention, we write $x/y$ on the edges in place of the pair~$(x,y)$.

  As shown to the right of the figure,%
  the relation~$\rel{B}$ considers the trace~$\emptyset\{r\}\emptyset^\omega$
  (in which it does\pagebreak[3] not rain today, but tomorrow)
  less plausible than the trace~$\{r\}\emptyset^\omega$
  (in which it rains today),
  due to the corresponding run~$q_0\, q_2^\omega$.
  Similarly, the preference on the second line
  (preferring a trace in which it rains tomorrow over one in which it rains only in two days)
  is witnessed by the run~$q_0\,q_1\,q_2^\omega$.
  There is no preference between the traces in the third line (rain in two vs.\ in three days),
  as there is no corresponding run (after the partial run~$q_0\,q_1\,q_3\,q_4$, the automaton~$B$ is ``stuck'').
  Finally, $B$~considers the trace $\emptyset^\omega$ (in which it never rains)
  less plausible than~$\emptyset\emptyset\emptyset\{r\}\emptyset^\omega$,
  by the run~$q_0\,q_1\,q_3\,q_3\,q_4^\omega$.
  It can be shown that $\rel{B}$ satisfies \prop{Mirroring} and \prop{Maximal Cut}.
  We return to this point in \cref{ex:next-days-rain-ranking}.
\end{example}

\begin{figure}[t]
\begin{minipage}{0.5\textwidth}
\begin{tikzpicture}[thick,font=\footnotesize,node distance=2cm,inner sep=1]
  \node[draw,circle] (q0) {$q_0$};
  \node[draw,circle,right of=q0] (q1) {$q_1$};
  \node[draw,circle,double,right of=q1] (q2) {$q_2$};
  \node[draw,circle,below=1cm of q1] (q3) {$q_3$};
  \node[draw,circle,double,right of=q3] (q4) {$q_4$};

  \draw[<-] (q0) -- ++(left:7mm);
  \draw[->] (q0) -- node[below]{$\emptyset/\emptyset$} (q1);
  \draw[->] (q1) -- node[below]{$\emptyset/\{r\}$} (q2);
  \draw[->] (q1) -- node[left]{$\emptyset/\emptyset$} (q3);
  \draw[->] (q3) -- node[below]{$\emptyset/\{r\}$} (q4);
  \draw[->] (q0) edge[bend left] node[above]{$\emptyset/\{r\}$} (q2);
  \draw[->] (q3) edge[loop left] node[auto]{$\emptyset/\emptyset$} ();
  \draw[->] (q2) edge[loop right] node[auto,align=left]{$\emptyset/\emptyset$,\ \ $\emptyset/\{r\}$\\$\{r\}/\emptyset$,\ \ $\{r\}/\{r\}$} (); %
  \draw[->] (q4) edge[loop right] node[auto,align=left]{$\emptyset/\emptyset,$\\$\emptyset/\{r\}$} ();
\end{tikzpicture}
\end{minipage}
\hfill
\begin{minipage}{0.45\textwidth}
  \begin{align*}
    \emptyset\ \{r\}\ \emptyset\ \emptyset\ \emptyset \cdots && \pref && \{r\}\ \emptyset\ \emptyset\ \emptyset\ \emptyset \cdots\\
    \emptyset\ \emptyset\ \{r\}\ \emptyset\ \emptyset \cdots && \pref && \emptyset\ \{r\}\ \emptyset\ \emptyset\ \emptyset \cdots\\
    \emptyset\ \emptyset\ \emptyset\ \{r\}\ \emptyset \cdots && \#    && \emptyset\ \emptyset\ \{r\}\ \emptyset\ \emptyset \cdots\\
    \emptyset\ \emptyset\ \emptyset\ \emptyset\ \emptyset \cdots && \pref && \emptyset\ \emptyset\ \emptyset\ \{r\}\ \emptyset \cdots
  \end{align*}
\end{minipage}
\caption{A B\"uchi-Mealy automaton~$B$ (on the left), along with some examples of preferences between traces,
  as given by the relation~${\pref} = \rel{B}$.
  The notation~$\tau_1 \mathrel{\#} \tau_2$ indicates that both $\tau_1\not\pref \tau_2$ and $\tau_2\not\pref\tau_1$ hold.}
\label{fig:bma-example}
\end{figure}

For an effective realization of fully rational LTL belief contraction based on epistemic preferences represented as BMA,
it is thus necessary to ensure that the particular BMA relation satisfies \prop{Mirroring} and \prop{Maximal Cut}.
While it is possible to manually verify these properties for individual BMA relations,
allowing for the construction of bespoke contraction operators,
a general framework of belief contraction requires a general solution to this problem.
Two main alternatives are:
(i)~identify general constructions guaranteeing that the underlying relation can be encoded as a BMA and satisfies \prop{Maximal Cut} and \prop{Mirroring},
or
(ii)~provide devices capable of deciding whether the relation of a given BMA satisfies \prop{Maximal Cut} and \prop{Mirroring}.
In the next section, we show some fundamental limitations of such a general solution.

\section{On Deciding Mirroring and Maximal Cut}\label{sec:morrorring_and_cuts}

We investigate the \emph{decidability} of the two fundamental properties of BMA relations that guarantee full rationality: \prop{Mirroring} and \prop{Maximal Cut}.
We show that \prop{Mirroring} is decidable, but \prop{Maximal Cut} is undecidable.
As we see later on, this result precludes the possibility of a constructive representation result for fully rational BMA relations.
In addition to \prop{Maximal Cut}, we also investigate the conceptually simpler but strictly stronger property \emph{well-foundedness}.
While this shift does not improve the decidability situation, well-foundedness forms the basis for a set of pragmatic constructions that guarantee full rationality (\cref{sec:constructions}).

\subsection{Mirroring}
\label{sec:decide-mirroring}
\begin{toappendix}
  \subsection{Proofs for \cref{sec:decide-mirroring}}
\end{toappendix}
We begin our investigation with the property \prop{Mirroring}.
While the definition of this property in~\cref{sec:ltl-contraction} is given in a \emph{point-wise fashion}
(by quantifying over individual points, i.e., traces),
the property can be equivalently stated \emph{algebraically}, using standard operators from relational algebra.
\begin{toappendix}
  To begin, we repeat some standard definitions concerning relations.
    Let~$R, S\subseteq X\times X$ denote relations over a set~$X$.
    The \emph{complement}~$\overline{R}$ and \emph{inverse}~$R^{-1}$ of~$R$ are defined as
    \begin{align*}
      \overline{R} &= \{(a,b)\in X\times X\mid (a,b)\not\in R\}\\
      R^{-1} &= \{(a,b)\in X\times X\mid (b,a)\in R\}\ ,
    \end{align*}
    and the \emph{composition}~$R\circ S$ of $R$~and~$S$ as
    \[R\circ S = \{(a,c)\in X\times X\mid b\in X,(a,b)\in R,(b,c)\in S\}.\]
  We use these constructs in our algebraic formulation of \prop{Mirroring}:
\end{toappendix}
\begin{lemmarep}
  \label{lem:mirroring-algebraically}
  A relation~$R$ satisfies \prop{Mirroring}
  if and only if the inclusion $({\overline{R}} \cap ({\overline{R})^{-1}}) \circ {R} \ \ \subseteq \ \ {R}$ holds,
  where $\overline{\,\cdot\,}$ denotes the complement, $(\cdot)^{-1}$ denotes the inverse relation, and $\circ$ denotes relational composition.
\end{lemmarep}
\begin{proof}
  For the first direction, let us assume that $R$ satisfies the inclusion above. Let $x_1,x_2,x_3\in X$ such that $(x_1,x_2)\not\in R$, $(x_2,x_1)\not\in R$ and $(x_1,x_3)\in R$.
  In that case, we have:
  \begin{align*}
    &(x_1,x_2)\not\in R \land (x_2,x_1)\not\in R \land (x_1,x_3)\in R\\
    \null\Leftrightarrow\null &(x_2,x_1)\in(\overline{R})^{-1} \land (x_2,x_1)\in\overline{R} \land (x_1,x_3)\in R\\
    \null\Leftrightarrow\null &(x_2,x_1)\in {\overline{R}} \cap (\overline{R})^{-1} \land (x_1,x_3)\in R\\
    \null\Rightarrow\null &(x_2,x_3)\in ({\overline{R}} \cap ({\overline{R})^{-1}}) \circ {R}\\
    \null\Rightarrow\null &(x_2,x_3)\in R¨
  \end{align*}
  Hence, $R$~satisifies \prop{Mirroring}.

  For the other direction, let us assume that $R$~satisfies \prop{Mirroring} and let $x_2,x_3\in X$.
  We have:
  \begin{align*}
    &(x_2,x_3)\in ({\overline{R}} \cap ({\overline{R})^{-1}}) \circ {R}\\
    \null\Leftrightarrow\null &\exists x_1\in X \,.\, (x_2,x_1)\in {\overline{R}} \cap (\overline{R})^{-1} \land (x_1,x_3)\in R\\
    \null\Leftrightarrow\null &\exists x_1\in X \,.\, (x_1,x_2)\not\in R \land (x_2,x_1)\not\in R \land (x_1,x_3)\in R\\
    \null\Rightarrow\null &(x_2,x_3)\in R
  \end{align*}
  Therefore, the inclusion is satisfied.
\end{proof}
This algebraic characterisation is useful to establish decidability,
due to the following observation:

\begin{lemmarep}
  \label{lem:bma-closed-ops}
  B\"uchi-Mealy automata are effectively closed under union, intersection, complement, inverse relation and relational composition.
\end{lemmarep}
\begin{proof}
  B\"uchi-Mealy automata are a special case of B\"uchi automata, so constructions for the union, intersection and complement of B\"uchi automata can be applied~\cite{buchi:automata,buchi:s1s}.

  Given a BMA $B=(Q, \Sigma_\textnormal{BM}, \Delta, Q_0, R)$,
  we construct a BMA~$B'$ for the inverse relation, i.e., a BMA that satisfies $\rel{B'}=\rel{B}^{-1}$,
  by setting $B' := (Q, \Sigma_\textnormal{BM}, \Delta', Q_0, R)$,
  where $\Delta' =\{\,(q,(b,a),q') \mid (q,(a,b),q')\in\Delta\,\}$.

  Given two BMA $B^1=(Q^1, \Sigma_\textnormal{BM}, \Delta^1, Q_0^1, R^1)$ and $B^2=(Q^2, \Sigma_\textnormal{BM}, \Delta^2, Q_0^2, R^2)$,
  we construct a BMA~$B^\circ$ for the relational composition, i.e., a BMA that satisfies $\rel{B^\circ}=\rel{B^1}\circ\rel{B^2}$,
  by setting \[B^\circ := \big(Q^1\times Q^2 \times \{1,2\}, \Sigma_\textnormal{BM}, \Delta^\circ, Q_0^1\times Q_0^2 \times \{1\}, (R^1\times Q^2 \times \{1\}) \cup (Q^1 \times R^2 \times \{2\})\big)\ ,\]
  where $((q_1,q_2,k),(a,c),(q'_1,q'_2,k'))\in \Delta^\circ$ if there is some $b\in\Sigma$ such that $(q_1,(a,b),q'_1)\in\Delta^1$ and $(q_2,(b,c),q'_2)\in\Delta^2$,
  and we have $k'=3-k$ if $q_k\in R_k$, otherwise $k'=k$.
\end{proof}

Moreover, the inclusion between B\"uchi-Mealy automata is decidable, as a special case of B\"uchi automata inclusion~\citep{buchi:automata}.
We conclude:
\begin{theorem}
  \label{thm:mirroring-dec}
  The BMA mirroring problem, as given below, is decidable in \textsc{ExpTime}.
  \decproblem{A B\"uchi-Mealy automaton $B$}{$\rel{B}$ satisfies \prop{Mirroring}}
\end{theorem}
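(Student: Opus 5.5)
By \cref{lem:mirroring-algebraically}, the relation $\rel{B}$ satisfies \prop{Mirroring} if and only if $(\overline{\rel{B}} \cap (\overline{\rel{B}})^{-1}) \circ \rel{B} \subseteq \rel{B}$. The decision procedure therefore builds a BMA for the left-hand side and checks inclusion in $B$. Concretely, I would proceed in the following steps.

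\textbf{Step 1 (complement).} Complement $B$ as a B\"uchi automaton over $\Sigma_\textnormal{BM}$ to obtain a BMA $\overline{B}$ with $\rel{\overline{B}} = \overline{\rel{B}}$. This uses the fact that pairs of traces correspond bijectively to infinite words over $\Sigma_\textnormal{BM}$. Complementation gives $2^{O(n\log n)}$ states for an $n$-state $B$.

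\textbf{Step 2 (inverse and intersection).} Apply the inverse construction from \cref{lem:bma-closed-ops}, which swaps the components of each letter and keeps the state count unchanged. Then take the product intersection with $\overline{B}$. The result, call it $C$, still has $2^{O(n\log n)}$ states.

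\textbf{Step 3 (composition).} Form the composition $C\circ B$ via the product construction of \cref{lem:bma-closed-ops}. This automaton has $2^{O(n\log n)}\cdot n\cdot 2$ states, still singly exponential in $n$.

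\textbf{Step 4 (inclusion check).} Decide $\lang{C\circ B}\subseteq\lang{B}$ by checking emptiness of $\lang{C\circ B}\cap\lang{\overline{B}}$. Since $\overline{B}$ is already available from Step 1, no second complementation is needed. This final product is again of size $2^{O(n\log n)}$, and B\"uchi emptiness is decidable in time polynomial (indeed linear) in the automaton size.

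The overall running time is therefore $2^{O(n\log n)}$, placing the problem in \textsc{ExpTime}.

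The step that needs care is the complexity bookkeeping. Complementation must be applied only to $B$ itself, never to an automaton that is already exponential. Naively checking the inclusion by complementing $B$ again would be harmless, since it is the same $\overline{B}$. However, complementing the composed automaton would yield a doubly exponential bound. I would also double-check that the composition construction in \cref{lem:bma-closed-ops} preserves acceptance correctly: the flag-alternation ensures that both component runs visit their recurrence sets infinitely often. Given that, correctness of the procedure follows directly from \cref{lem:mirroring-algebraically}.
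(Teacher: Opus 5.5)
Your proposal is correct and matches the paper's proof: build a BMA for $(\overline{R}\cap(\overline{R})^{-1})\circ R$ with $R=\rel{B}$ using the closure constructions, then decide inclusion of its language in $\lang{B}$, where the single complementation of $B$ is the only exponential step. Reusing $\overline{B}$ for the inclusion check and tracking the state counts simply spells out the paper's one-line justification that ``the dominating factor is the complementation of $B$'' and ``the other operations are at worst quadratic.''
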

\begin{proof}
  It suffices to construct a BMA for the left-hand side of the inclusion in~\cref{lem:mirroring-algebraically}, %
  where $R=\rel{B}$,
  and decide (in \textsc{ExpTime}) the inclusion between this BMA and~$B$.
  The dominating factor is the complementation of~$B$, which takes exponential time and produces a complement automaton that may be exponentially larger than~$B$~\citep{schewe:tight-buchi-complementation}.
  The other operations are at worst quadratic.
\end{proof}

\subsection{On Maximal Cut and Well-Foundedness}
\label{sec:maxcut-vs-wellfounded}
\begin{toappendix}
  \subsection{Proofs for \cref{sec:maxcut-vs-wellfounded}}
\end{toappendix}
Before we show undecidability of \prop{Maximal Cut} in~\cref{sec:maximal-cut},
let us spend a moment studying the connection (and distinctions) between \prop{Maximal Cut} and another, more classical property of relations: \emph{well-foundedness}.
We begin our comparison with an equivalent reformulation of \prop{Maximal Cut}, which serves to highlight the connection between the properties:
\begin{description}
\item[(Maximal Cut)] For every nonempty set~$X \subseteq \Sigma^\omega$,
  {\color{blue}
    if there exists~$\varphi\in\Fm_\LTL$ with $X=\tracesof{\varphi}$,
  } then $\max_{\pref}(X) \neq \emptyset$.
\end{description}
\begin{toappendix}
  Let us briefly remark on the equivalence between the original definition of \prop{Maximal Cut}
  and the formulation given at the start of \cref{sec:maxcut-vs-wellfounded}.
  \begin{observation}
    The relation~$\pref$ satisfies \prop{Maximal Cut}
    if and only if
    for every nonempty set~$X\subseteq \Sigma^\omega$,
    if there exists~$\varphi\in\Fm_\LTL$ with $X=\tracesof{\varphi}$,
    then $\max_{\pref}(X) \neq \emptyset$.
  \end{observation}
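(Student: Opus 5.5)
This is essentially an unfolding of definitions, so the plan is to prove the two directions separately and keep track of the tautology exception in the original formulation of \prop{Maximal Cut}.

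Recall that the original condition reads: for every $\varphi\in\Fm_\LTL$ that is not a tautology, $\max_{\pref}\tracesof{\neg\varphi}\neq\emptyset$.
The key fact is that $\varphi$ is a tautology if and only if $\tracesof{\neg\varphi}=\emptyset$. This holds because $\tracesof{\neg\varphi}$ is exactly the set of contra-traces of~$\varphi$. One small point needs checking here: tautology is defined via $\Cn{\emptyset}$, i.e.\ over Kripke structures, whereas the reformulation speaks of traces. So I will first argue that $\varphi\in\Cn{\emptyset}$ iff every trace satisfies~$\varphi$.
\begin{itemize}
\item If some trace violates~$\varphi$, then by \cref{prop:ltl-kripke-buchi} the nonempty language $\tracesof{\neg\varphi}$ contains an ultimately periodic trace $u v^\omega$.
\item A lasso-shaped Kripke structure has $u v^\omega$ as its unique trace, so it violates~$\varphi$ and $\varphi\notin\Cn{\emptyset}$.
\item Conversely, if every trace satisfies~$\varphi$, then every Kripke structure satisfies~$\varphi$.
\end{itemize}

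For the direction from \prop{Maximal Cut} to the reformulation, take a nonempty $X$ with $X=\tracesof{\psi}$ for some $\psi$. Set $\varphi:=\lnot\psi$. Then $\tracesof{\lnot\varphi}=\tracesof{\psi}=X$, using $\tau\models\lnot\lnot\psi$ iff $\tau\models\psi$. Since $X\neq\emptyset$, the formula $\varphi$ has a contra-trace and so is not a tautology. \prop{Maximal Cut} then gives $\max_{\pref}X\neq\emptyset$.

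For the converse, let $\varphi$ be a non-tautology and put $X:=\tracesof{\lnot\varphi}$. This set is nonempty by the key fact above, and it is LTL-definable by~$\lnot\varphi$. The reformulated condition therefore yields $\max_{\pref}\tracesof{\lnot\varphi}\neq\emptyset$.

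The only step that needs any care is the equivalence between tautologies over Kripke structures and over traces, which rests on the ultimately periodic witness. Everything else is direct substitution.
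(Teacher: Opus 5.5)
Your proposal is correct and follows the paper's argument, which likewise applies each formulation to the negated formula and uses $\tracesof{\lnot\lnot\psi}=\tracesof{\psi}$. Your explicit check that the non-tautologies (in the sense of $\Cn{\emptyset}$) are exactly the formulae with a nonempty set of contra-traces is a detail the paper leaves implicit; you can shorten it by taking the Kripke structure with state set $\powerset{\AP}$, identity labelling, all transitions and all states initial, whose traces are all of $\Sigma^\omega$, instead of going through an ultimately periodic witness.
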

  \begin{proof}
    The reformulation is mostly straightforward.
    The only key difference is that we consider the traces of~$\varphi$,
    i.e., the set~$\tracesof{\varphi}$,
    rather than its contra-traces~$\tracesof{\lnot\varphi}$.
    However, as we quantify over all LTL formulae~$\varphi$,
    and LTL enjoys classical negation,
    this is equivalent:
    We derive non-emptiness of~$\tracesof{\lnot\varphi}$ by applying the reformulated condition to~$\varphi' :\equiv \lnot\varphi$,
    and vice versa.
  \end{proof}
\end{toappendix}
As {\color{blue}highlighted}, this property requires a careful determination which sets~$X$ of traces can be represented by an LTL formula~$\varphi$.
By contrast, \emph{well-foundedness} omits this restriction of the set~$X$:
\begin{description}
\item[\propdef{Well-Founded}] For every nonempty set~$X\subseteq \Sigma^\omega$,
it holds that $\max_{\pref}(X) \neq \emptyset$.\\
\emph{Or equivalently:} There exists no infinite sequence $\tau_0 \pref \tau_1 \pref \ldots$,
  with $\tau_i \in \Sigma^\omega$ for all~$i\in\N$.%
  \footnote{Due to this formulation, \prop{Well-Founded} is sometimes called the \emph{ascending chain condition} (ACC).}
\end{description}
Clearly, \prop{Well-Founded} implies \prop{Maximal Cut}.
The reverse implication depends on the logic.
For instance, in classical propositional logic (over a finite signature), we have:
\begin{observationrep}
  \prop{Maximal Cut} implies \prop{Well-Founded} in propositional logic,
  but not in Horn logic.
\end{observationrep}
\begin{proof}
  Note first that in the original formulation by \citet{jandson:towards-contraction},
  \prop{Maximal Cut} is concerned with relations over CCTs,
  and the existence of maximal elements among all CCTs not containing a given~$\varphi$.
  \Citet{kr25:effectiveAGMContraction} show that in the case of LTL and BMA,
  this condition is equivalent to our formulation.
  Consider now classical propositional logic.
  Every set~$X$ of propositional CCTs can be described by a propositional formula~$\varphi$,
  so the properties are equivalent.
  For Horn, \cref{fig:horn-maxcut-wf} shows a relation violating \prop{Well-Founded},
  but satisfying \prop{Maximal Cut}.
  The set of CCTs on the cycle corresponds to the propositional formula~$\lnot p \land \lnot q$,
  which is however not part of the Horn fragment.
\end{proof}

\begin{toappendix}
\begin{figure}
  \centering
  \vspace*{-1em}
  \begin{tikzpicture}[thick,node distance=7mm]
    \node (empty) {$\Cn{\lnot p,\lnot q}$};
    \node[right=of empty] (p) {$\Cn{p,\lnot q}$};
    \node[below=of empty] (q) {$\Cn{\lnot p,q}$};
    \node[below=of p] (pq) {$\Cn{p,q}$};

    \draw[->] (q)  -- (empty);
    \draw[->] (p)  -- (empty);
    \draw[->] (pq) -- (empty);

    \draw[->,red] (p)  -- (pq);
    \draw[->,red] (pq) -- (q);
    \draw[->,red] (q)  -- (p);
  \end{tikzpicture}
  \captionsetup{hangindent=0pt,indention=0pt,labelfont=bf,format=plain}
  \caption{Mirroring relation that is not \prop{Well-Founded} (cycle in red), but satisfies \prop{Maximal Cut} wrt.\ Horn logic.}
  \label{fig:horn-maxcut-wf}
\end{figure}
\end{toappendix}

Generally, whether or not \prop{Maximal Cut} and \prop{Well-Founded} coincide
depends on two parameters.
The first parameter is the expressivity of single formulae in the logic:
as we have seen, limiting propositional logic to Horn prevents certain sets from being expressed as formulae,
leading to a rupture between the two properties.
In non-finitary logics such as LTL, this rupture is in fact unavoidable:
the countably many LTL formulae cannot capture the uncountably many sets of traces~\citep{kr25:effectiveAGMContraction}.
\begin{proposition}
  In LTL, \prop{Maximal Cut} and \prop{Well-Founded} do \emph{not} coincide in general.
\end{proposition}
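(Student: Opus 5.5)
The plan is to give an explicit relation $\pref$ on $\Sigma^\omega$ that contains an infinite ascending chain, so it violates \prop{Well-Founded}, while every nonempty LTL-definable set of traces still has a $\pref$-maximal element. The idea is to put the chain on traces that no LTL formula can isolate, namely traces that are not ultimately periodic. Any trace not involved in the chain is then automatically maximal in every set that contains it.

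\emph{Construction.} First, fix a trace $w\in\Sigma^\omega$ that is not ultimately periodic, i.e.\ not of the form $u v^\omega$. For instance, pick $p\in\AP$ and put $\{p\}$ exactly at the positions $2^k$ and $\emptyset$ elsewhere. For $i\in\N$, let $\tau_i := \{p\}^i\,\emptyset\,w$. Each $\tau_i$ is again not ultimately periodic, since it differs from $w$ only by a finite prefix. The $\tau_i$ are pairwise distinct, because the length of the initial block of $\{p\}$ differs. Let $C=\{\tau_i\mid i\in\N\}$ and define ${\pref} := \{(\tau_i,\tau_j)\mid i<j\}$. Then $\tau_0\pref\tau_1\pref\cdots$ is an infinite ascending chain, so \prop{Well-Founded} fails.

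\emph{Maximal Cut.} We use the reformulation from \cref{sec:maxcut-vs-wellfounded}. Let $X=\tracesof{\varphi}$ be nonempty. By \cref{prop:ltl-kripke-buchi}, $X$ is the language of a B\"uchi automaton with nonempty language. A nonempty B\"uchi language contains an ultimately periodic word $uv^\omega$: take a lasso from an initial state to a recurrence state that lies on a cycle. This standard fact underlies the decidability of emptiness. Since $uv^\omega\notin C$ and $\pref\subseteq C\times C$, the trace $uv^\omega$ is neither below nor above any trace. Hence $uv^\omega\in\max_\pref(X)$, and \prop{Maximal Cut} holds.

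\emph{Expected obstacle.} The only substantive step is the lasso argument: every nonempty $\omega$-regular (in particular LTL-definable) language contains an ultimately periodic trace. I would cite it from \citet{buchi:automata} or sketch it via the accepting-run pigeonhole argument. If one additionally wanted the witness to satisfy \prop{Mirroring}, this construction fails: $\tau_0$ is incomparable to any trace $x\notin C$, and $\tau_0\pref\tau_1$, but $x\not\pref\tau_1$. A refinement would be needed, e.g.\ an ordinal-style arrangement; however, the proposition as stated only requires separating the two properties, so the simple relation suffices.
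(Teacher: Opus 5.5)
Your proof is correct, but it takes a different route from the paper. The paper does not prove this proposition directly. It derives it from the stronger proposition that follows it, which uses the ultimately periodic traces $\sigma_i = (\{p\}\{p\})^i\,\emptyset\,\{p\}^\omega$ arranged in an ascending chain, with every trace outside $\{\sigma_i \mid i\in\N\}$ placed strictly above all $\sigma_i$. There, \textbf{(Maximal Cut)} rests on Wolper's result that LTL cannot distinguish these parity patterns. A formula satisfied by infinitely many $\sigma_i$ must also be satisfied by some trace outside the chain; otherwise only finitely many $\sigma_i$ occur and the largest one is maximal.

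You instead place the chain on traces that are not ultimately periodic, and you need only the lasso property of nonempty B\"uchi languages. This is more elementary than appealing to Wolper. It also makes the underlying reason transparent: LTL-definable sets can never consist solely of traces that no formula isolates.

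What the paper's route buys is that its relation is recognised by a B\"uchi-Mealy automaton and satisfies \textbf{(Mirroring)}. The paper needs this for its next point, that the gap persists even among finitely representable relations. Your construction cannot reach this. Running the same lasso argument over $\powerset{\AP}\times\powerset{\AP}$ shows that every nonempty BMA relation contains a pair of ultimately periodic traces, whereas yours contains none.

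As for \textbf{(Mirroring)}, no ordinal-style refinement is needed. Add $\tau_i \pref x$ for all $i$ and all $x\notin C$, which is the paper's second clause. Then the only trace incomparable to some $\tau_i$ is $\tau_i$ itself, and traces outside $C$ have nothing above them, so \textbf{(Mirroring)} holds trivially. Your ultimately periodic witness also remains maximal.
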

\begin{proof}
  \Cref{prop:bma-not-wf-but-maxcut} below proves a stronger result.
\end{proof}
The second parameter is the class of relations we consider.
The observations above refer to the class of all preference relations.
Yet, in non-finitary logics like LTL, most relations between the infinitely many traces cannot be finitely represented,
and are thus not relevant when it comes to \emph{effective} contraction.
In the particular case of LTL,
we therefore ask ourselves whether the rupture between the two properties persists
when we restrict to (finitely-representable) BMA relations.
It turns out that this is the case:
\begin{propositionrep}
  \label{prop:bma-not-wf-but-maxcut}
  There exist BMA relations that
  satisfy \prop{Mirroring} and \prop{Maximal Cut} but do not satisfy \prop{Well-Founded}.
\end{propositionrep}
\begin{proofsketch}
  For every~$i\in\N$,
  let $\sigma_i := (\{p\}\{p\})^i\,\emptyset\,\{p\}^\omega$.
  Let  $\tau \pref \tau'$ hold if and only if
    (1)~$\tau = \sigma_i$ and $\tau' = \sigma_j$ for~$i,j\in\N$ with $i<j$; or
    (2)~$\tau = \sigma_i$ for~$i\in\N$, but $\tau' \neq \sigma_j$ for all~$j\in\N$.
  This relation does not satisfy \prop{Well-Founded}, as the~$\sigma_i$ form an infinite ascending sequence.
  But ${\pref}$~satisfies \prop{Mirroring} and \prop{Maximal Cut}.
  For \prop{Maximal Cut},
  we appeal to a result by \citet{wolper:more-expressive} showing that for every LTL formula~$\varphi$,
  either some trace~$\tau$ with $\tau\neq\sigma_j$ for all~$j\in\N$ satisfies~$\lnot\varphi$
  (and hence, $\tau$ is maximal in~$\tracesof{\lnot\varphi}$),
  or there exists a (finite) maximal~$i$ such that $\sigma_i$ satisfies~$\lnot\varphi$
  (and hence, $\sigma_i$ is maximal in~$\tracesof{\lnot\varphi}$).
  It remains only to note that there exists a BMA recognizing the relation~${\pref}$.
\end{proofsketch}
\begin{proof}
  Let $\AP=\{p\}$.
  For every~$i\in\N$,
  let $\sigma_i$~be the trace~$(\{p\}\{p\})^i\,\emptyset\,\{p\}^\omega$.
  We define a preference relation~$\pref$ such that  $\tau \pref \tau'$ holds if and only if
  \begin{itemize}
    \item $\tau = \sigma_i$ and $\tau' = \sigma_j$ for~$i,j\in\N$ with $i<j$; or
    \item $\tau = \sigma_i$ for~$i\in\N$, but $\tau' \neq \sigma_j$ for all~$j\in\N$.
  \end{itemize}
  This relation satisfies \prop{Mirroring}:
  If $\tau$ and $\tau'$ are incomparable, then $\tau,\tau'$ are both \emph{not} of the form~$\sigma_n$ for any $n$.
  But then there does not exist any trace~$\tau''$ such that $\tau \pref \tau''$, and therefore \prop{Mirroring} is trivially satisfied.

  At the same time, this relation does not satisfy \prop{Well-Founded}:
  The nonempty set $\{\,\sigma_i \mid i\in\N \,\}$ does not have a maximal element,
  as we have $\sigma_0 \pref \sigma_1 \pref \cdots$.

  However, the relation satisfies \prop{Maximal Cut}.
  Let $\varphi$ be a non-tautological LTL formula.
  We make a distinction between the following cases:
  \begin{description}
    \item[Case 1:] $\tracesof{\lnot\varphi}$ contains some~$\tau$ which is different from all~$\sigma_i$.\smallskip

      Then the trace~$\tau$ is maximal in~$\tracesof{\lnot\varphi}$.
    \item[Case 2:] There exist only finitely many traces $\tracesof{\lnot\varphi}$, and each of them is some $\sigma_i$.\smallskip

      In this case, there exists a largest~$i_{\max}\in\N$ such that $\sigma_{i_{\max}} \in \tracesof{\lnot\varphi}$, and thus the trace~$\sigma_{i_{\max}}$ is a maximal element of~$\tracesof{\lnot\varphi}$.
	\item[Case 3:] $\tracesof{\lnot\varphi}$ contains (only) infinitely many~$\sigma_i$.\smallskip

	  We show that this case is not possible.
      Suppose this were the case, and let $n$ be the number of ${\Next}$~operators in $\lnot\varphi$.
      Since there are infinitely many traces in $\tracesof{\lnot\varphi}$,
      there exists $m\in\N$ such that $2m > n$ and $\sigma_m= (\{p\}\{p\})^{m}\emptyset\{p\}^\omega \in \tracesof{\lnot\varphi}$.
      By a result of \citet{wolper:more-expressive},
      this implies that $(\{p\}\{p\})^m{\color{magenta}\{p\}}\emptyset\{p\}^\omega \in \tracesof{\lnot\varphi}$.%
      \footnote{
Intuitively, $\varphi$ cannot \emph{``see''} further than $n$ steps,
and can thus not distinguish between these two traces.
		}
      This however contradicts the assumption that $\tracesof{\lnot\varphi}$ contains only traces of the form~$\sigma_i$.
    \end{description}

  Finally, it remains to show that the relation~${\pref}$ is recognized by a BMA.
  To this end, consider the B\"uchi automaton~$A$ shown in \cref{fig:nba-sigmas},
  which accepts exactly all the traces~$\sigma_i$,
  and the BMA~$B_\sigma$ in \cref{fig:bma-sigmas},
  which recognizes the relation~$\{\,(\sigma_i,\sigma_j)\mid i<j\,\}$.
  We observe that we have ${\pref} = \rel{B_\sigma} \cup \big(\lang{A} \times \overline{\lang{A}}\big)$, by definition of~${\pref}$.
  Using the fact that B\"uchi automata are closed under complementation~\cite{buchi:s1s},
  \cref{lem:bma-closed-ops} and \cref{lem:buchi-product-bma} (shown further below),
  we conclude that there exists a BMA recognizing~${\pref}$.
\end{proof}
\begin{toappendix}
  \begin{figure}
    \begin{subfigure}{0.4\textwidth}
      \centering
      \begin{tikzpicture}[thick,inner sep=1,font=\scriptsize,node distance=15mm]
        \node[draw,circle] (q0) {$q_0$};
        \node[draw,circle,below of=q0] (q1) {$q_1$};
        \node[draw,double,circle,right of=q0] (q2) {$q_2$};
        \draw[<-] (q0) -- (left:7mm);
        \draw[->] (q0) edge[bend left] node[auto]{$\{p\}$} (q1);
        \draw[->] (q1) edge[bend left] node[auto]{$\{p\}$} (q0);
        \draw[->] (q0) -- node[auto]{$\emptyset$} (q2);
        \draw[->] (q2) edge[loop below] node[auto]{$\{p\}$} ();
      \end{tikzpicture}
      \caption{B\"uchi automaton~$A$ recognizing $\{\,\sigma_i \mid i\in\N\,\}$.}
      \label{fig:nba-sigmas}
    \end{subfigure}
    \hfill
    \begin{subfigure}{0.56\textwidth}
      \centering
      \begin{tikzpicture}[thick,inner sep=1,font=\scriptsize,node distance=15mm]
        \node[draw,circle] (q00) {$q_{00}$};
        \node[draw,circle,below of=q00] (q11) {$q_{11}$};
        \node[draw,circle,right=25mm of q00] (q21) {$q_{21}$};
        \node[draw,circle,below of=q21] (q20) {$q_{20}$};
        \node[draw,double,circle,right=25mm of q20] (q22) {$q_{22}$};
        \draw[<-] (q00) -- (left:7mm);
        \draw[->] (q00) edge[bend left] node[auto]{$\{p\}/\{p\}$} (q11);
        \draw[->] (q11) edge[bend left] node[auto]{$\{p\}/\{p\}$} (q00);
        \draw[->] (q00) -- node[auto]{$\emptyset/\{p\}$} (q21);
        \draw[->] (q21) edge[bend left] node[auto]{$\{p\}/\{p\}$} (q20);
        \draw[->] (q20) edge[bend left] node[auto]{$\{p\}/\{p\}$} (q21);
        \draw[->] (q20) -- node[auto]{$\{p\}/\emptyset$} (q22);
        \draw[->] (q22) edge[loop above] node[auto]{$\{p\}/\{p\}$} ();
      \end{tikzpicture}
      \caption{BMA~$B_\sigma$ recognizing $\{\,(\sigma_i,\sigma_j) \mid i,j\in\N \text{ with } i < j \,\}$.}
      \label{fig:bma-sigmas}
    \end{subfigure}
    \caption{Automata used in the proof of \cref{prop:bma-not-wf-but-maxcut}.}
  \end{figure}
\end{toappendix}
As before, the rupture stems from a limitation of the expressivity of single LTL formulae.
In this case, it is particularly the disconnect between the expressive powers of LTL resp.\ B\"uchi automata.

As we turn below to the decidability question for \prop{Well-Founded} resp.\ \prop{Maximal Cut} on BMA relations,
a restricted class of BMA relations in which the two properties coincide plays a key role.

\subsection{Undecidability of Maximal Cut}
\label{sec:maximal-cut}
\begin{toappendix}
  \subsection{Proofs for \cref{sec:maximal-cut}}
\end{toappendix}

We show that \prop{Maximal Cut} is undecidable for BMA relations via a reduction from a known undecidable problem.
Specifically, we reduce the \emph{immortality} of Turing machines (explained further below) to \emph{well-foundedness} of BMA relations.
\goodbreak
Central to our proof is the construction of a Büchi-Mealy automaton
that simulates a single step of a given Turing machine.
To extend the result to \prop{Maximal Cut},
we formulate our construction using \emph{linear bounded automata} (LBA), a special type of Turing machines.

Linear bounded automata function like regular Turing machines,
consisting of a \emph{read-write head} and a \emph{tape},
a string of symbols.
In each step, the head reads a symbol~$a$ on the tape,
and depending on the state~$q$ of the automaton, replaces~$a$ with a new symbol~$b$,
moves left or right,
and transitions into a new state~$q'$.
This behaviour is specified by a transition function~$\delta$.
What distinguishes LBA from general Turing machines is that they may only use the space on the tape that is initially occupied by the input.
\begin{definition}
  \label{def:lba}
    A \emph{linear bounded automaton}
    $T=(Q,\Sigma, \Gamma,q_0,\delta)$
    consists of
    a finite set of states~$Q$,
    a finite non-empty input alphabet~$\Sigma$,
    a finite tape alphabet~$\Gamma$ which includes $\Sigma$ and additional symbols
    $\leftend$ and $\rightend$ marking the left and right end of the tape,
    a start state $q_0 \in Q$,
    and a partial transition function $\delta: Q\times\Gamma\to Q\times\Gamma\times\{L,R\}$.
    The function $\delta$ must not include transitions that overwrite $\leftend$ or $\rightend$ with other symbols,
    move to the left of~$\leftend$ resp.\ the right of~$\rightend$,
    or overwrite other symbols with $\leftend$ or $\rightend$.
\end{definition}
Every combination of state $q\in Q$, tape content $w\in\Gamma^*$ and head position, denoted by the $\bullet$-symbol above the corresponding symbol in $w$,
form a \emph{configuration} of a given linear bounded automaton.
$qu\headp{a}v$ denotes the configuration where $w=uav$ for $u,v\in \Gamma^*$ and the automaton is in state $q$ with the head above the symbol $a$.
The function~$\delta$ defines a successor relation $\to$ on configurations of $T$ in the expected way.
For example: $q\leftend\headp{a}\rightend \to q'\leftend b\headp{\rightend} $ if $\delta(q,a)=(q',b,R)$.
A configuration~$c$ is \emph{halting} if it has no successor,
\emph{mortal} if there is a halting configuration $c'$ such that $c\to^* c'$,
and \emph{immortal} otherwise.

The following well-known property of linear bounded automata is central to our proof.
\begin{toappendix}
  We repeat the proof of \cref{prop:lba-infiniterun} for the sake of self-containedness.
\end{toappendix}
\begin{lemmarep}[\citet{chowdhary:lba}]
    \label{prop:lba-infiniterun}
    Any infinite run of successive configurations $c_1\to c_2\to\dots$ of an LBA~$T$ only consists of finitely many unique configurations.
\end{lemmarep}
\begin{proofsketch}
    The tape never changes length, so all successive configurations must have the same tape length~$n$.
    There are only finitely many configurations of a given length.
\end{proofsketch}
\begin{proof}
    Let $c$ and $c'$ be configuration of $T$. In a single step, $T$ only changes one symbol on the tape, no symbols can be added or removed from the tape, so if $c\to c'$, then $c'$ has to have the same length as $c$ and the same holds for all configurations of the run $\{c_i\mid i\in \N \}$. Let $n$ denote the length of the tape in $c_1$ (excluding $\leftend$ and $\rightend$). All configurations $c_i$ are among the $|Q|(n+2)|\Gamma - 2|^n$ distinct configurations of the same length: $|Q|$ is the number of possible states, $n+2$ the number of possible head positions and $|\Gamma - 2|^n$ the number of distinct tape contents of length $n$.
\end{proof}

For our proof, we use a variant of the halting problem, which is undecidable for Turing machines as well as linear bounded automata.
This \emph{immortality problem}~\citep{hooper:immortality} asks whether or not a given machine has any immortal configuration.
No input is specified, in fact the immortal configuration does not even have to be reachable from any initial configuration.
\begin{proposition}[\citet{hooper:immortality}]
	\label{thm:lba-immortality}
    The \emph{LBA Immortality Problem}, as given below, is undecidable.
  \decproblem{A linear bounded automaton $T$}{~
  $T$ has an immortal configuration}
\end{proposition}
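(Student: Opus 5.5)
This result is cited from \citet{hooper:immortality}. The plan is to reduce from the halting problem of ordinary Turing machines, following Hooper's strategy, and then transfer the result to linear bounded automata via a boundedness argument. The difficulty is that immortality quantifies over \emph{all} configurations, including unreachable ones with arbitrary tape contents and arbitrary states. A naive simulation of a machine~$M$ on the empty input therefore fails: a spurious configuration, for instance one in the middle of an unbounded computation started from garbage, may be immortal even though $M$ halts from the blank tape.

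\textbf{Step 1 (Turing machines).} Given a Turing machine~$M$, we construct a machine~$M'$ that has an immortal configuration iff $M$ does not halt on the empty input. Hooper's key device is a \emph{recursive, self-checking} simulation. The machine~$M'$ never trusts the tape content it starts on; instead it repeatedly restarts bounded simulations of~$M$ from scratch. After simulating $k$ steps of~$M$ on the blank tape, $M'$ spawns a fresh copy of the whole simulation in a region of the tape it builds itself, and it must return to verify that this region is well-formed. Any configuration whose local tape content is inconsistent is eventually detected by such a check, and $M'$ then halts. This ensures that every configuration, reachable or not, either halts or drives the machine into genuine simulation of~$M$. Once that holds, an immortal configuration exists iff the simulation of~$M$ from the blank tape runs forever.

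\textbf{Step 2 (transfer to LBA).} An LBA configuration has a fixed finite tape, so by \cref{prop:lba-infiniterun} an immortal LBA configuration lies on a run that eventually cycles. For this step we reduce from a different undecidable problem, namely the mortality/halting problem for a machine with the \emph{unbounded} tape replaced by the finite segment between $\leftend$ and $\rightend$. Concretely, $M'$ from Step~1 is modified so that reaching an end-marker during the simulation of~$M$ forces a halt. In the modified machine, an immortal configuration on a tape of length~$n$ would correspond to an infinite computation of~$M'$ that stays within $n$ cells. We then arrange that Hooper's construction uses unbounded space along every infinite run. For example, each simulation phase could increment a counter stored in the tape, which would eventually hit $\rightend$ and force a halt. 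With that arrangement, every configuration on a finite tape is mortal. This, however, kills all immortality, so the encoding must be inverted. We therefore reduce from the undecidable problem ``$M$ accepts some input'' (or use a cycling witness): we build the LBA so that it cycles, by resetting its counter, iff a simulation certifies non-halting within the available space. The cleanest route is to cite that LBA immortality is equivalent to the existence of a configuration $c$ with $c \to^+ c$, which is a \emph{recursively enumerable} condition, and to reduce from a $\Sigma_1$-complete problem such as ``$M$ halts''. In this reduction the LBA loops on its own description of a halting computation of~$M$ and halts on any malformed description.

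The main obstacle is Step~1's robustness against arbitrary unreachable configurations, which is exactly what Hooper's intricate recursive construction handles. In the LBA version, the corresponding obstacle is ensuring that the checking loop halts on every malformed tape. I would first try the second approach: use an LBA that verifies a complete halting-computation history of~$M$ written on its tape, and that returns to its start configuration (thus cycling) only if verification succeeds. Errors must lead to halting from every state, including garbage states, and this is enforced by routing every state into a fresh sweep that begins by moving the head to~$\leftend$.
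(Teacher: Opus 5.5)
Only the last part of your plan is actually a proof, and it is the right one. By \cref{prop:lba-infiniterun} and determinism of $\delta$, $T$ has an immortal configuration iff some configuration $c$ satisfies $c \to^+ c$. That condition is r.e., so you must reduce \emph{from} halting: build an LBA $T_M$ that cycles exactly on tapes encoding a halting computation history of $M$ on the blank tape.

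This is a different route from the paper, which gives no argument beyond citing Hooper. Hooper's theorem concerns Turing machines on an unbounded tape. A reduction of the shape in your Step~1 (immortal iff the simulated machine does \emph{not} halt) cannot be transferred to LBAs at all, because it would make the complement of the halting problem r.e. So Step~1 and the first half of Step~2 are not steps of your proof and should be deleted. What your route buys is a short, self-contained argument for exactly the LBA statement the paper uses; it even shows the problem is $\Sigma_1$-complete. Hooper's unbounded-tape construction does not directly provide this.

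The real gap is the step you yourself call the main obstacle. You need every infinite run of $T_M$, started from an \emph{arbitrary} state, head position and tape (including arbitrary bookkeeping marks), to eventually perform a complete verification from scratch. ``Routing every state into a fresh sweep from $\leftend$'' does not give this. Comparing consecutive configurations of $M$ needs unboundedly many passes, and their progress must be recorded in marks on the tape. In an unreachable configuration those marks are garbage, so restarts alone do not help; you need a progress measure. One way to supply it:
\begin{itemize}
\item $T_M$ never rewrites the data component of a cell, only mark bits.
\item Each pass is a bounded number of monotone sweeps between $\leftend$ and $\rightend$.
\item Each pass either halts, or sets at least one new mark bit, or, if no comparison is left, performs a final format check and then a cleanup sweep.
\item The cleanup sweep clears all marks and re-enters the start state at $\leftend$; this is the only place marks are ever cleared.
\end{itemize}
Only finitely many bits can be set on a tape of length $n$, so every infinite run reaches a cleanup. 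Afterwards the tape is unmarked with unchanged data $w$. The next round is then a genuine verification of $w$, which halts unless $w$ encodes a halting computation of $M$. Conversely, for such a $w$, the unmarked configuration in the start state at $\leftend$ satisfies $c \to^+ c$. With this measure made explicit, your reduction is complete.
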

Our reduction proof rests on the following construction.
\begin{toappendix}
    Before we go on to the details of our construction, we quickly have to address the technical details of how we encode configurations of $T$ for our Büchi-Mealy automaton $B_T$.
    \begin{definition}
        Let $T=(Q,\Sigma,\Gamma,q_0,\delta)$ be a linear bounded automaton. $B_T$ must be defined over a set of atomic propositions~$\AP$, so all symbols needed to describe configurations of $T$ need to be represented with subsets of $\AP$. Therefore, we must choose $\AP$ to be big enough to allow for such a representation, formally an injective function
        \[\rho: Q \cup \Gamma \cup \headp{\Gamma} \to \powerset{\AP},  \]
        where $\headp{\Gamma}:=\{\headp{a}\mid a\in \Gamma\}$ is a new alphabet, which is needed to mark the head position of $T$.
        For notational clarity, we denote all sets $A\in\powerset{\AP}$ by their preimage $\rho^{-1}(A)$, e.g. we will say the BMA reads a symbol $a\in \Gamma$ instead of some set of atomic propositions $\rho(a)\in\powerset{\AP}$, which represents $a$.
        Let $C_T$ denote the set of all configurations of $T$, the encoding of a configuration is then given by a function
        \[\enc: C_T \to \powerset{\AP}^*. \]
        Let $c$ be a configuration, all encodings are in one of the following forms:
        \begin{enumerate}
            \item $\enc(c)=q\headp{\leftend}w\rightend$,
            \item $\enc(c)=q\leftend u\headp{a}v\rightend$,
            \item $\enc(c)=q\leftend w\headp{\rightend}$,
        \end{enumerate}
        where $q\in Q;a\in \Iota ;u,v,w\in \Iota^*$ and $\Iota:=\Gamma\setminus\{\leftend,\rightend\}$ denotes the tape alphabet excluding the end-of-tape symbols.
    \end{definition}
    \begin{definition}
      \label{def:lba-trans-bma}
        Let~$T$ be a linear bounded automaton, $q\in Q$ a state and $a\in \Gamma$ a symbol from the tape alphabet.
        The B\"uchi-Mealy automaton~$B_{q,a}$ is defined as follows:
        \begin{enumerate}
            \item If there is no transition $\delta(q,a)$, we define $B_{q,a}$ as the following BMA (for the empty language):
            \begin{center}
                \begin{tikzpicture}[->,>=stealth',shorten           >=1pt,auto,node distance=2cm, semithick,
                    everystate/.style={fill=white,draw=black,text=black},
                        ]
    
                \node[state,inner sep=1pt,minimum size=0pt,initial, initial text=] (q0) {$q_0$};
                \end{tikzpicture}
            \end{center}
            \item If $\delta(q,a)=(q',a',R)$ and $a\in\Iota$, we define $B_{q,a}$ as
            \begin{center}\scalebox{.8}{
                \begin{tikzpicture}[->,>=stealth',shorten           >=1pt,auto,node distance=2cm, semithick,
                    everystate/.style={fill=white,draw=black,text=black},
                                ]
                
                    \node[state,inner sep=1pt,minimum size=0pt,initial, initial text=] (q0){$q_0$};
                    \node[state,inner sep=1pt,minimum size=0pt] (q1) [right of=q0]       {$q_1$};
                    \node[state,inner sep=1pt,minimum size=0pt] (q2) [right of=q1]       {$q_2$};
                    \node[state,inner sep=1pt,minimum size=0pt] (q3) [right of=q2]       {$q_3$};
                    \node[state,inner sep=1pt,minimum size=0pt] (q4) [right of=q3]       {$q_4$};
                    \node[state,inner sep=1pt,minimum size=0pt,accepting] (q5) [right of=q4]       {$q_5$};
                
                    \path (q0) edge[above] node{$q/q'$} (q1)
                    (q1) edge[above] node{$\leftend/\leftend$} (q2)
                    (q2) edge[loop above] node{$\{b/b\}_{b\in \text{I}}$} (q2)
                    (q2) edge[above] node{$\headp{a}/a'$} (q3)
                    (q3) edge[above] node{$\{b/\headp{b}\}_{b\in \text{I}}$} (q4)
                    (q3) edge[below, bend right] node{$\rightend/\headp{\rightend}$} (q5)
                    (q4) edge[loop above] node{$\{b/b\}_{b\in \text{I}}$} (q4)
                    (q4) edge[above] node{$\rightend/\rightend$} (q5)
                    (q5) edge[loop below] node{$\Sigma/\Sigma$} (q5)
                    ;
                \end{tikzpicture}}
            \end{center}
            where $\{b/b\}_{b\in \text{I}}$ and $\{b/\headp{b}\}_{b\in \text{I}}$ denote the set of all symbol-pairs of the given form. As a transition label such a set is a short-hand for all transitions with labels from the set. Note that the transitions here need to match the pattern, while $\Sigma/\Sigma$ denotes all possible transitions: The automaton will always transition to the next state, regardless of which symbol-pair is read.
            \item If $\delta(q,a)=(q',a',L)$ and $a\in\Iota$, we define $B_{q,a}$ as
            \begin{center}\scalebox{.8}{
                \begin{tikzpicture}[->,>=stealth',shorten           >=1pt,auto,node distance=2cm, semithick,
                    everystate/.style={fill=white,draw=black,text=black},
                                ]
                
                    \node[state,inner sep=1pt,minimum size=0pt,initial, initial text=] (q0)                    {$q_0$};
                    \node[state,inner sep=1pt,minimum size=0pt] (q1) [right of=q0]       {$q_1$};
                    \node[state,inner sep=1pt,minimum size=0pt] (q2) [right of=q1]       {$q_2$};
                    \node[state,inner sep=1pt,minimum size=0pt] (q3) [right of=q2]       {$q_3$};
                    \node[state,inner sep=1pt,minimum size=0pt] (q4) [right of=q3]       {$q_4$};
                    \node[state,inner sep=1pt,minimum size=0pt,accepting] (q5) [right of=q4]       {$q_5$};
                
                    \path (q0) edge[above] node{$q/q'$} (q1)
                    (q1) edge[above] node{$\leftend/\leftend$} (q2)
                    (q2) edge[loop above] node{$\{b/b\}_{b\in \text{I}}$} (q2)
                    (q3) edge[above] node{$\headp{a}/a'$} (q4)
                    (q2) edge[above] node{$\{b/\headp{b}\}_{b\in \text{I}}$} (q3)
                    (q1) edge[below, bend right] node{$\leftend/\headp{\leftend}$} (q3)
                    (q4) edge[loop above] node{$\{b/b\}_{b\in \text{I}}$} (q4)
                    (q4) edge[above] node{$\rightend/\rightend$} (q5)
                    (q5) edge[loop below] node{$\Sigma/\Sigma$} (q5)
                    ;
                \end{tikzpicture}}
            \end{center}
            \item If $\delta(q,a)=(q',a',D)$ and $a=\leftend$, then by \cref{def:lba}, we must have $D=R$ and $a'=\leftend$.
              We define $B_{q,a}$ as
            \begin{center}\scalebox{.8}{
                \begin{tikzpicture}[->,>=stealth',shorten           >=1pt,auto,node distance=2cm, semithick,
                    everystate/.style={fill=white,draw=black,text=black},
                                ]
                
                    \node[state,inner sep=1pt,minimum size=0pt,initial, initial text=] (q0)                    {$q_0$};
                    \node[state,inner sep=1pt,minimum size=0pt] (q1) [right of=q0]       {$q_1$};
                    \node[state,inner sep=1pt,minimum size=0pt] (q2) [right of=q1]       {$q_2$};
                    \node[state,inner sep=1pt,minimum size=0pt] (q3) [right of=q2]       {$q_3$};
                    \node[state,inner sep=1pt,minimum size=0pt,accepting] (q4) [right of=q3]       {$q_4$};
                
                    \path (q0) edge[above] node{$q/q'$} (q1)
                    (q1) edge[above] node{$\headp{\leftend}/\leftend$} (q2)
                    (q3) edge[loop above] node{$\{b/b\}_{b\in \text{I}}$} (q3)
                    (q2) edge[above] node{$\{b/\headp{b}\}_{b\in \text{I}}$} (q3)
                    (q3) edge[above] node{$\rightend/\rightend$} (q4)
                    (q2) edge[below, bend right] node{$\rightend/\headp{\rightend}$} (q4)
                    (q4) edge[loop below] node{$\Sigma/\Sigma$} (q4)
                    ;
                \end{tikzpicture}}
            \end{center}
            \item If $\delta(q,a)=(q',a',D)$ and $a=\rightend$, then by \cref{def:lba}, we must have $D=L$ and $a'=\rightend$.
              We define $B_{q,a}$ as
            \begin{center}\scalebox{.8}{
                \begin{tikzpicture}[->,>=stealth',shorten           >=1pt,auto,node distance=2cm, semithick,
                    everystate/.style={fill=white,draw=black,text=black},
                                ]
                
                    \node[state,inner sep=1pt,minimum size=0pt,initial, initial text=] (q0)                    {$q_0$};
                    \node[state,inner sep=1pt,minimum size=0pt] (q1) [right of=q0]       {$q_1$};
                    \node[state,inner sep=1pt,minimum size=0pt] (q2) [right of=q1]       {$q_2$};
                    \node[state,inner sep=1pt,minimum size=0pt] (q3) [right of=q2]       {$q_3$};
                    \node[state,inner sep=1pt,minimum size=0pt,accepting] (q4) [right of=q3]       {$q_4$};
                
                    \path (q0) edge[above] node{$q/q'$} (q1)
                    (q1) edge[above] node{$\leftend/\leftend$} (q2)
                    (q1) edge[below, bend right] node{$\leftend/\headp{\leftend} $} (q3)
                    (q2) edge[loop above] node{$\{b/b\}_{b\in \text{I}}$} (q2)
                    (q2) edge[above] node{$\{b/\headp{b}\}_{b\in \text{I}}$} (q3)
                    (q3) edge[above] node{$\headp{\rightend}/\rightend$} (q4)
                    (q4) edge[loop below] node{$\Sigma/\Sigma$} (q4)
                    ;
                \end{tikzpicture}}
            \end{center}
        \end{enumerate}
    \end{definition}
    \begin{lemma}\label{lem:construction-lemma1}
        A pair of traces $(\tau_1,\tau_2)\in\Sigma_{\textnormal{BM}}^\omega$ is accepted by $B_{q,a}$ if and only if
        \begin{enumerate}[label={(\arabic*)}]
            \item they represent valid configurations i.e. there are configurations $c_1,c_2\in C_T $ and traces $\tau'_1,\tau'_2\in\powerset{\AP}^\omega$ such that
            \[\tau_1=\enc(c_1)\tau'_1\ \ \text{and}\ \  \tau_2=\enc(c_2)\tau'_2,\]
            \item those configurations form a succession $c_1\to c_2$ and
            \item $c_1$ matches $q,a$ i.e. in configuration $c_1$ the machine is in state $q$ and positioned above the symbol $a$.
        \end{enumerate}
    \end{lemma}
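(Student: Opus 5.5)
We prove \cref{lem:construction-lemma1} by a case split along the five cases of \cref{def:lba-trans-bma}, analysing in each case the accepting runs of $B_{q,a}$. In every nontrivial case the automaton is essentially a deterministic ``linear'' pattern matcher. It consists of a finite chain of states, some carrying self-loops over $\{b/b\}_{b\in\Iota}$, followed by a single accepting sink ($q_5$ resp.\ $q_4$) with a $\Sigma/\Sigma$ self-loop. A pair of traces is therefore accepted iff some finite prefix of the paired word matches the regular expression spelled out by the chain. Reaching the sink is necessary for acceptance, since it is the only recurrence state. Once reached, any continuation is accepted, which accounts for the arbitrary suffixes $\tau'_1,\tau'_2$. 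Case~1 is trivial: $B_{q,a}$ accepts nothing, and no configuration $c_1$ matching $q,a$ has a successor, since $\delta(q,a)$ is undefined.

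For each remaining case we first read off the regular expression of prefixes leading to the sink. For case~2 (with $a\in\Iota$, move right) this is the set of pairs of equal-length words
\[(q/q')(\leftend/\leftend)(b/b)^*_{b\in\Iota}(\headp{a}/a')\Big((b/\headp{b})_{b\in\Iota}(b/b)^*_{b\in\Iota}(\rightend/\rightend) \;\mid\; \rightend/\headp{\rightend}\Big).\]
Projecting to the first component gives exactly the words $q\leftend u\headp{a}v\rightend$ with $u,v\in\Iota^*$, i.e.\ encodings of form~2 of configurations in state $q$ with head on $a$. Projecting to the second component gives $q'\leftend u a'\headp{b}v'\rightend$ or $q'\leftend u a'\headp{\rightend}$, with identical $u$ and with $v=bv'$ unchanged. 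This is exactly $\enc(c_2)$ for the unique $c_2$ with $c_1\to c_2$ under $\delta(q,a)=(q',a',R)$.

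We then prove both directions. For ``if'', given $c_1,c_2$ satisfying (1)--(3), we construct the accepting run explicitly: follow the chain while reading $\enc(c_1)$ and $\enc(c_2)$ in lockstep, which have equal length, then loop in the sink. For ``only if'', an accepting run must enter the sink after a finite prefix, and that prefix matches the expression above. We decode $c_1$ and $c_2$ from the two projections and check the successor relation. Cases~3--5 are handled identically. Case~3 places the new head marker one cell before $\headp{a}$, or on $\leftend$ via the bent edge when $u$ is empty. Cases~4 and~5 treat the boundary head positions $\headp{\leftend}$ and $\headp{\rightend}$, using the constraints from \cref{def:lba} that the end markers are preserved and the move direction is forced.

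The main obstacle is the bookkeeping in the ``only if'' direction. We must argue that the synchronous reading forces the two encodings to have the same length and to agree on all cells except the two touched ones, and that no other path reaches the sink. Determinism helps here: from each chain state the outgoing labels are pairwise disjoint, because $\headp{b}$, $b$, $\leftend$ and $\rightend$ are distinct symbols under the injective $\rho$. Hence the run on any accepted word is unique, and the decoding is unambiguous. The remaining work is to verify this disjointness carefully in each case.
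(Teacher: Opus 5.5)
Your proposal is correct and follows the same route as the paper, which argues by inspection of \cref{def:lba-trans-bma}: every accepted pair has the described form, and the five cases exhaust all possible transition shapes. Your case-by-case reading of the prefix language leading to the accepting sink (the two projections, plus the fact that the untouched cells agree) is a careful expansion of that argument. The determinism and label-disjointness observation is harmless but not needed, since in the ``only if'' direction any accepting run already yields a prefix from which $c_1$ and $c_2$ can be read off.
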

    \begin{proof}
        It is clear from the construction (\cref{def:lba-trans-bma}) that all accepted trace pairs are of the form described.
        And all traces that fit the description are accepted by~$B_{q,a}$ because the construction is complete, i.e., it covers all possible forms of transitions.
    \end{proof}
    \begin{lemma}\label{lem:construction-lemma2}
        For any pair of configurations~$c_1,c_2\in C_T$, it holds that
        \[c_1\to c_2\]
        if and only if there exist $q\in Q$ and $a\in \Gamma$ such that
        \[(\enc(c_1)\tau_1,\enc(c_2)\tau_2)\in\rel{B_{q,a}}\]
        for all $\tau_1,\tau_2\in\powerset{\AP}^\omega$.
    \end{lemma}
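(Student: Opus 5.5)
The plan is to derive both directions of \cref{lem:construction-lemma2} from \cref{lem:construction-lemma1}. The key facts are that $\enc$ is injective and that its images are "prefix-free": an encoding begins with a state symbol $q$, contains exactly one head-marked symbol, and ends at the first occurrence of $\rightend$ or $\headp{\rightend}$. Consequently, a trace of the form $\enc(c)\tau$ determines $c$ uniquely. I read the quantifier "for all $\tau_1,\tau_2$" as saying that membership holds for every choice of suffixes; the argument below also shows that membership for one choice implies it for all.

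\emph{($\Rightarrow$)} Assume $c_1\to c_2$. Let $q$ be the state of $c_1$ and $a\in\Gamma$ the symbol under the head in $c_1$, so that $\delta(q,a)$ is defined. Fix arbitrary $\tau_1,\tau_2\in\powerset{\AP}^\omega$. The pair $(\enc(c_1)\tau_1,\enc(c_2)\tau_2)$ then satisfies the three conditions of \cref{lem:construction-lemma1}:
\begin{enumerate}
\item both components are encodings of valid configurations followed by arbitrary suffixes;
\item $c_1\to c_2$ holds by assumption;
\item $c_1$ matches $(q,a)$ by the choice of $q$ and $a$.
\end{enumerate}
Hence the pair is accepted by $B_{q,a}$, that is, it lies in $\rel{B_{q,a}}$. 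One point needs care: the two encodings have equal length, because an LBA step preserves the tape length (cf.\ the proof of \cref{prop:lba-infiniterun}). This is why the letterwise pairing used by the BMA aligns them correctly, and it is already implicit in \cref{lem:construction-lemma1}.

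\emph{($\Leftarrow$)} Assume there are $q,a$ with $(\enc(c_1)\tau_1,\enc(c_2)\tau_2)\in\rel{B_{q,a}}$ (for one, and hence for all, choices of suffixes). By \cref{lem:construction-lemma1} there exist configurations $c_1',c_2'$ and traces $\tau_1',\tau_2'$ such that $\enc(c_1)\tau_1=\enc(c_1')\tau_1'$, $\enc(c_2)\tau_2=\enc(c_2')\tau_2'$ and $c_1'\to c_2'$. The main step is to conclude $c_i=c_i'$. Both $\enc(c_i)$ and $\enc(c_i')$ are prefixes of the same trace and both are well-formed encodings. Each such encoding ends at the first symbol of the set $\{\rightend,\headp{\rightend}\}$, and the interior tape symbols are drawn from $\Iota$, which excludes $\rightend$. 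Therefore the two prefixes have the same length, so they are equal, and injectivity of $\enc$ (via injectivity of $\rho$) gives $c_i=c_i'$. It follows that $c_1\to c_2$.

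The main obstacle is this prefix-uniqueness argument. I would settle it by a short case check over the three encoding forms in the definition of $\enc$, using the fact that $\rho$ is injective, so that symbols from $Q$, $\Gamma$ and $\headp{\Gamma}$ are never confused. Everything else is a direct application of \cref{lem:construction-lemma1}.
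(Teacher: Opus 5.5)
Your proposal is correct and follows the paper's route: take the $(q,a)$ matched by $c_1$ and read off both directions from \cref{lem:construction-lemma1}. The paper's one-line proof leaves implicit the step you make explicit in the ($\Leftarrow$) direction, namely that a trace of the form $\enc(c)\tau$ determines $c$ uniquely (the encoding ends at the first $\rightend$ or $\headp{\rightend}$, and $\rho$ is injective), so yours is a more careful version of the same argument.
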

    \begin{proof}
        The configuration $c_1$ matches some combination $q,a$.
        By \cref{lem:construction-lemma1}, $c_1 \to c_2$ holds if and only if $\rel{B_{q,a}}$ satisfies the latter condition.
    \end{proof}
\end{toappendix}

\begin{lemmarep}\label{lemma:lba-to-bma}
    For a given LBA~$T$,
    we can construct a Büchi-Mealy automaton $B_T$ such that $\rel{B_T}$ is \prop{Well-Founded} if and only if $T$ has no immortal configuration.
\end{lemmarep}
\begin{proofsketch}
    We construct a Büchi-Mealy automaton that accepts only trace-pairs that describe successive configurations $c_1 \to c_2$ by non-deterministically guessing the state $q$ and the current symbol~$a$ in~$c_1$,
    and then checking if the two configurations match according to the rule~$\delta(q,a)$.
    For instance, if $\delta(q,a)=(q',a',R)$, the automaton expects states~$q$ and~$q'$ in the first and second trace respectively,
    then it expects the traces to match symbol-by-symbol until it finds~$\headp{a}$ and~$a'$ in the first and second trace respectively.
    I.e., the head was above symbol~$a$ on the tape, which is now replaced by~$a'$ and the head is positioned elsewhere.
    The head moves to the right according to the rule, so the automaton expects it on the next symbol in the second trace.
    Afterwards, the traces must match again until the symbol~$\rightend$ appears on both traces, marking the end of the configuration. 

    Every immortal configuration~$c$ corresponds to an infinite run $c\to c_1\to c_2\to\dots$ of~$T$ and thus to to an infinite ascending chain in $\rel{B_T}$, and vice versa.
\end{proofsketch}
\begin{proof}
    By \cref{{lem:bma-closed-ops}} we can construct $B_T$ such that  
    \[\rel{B_T} = \bigcup_{q\in Q}\bigcup_{a\in\Gamma} \rel{B_{q,a}}.\]
    If $\rel{B_T}$ is \prop{Well-Founded}, then all configurations of $T$ must be mortal. Otherwise there would be an infinite sequence of configurations $c_1\to c_2\to \cdots$.
    By \cref{lem:construction-lemma2}, this means there is an infinite ascending chain $\enc(c_1)\tau_1,\enc(c_2)\tau_2,\cdots$,
    where
    \[ (\enc(c_i)\tau_i,\enc(c_{i+1})\tau_{i+1})\in\rel{B_T} \]
    for all $i \in \N$.
    On the other hand, if $\rel{B_T}$ is not \prop{Well-Founded}, then by \cref{lem:construction-lemma1}, the infinite ascending chain is of the form $\enc(c_1)\tau_1,\enc(c_2)\tau_2,\cdots$ with $c_i\to c_{i+1}$ for all $i \in \N$.
    Hence, $c_1$~is immortal.
\end{proof}

This construction allows us to reduce LBA immortality to BMA well-foundedness.
\begin{theoremrep}
    The Well-Foundedness problem for Büchi-Mealy automata, as given below, is undecidable.
  \decproblem{A B\"uchi-Mealy automaton $B$}{$\rel{B}$ satisfies \prop{Well-Founded}}
\end{theoremrep}
\begin{proof}
  \cref{{lemma:lba-to-bma}} shows that the LBA immortality problem reduces to well-foundedness for Büchi-Mealy automata. The LBA immortality problem is undecidable, so well-foundedness for Büchi-Mealy automata must be undecidable as well.
\end{proof}

Let us shift our focus back to \prop{Maximal Cut},
our original property of interest.
We show its undecidability by linking it once again to well-foundedness
-- indeed, on relations corresponding to LBA, these two properties coincide:
\begin{lemmarep}
\label{lem:lba-wf-iff-maxcut}
    The relation~$\rel{B_T}$, for an LBA~$T$, is well-founded if and only if it satisfies \prop{Maximal Cut}.
\end{lemmarep}
\begin{proofsketch}
  We need only show that \prop{Maximal Cut} implies \prop{Well-Founded}; the opposite direction always holds.
  Therefore, suppose contrapositively that $\rel{B_T}$ were not well-founded.
  Then by \cref{lemma:lba-to-bma}, $T$~has some immortal configuration
  from which there exists an infinite run.
  By \cref{prop:lba-infiniterun}, this run consists of only finitely many configurations.
  We construct an LTL formula~$\varphi$ that describes exactly all the configurations of this run.
  Then, $\tracesof{\varphi}$ cannot have a maximal element (it would correspond to a halting configuration in the infinite run).
  Hence, $\rel{B_T}$ violates \prop{Maximal Cut}; the formula~$\lnot\varphi$ serves as the counter-example.
\end{proofsketch}
\begin{proof}
    We need only show that \prop{Maximal Cut} implies \prop{Well-Founded}; the opposite direction always holds.

    Contrapositively, let us assume that $\rel{B_T}$ is not \prop{Well-Founded}, then there exists an infinite ascending chain which by \cref{lem:construction-lemma1}
    is of the form $\enc(c_1)\tau_1,\enc(c_2)\tau_2,...$ with $c_i\to c_{i+1}$ for all $i \in \N$.  By \cref{prop:lba-infiniterun}, the infinite sequence $c_1\to c_2\to \dots$ consists of only finitely many unique configurations $c'_1,\dots,c'_n$. Let
    \[\varphi = \bigvee_{i=1}^n \bigwedge_{j=1}^m \Next^j \left(\bigwedge_{p\in \enc(c'_i)_j} p \land \bigwedge_{p\in\AP\setminus \enc(c'_i)_j} \lnot p  \right)  \]
    where $m=|\enc(c'_1)|$ is the length of one encoding (all are the same length). Note that $\enc(c'_i)$ is a word, $\enc(c'_i)_j$, its $j$-th symbol, a set of atomic propositions. To satisfy $\varphi$ a trace must match one of the encodings $\enc(c'_1),...,\enc(c'_n)$ symbol for symbol in the first $m$ positions, thus
    $\tracesof{\varphi}$ consists of exactly those traces $\tau\in\powerset{\AP}^\omega$ that are of the form $\tau=\enc(c'_i)\tau'$ for some $i\leq n$ and $\tau'\in\powerset{\AP}^\omega$. So these traces all correspond to the configurations $c'_1,...,c'_n$ and by \cref{lem:construction-lemma2} none of them can be maximal, that would mean their corresponding configuration is halting, which cannot be the case as these configurations come from an infinite sequence and are immortal. Therefore $\max_{\rel{B_T}}{\tracesof{\varphi}}$ is empty, and so $\rel{B_T}$ violates \prop{Maximal Cut}; the (non-tautological) formula~$\lnot\varphi$ serves as the counter-example.
\end{proof}
Hence, by the same construction as before, we can also reduce the LBA immortality problem to \prop{Maximal Cut}.
This is the last ingredient for our central theorem:
\begin{theoremrep}
  \label{thm:max-cut-undec}
   The following problem is undecidable:
  \decproblem{A B\"uchi-Mealy automaton $B$}{$\rel{B}$ satisfies \prop{Maximal Cut}}
\end{theoremrep}
\begin{proof}
  By \cref{{lemma:lba-to-bma}} and \cref{lem:lba-wf-iff-maxcut} LBA immortality reduces to \prop{Maximal Cut} for Büchi-Mealy automata. The LBA immortality problem is undecidable, so \prop{Maximal Cut} for Büchi-Mealy automata must be undecidable as well.
\end{proof}
Interestingly, our proof of undecidability does not even require the full expressive power of LTL.
In fact, the only temporal operator used to construct the formula~$\varphi$ in \cref{lem:lba-wf-iff-maxcut} is ${\Next\!}$ (``next'').
The fixpoint operators ($\Until$, or the derived operators~$\Finally$ resp.\ $\Globally$) that allow reasoning over an unbounded temporal horizon are not needed for this result.

We conclude this section with a discussion on the impact of our undecidability result, namely that it implies the impossibility of a \emph{constructive representation result}. To that end, we note:
\begin{observationrep}
  The \prop{Maximal Cut} problem on BMA, as given in \cref{thm:max-cut-undec}, is co-semi-decidable.
\end{observationrep}
\begin{proof}
  To find a violation of \prop{Maximal Cut} on a given BMA~$B$,
  successively enumerate all non-tautological LTL formulae.
  For each such formula~$\varphi$, construct a B\"uchi automaton recognizing ${\max}_{\rel{B}} \tracesof{\lnot\varphi}$ (as detailed in~\citep{kr25:effectiveAGMContraction})
  and determine whether it accepts the empty language.
  If so, then $B$~does not satisfy \prop{Maximal Cut}.
\end{proof}
The source of undecidability for \prop{Maximal Cut}
are thus the cases where the property indeed holds. It is not semi-decidable, i.e.
it is impossible to \emph{certify} the property in finite time.
Suppose now a (finite) system of effective constructions of BMA,
such that the resulting BMA are guaranteed to satisfy \prop{Maximal Cut} by design.
Indeed, we give just such a system in the next section.
Our observation above implies that such a system \emph{cannot possibly be complete},
i.e., there can be no representation theorem for \prop{Maximal Cut}.
There always must exist some BMA relations satisfying \prop{Maximal Cut} that cannot be constructed with this system.
Contradicting our observation, such a complete system would provide a semi-decision procedure for \prop{Maximal Cut}:
Given a BMA~$B$, enumerate all BMA~$B'$ constructible with the system and decide one-by-one whether $B$~and~$B'$ are equivalent.

Similarly, this observation precludes the possibility of a stronger, decidable (perhaps even syntactic) condition \textbf{(MC+)} on BMA
that implies \prop{Maximal Cut}
and is ``quasi-complete'',
in the sense that
every BMA satisfying \prop{Maximal Cut} is \emph{equivalent} to one that satisfies \textbf{(MC+)}.
This also yields a semi-decision procedure: to verify \prop{Maximal Cut} for a given~$B$, enumerate all BMA~$B'$,
and decide for each of them whether $B'$~satisfies \textbf{(MC+)},
and if so, whether $B$~and~$B'$ are equivalent.

Although we cannot overcome undecidability with easy alternatives, the goal of effective LTL contraction is not invalidated.
Rather this result motivates us to study epistemic preference relations suited for particular applications of contraction.
The next section presents a first step in this direction.

\section{Constructions for Rational Preferences}
\label{sec:constructions}
As maximal cut is undecidable,
we cannot in general take a proposed BMA~$B$ and automatically check whether its relation satisfies \prop{Maximal Cut},
in addition to~\prop{Mirroring},
thus certifying that $B$~may be used to perform fully rational contractions.
Instead, we propose in the following a number of constructions
that guarantee \emph{by design} that the resulting BMA relations satisfy both conditions.

As a first step,
we analyze more closely the class of relations that satisfy \prop{Mirroring} and \prop{Maximal Cut}.
To this end, we introduce a subclass of preference relations, called \emph{rankings}.
\begin{definition}
  A relation~${\pref}\subseteq \Sigma^\omega\times \Sigma^\omega$ is a \emph{ranking} over a totally-ordered set~$(Y,{<_Y})$
  if there exists a function~$f:\Sigma^\omega\to Y$
  such that $\tau_1 \pref \tau_2$ if and only if $f(\tau_1) <_Y f(\tau_2)$.
\end{definition}
\begin{toappendix}
  
\subsection{Proof of \cref{prop:mirroring-maxcut-ranking,thm:repr-result}}
We begin this appendix section by proving the connection between the properties \prop{Mirroring} and \prop{Maximal Cut} on one side,
and rankings on the other side~(\cref{prop:mirroring-maxcut-ranking}).
To this end, we establish several lemmata.

We use the notion of \emph{ultimately periodic} (UP) traces,
i.e., traces of the form~$\tau = \tau_\textrm{stem}\,\tau_\textrm{loop}^\omega$ with a finite prefix~$\tau_\textrm{stem}$ followed by an infinitely-repeated finite sequence~$\tau_\textrm{loop}$.
It has been shown by \citet{kr25:effectiveAGMContraction} that for every UP trace~$\tau$,
there exists an \emph{identifying formula}~$\id{\tau}$ such that $\tau$~is the only trace that satisfies~$\id{\tau}$.
We use this insight below.

First, we observe that a key property of rankings is that they are \emph{acyclic}.
In particular, for BMA relations, the existence of any cycle implies the existence of a cycle over only UP traces:
\begin{lemma}
  \label{lem:bma-acyclic-UP}
  Let $B$ be a BMA.
  The relation~$\rel{B}$ is acyclic
  if and only if
  the restriction $\rel{B} \cap \UP^2$ to UP traces is acyclic.
\end{lemma}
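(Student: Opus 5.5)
One direction is immediate: any cycle inside $\rel{B}\cap\UP^2$ is a cycle of $\rel{B}$. So the work is in the converse. Assume $\rel{B}$ has a cycle $\tau_0 \pref \tau_1 \pref \cdots \pref \tau_{k-1} \pref \tau_0$ for some $k\geq 1$. I want to turn it into a cycle of length $k$ made only of UP traces. The plan is to encode the whole cycle as a single word accepted by one Büchi automaton, and then invoke the classical fact that every nonempty $\omega$-regular language contains an ultimately periodic word.

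\textbf{Step 1 (product automaton).} Work over the alphabet $\powerset{\AP}^k$, so that a word encodes a $k$-tuple of traces $(\tau_0,\ldots,\tau_{k-1})$ read in lockstep. For each $i<k$, build a Büchi automaton $C_i$ over $\powerset{\AP}^k$ that projects onto components $i$ and $(i+1)\bmod k$ and runs $B$ on that pair. This is just $B$ with its transition labels reinterpreted, with the remaining components unconstrained. Then intersect $C_0,\ldots,C_{k-1}$ using the standard Büchi intersection (the same counter technique as in the composition construction of \cref{lem:bma-closed-ops}). The resulting automaton $C$ accepts exactly the tuples $(\tau_0,\ldots,\tau_{k-1})$ with $(\tau_i,\tau_{(i+1)\bmod k})\in\rel{B}$ for all $i$, that is, exactly the $k$-cycles of $\rel{B}$.

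\textbf{Step 2 (UP witness).} By assumption, $\lang{C}\neq\emptyset$. A nonempty Büchi language contains a word of the form $u v^\omega$. This follows from the standard lasso argument: some accepting run visits a recurrence state $q$ infinitely often, so take a finite path from an initial state to $q$, and a path from $q$ back to $q$ that passes through $q$ again; repeating the loop gives an accepting run on $uv^\omega$. Project $uv^\omega$ onto each component. The projection of an ultimately periodic word over $\powerset{\AP}^k$ is ultimately periodic, since the $i$-th component is $u_i (v_i)^\omega$. We therefore get UP traces $\tau'_0,\ldots,\tau'_{k-1}$ with $\tau'_i \pref \tau'_{(i+1)\bmod k}$, which is a cycle of $\rel{B}\cap\UP^2$.

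\textbf{Main obstacle.} The delicate point is Step 1. Each $C_i$ must read a component pair of the $k$-tuple rather than the tuple itself, and the intersection must preserve Büchi acceptance. Both are routine but need care. The case $k=1$ (a reflexive pair $\tau\pref\tau$) is covered by the same argument, intersecting $B$ with the diagonal. It is also worth noting that the construction does not need a uniform bound on cycle length: each cycle of length $k$ is handled by its own product automaton.
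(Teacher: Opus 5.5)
Your proof is correct, but it is organised differently from the paper's. The paper works directly with the given accepting runs. For a $2$-cycle it takes runs $\rho_{12}$ and $\rho_{21}$ of $B$ on $(\tau_1,\tau_2)$ and $(\tau_2,\tau_1)$. A pigeonhole argument then gives two positions at which both runs are in the same pair of states $(q,q')$, where $q$ is a recurrence state of $\rho_{12}$ and $\rho_{21}$ visits a recurrence state in between. The paper pumps the segment between these positions to obtain UP traces, and longer cycles are only declared ``analogous''. You instead package all $k$ constraints into a single B\"uchi automaton over $\powerset{\AP}^k$ whose language is exactly the set of $k$-cycles of $\rel{B}$. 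You then invoke the standard fact that a nonempty B\"uchi language contains a word $uv^\omega$, and project it componentwise.

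The underlying mechanism is the same: your lasso in the product automaton is precisely the paper's simultaneous pumping of the runs. Your version, however, treats every cycle length uniformly, including the self-loop case via the diagonal. It also hands the index bookkeeping, where the paper's write-up is somewhat loose, to the counter construction for intersection and a textbook lemma.

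The paper's version is more self-contained and concrete: its UP traces are pumped versions of the original cycle elements. But a faithful generalisation to $n$-cycles would need a pigeonhole over $n$-tuples of states. It would also need a check that each of the $n$ runs visits a recurrence state inside the pumped segment, which is exactly what your generalized-B\"uchi product encodes.
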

\begin{proof}
  The forwards direction is trivial: a cycle of UP traces is a cycle of traces.

  We show the backwards direction here for cycles of size 2, but the proof is analogous for arbitrary cycles.
  Hence, assume contrapositively that a cycle of (arbitrary) traces $\tau_1,\tau_2, \tau_1$ existed,
  i.e., assume $(\tau_1,\tau_2)\in\rel{B}$ and $(\tau_2,\tau_1)\in\rel{B}$.
  Then there must exist an accepting run $\rho_{12} = q_0q_1q_2\ldots$ for $(\tau_1,\tau_2)$
  and an accepting run $\rho_{21} = q'_0 q'_1 q'_2 \ldots$ for $(\tau_2, \tau_1)$.

  As $\rho_{12}$ is accepting, some accepting state $q$ must occur infinitely often on it.
  Now, consider the infinite set $\{\, i\in\N \mid q_i = q \,\}$, and consider the $q'_i$ for such~$i$.
  As there are only finitely many states of $B$, some state $q'$ must occur infinitely often among the $q'_i$.

  Let $k$ be the minimal such index with $q_k = q$ and $q'_k = q'$.
  As there are infinitely many such indices, there is also some $l>k$ with $q_l = q$ and $q'_l = q'$.
  In particular, we choose $l$ large enough such that between $q'_k$ and $q'_{k+l}$,
  there is at least one accepting state on $\rho_{21}$
  (such an $l$ exists, because $\rho_{21}$ contains infinitely many occurrences of accepting states).
  Then we define the UP traces
  \begin{align*}
    \tau_1' &:= \tau_1[1..k]\,\tau_1[k+1..k+l]^\omega\\
    \tau_2' &:= \tau_2[1..k]\,\tau_2[k+1..k+l]^\omega
  \end{align*}
  The run $\rho_{12}' := \rho_{12}[0..k]\,\rho_{12}[k+1..k+l]^\omega$
  is an accepting run of~$B$ for $(\tau_1',\tau_2')$.
  And similarly, the run $\rho_{21}' := \rho_{21}[0..k]\,\rho_{21}[k+1..k+l]^\omega$ is an accepting run of~$B$ for $(\tau_2', \tau_1')$.
  Therefore, we have $(\tau_1',\tau_2')\in\rel{B}$ and $(\tau_2',\tau_1')\in\rel{B}$,
  and hence, a cycle in $\rel{B} \cap \UP^2$.
\end{proof}
We use the above lemma in order to show:
\begin{lemma}
  \label{lem:bma-maxcut-acyclic}
  Given a BMA~$B$,
  if $\rel{B}$~satisfies \prop{Maximal Cut},
  then $\rel{B}$~is acyclic.
\end{lemma}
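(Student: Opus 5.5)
We argue by contraposition: assuming $\rel{B}$ has a cycle, we produce a non-tautological LTL formula $\varphi$ with $\max_{\rel{B}}\tracesof{\lnot\varphi}=\emptyset$. By \cref{lem:bma-acyclic-UP}, a cycle in $\rel{B}$ yields a cycle consisting only of ultimately periodic traces, say $\tau_1 \pref \tau_2 \pref \cdots \pref \tau_n \pref \tau_1$ with every $\tau_i\in\UP$. This covers $n=1$, i.e.\ a reflexive pair $\tau_1\pref\tau_1$, which is also a cycle.

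Next we express the finite set of cycle traces by a single LTL formula. Each UP trace $\tau_i$ has an identifying formula $\id{\tau_i}$ (\citet{kr25:effectiveAGMContraction}) satisfied by $\tau_i$ alone. Let $\psi := \bigvee_{i=1}^n \id{\tau_i}$, so $\tracesof{\psi}=\{\tau_1,\ldots,\tau_n\}$. Take $\varphi:=\lnot\psi$; then $\tracesof{\lnot\varphi}=\tracesof{\psi}$ is the set of cycle traces, which is nonempty. Hence $\varphi$ is not a tautology.

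Finally we show this set has no maximal element. For any $\tau_i$ in it, its successor on the cycle ($\tau_{i+1}$, indices taken modulo $n$) also lies in $\tracesof{\lnot\varphi}$ and satisfies $\tau_i\pref\tau_{i+1}$. So $\tau_i$ is not maximal, and $\max_{\rel{B}}\tracesof{\lnot\varphi}=\emptyset$, contradicting \prop{Maximal Cut}.

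The only substantive step is the reduction to UP traces, which \cref{lem:bma-acyclic-UP} already provides. Arbitrary traces need not be LTL-definable, so without that reduction no single formula could isolate the cycle. Beyond that, we only need that $\id{\cdot}$ exists for UP traces and that LTL is closed under finite disjunction.
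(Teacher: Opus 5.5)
Your proof is correct and follows the paper's argument essentially step for step: you use \cref{lem:bma-acyclic-UP} to reduce to a cycle of UP traces, take $\varphi :\equiv \lnot(\id{\tau_1}\lor\cdots\lor\id{\tau_n})$ so that $\tracesof{\lnot\varphi}$ is exactly the set of cycle traces, and observe that this set has no maximal element. Your explicit checks that $\varphi$ is not a tautology and that self-loops ($n=1$) count as cycles are small points the paper leaves implicit.
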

\begin{proof}
  For purposes of contraposition, suppose there exists a cycle $\tau_1, \tau_2, \ldots \tau_n, \tau_1$, with $\tau_1,\ldots,\tau_n\in\Sigma^\omega$.
  We show that then $\rel{B}$~does not satisfy \prop{Maximal Cut}.

  By \cref{lem:bma-acyclic-UP}, we can assume wlog.\ that indeed $\tau_1,\ldots,\tau_n\in\UP$ holds.
  We consider the formula $\varphi :\equiv \lnot (\id{\tau_1} \lor \ldots \lor \id{\tau_n})$.
  The contra-traces of~$\varphi$ are $\tracesof{\lnot\varphi} = \tracesof{\id{\tau_1} \lor \ldots \lor \id{\tau_n}} = \{\tau_1,\ldots,\tau_n\}$.
  As these traces form a cycle in~$\rel{B}$, we have that $\max_{\rel{B}} \tracesof{\lnot\varphi} = \emptyset$.
  This violates \prop{Maximal Cut}.
\end{proof}
In the following, we use the notation $x_1 \incomp[R] x_2$ for the \emph{incomparability relation} derived from a relation~$R$,
i.e., $x_1 \incomp[R] x_2$ holds if and only if $(x_1,x_2)\notin R$ and $(x_2,x_1)\notin R$.

We use this notation for instance to prove the following lemma, which shows that combining acyclicity with \prop{Mirroring} yields transitivity:
\begin{lemma}
  \label{lem:acyclic-mirroring-transitive}
  Let $R \subseteq X \times X$ be an acyclic relation (over some set $X$) that satisfies \prop{Mirroring}.
  The relation~$R$ is transitive.%
\end{lemma}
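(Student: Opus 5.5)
We take arbitrary $x_1,x_2,x_3\in X$ with $(x_1,x_2)\in R$ and $(x_2,x_3)\in R$, and show $(x_1,x_3)\in R$. The plan is a case distinction on how $x_1$ and $x_3$ are related. The only tools are \prop{Mirroring} (read as: incomparable elements have the same strict upper sets) and acyclicity, which rules out both $(x,x)\in R$ (a cycle of length one) and cycles of length two or three.

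First we observe that $x_1\neq x_3$. Otherwise $(x_1,x_2),(x_2,x_1)\in R$ would form a cycle of length two, contradicting acyclicity.

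Next we exclude the remaining two bad cases.
\begin{itemize}
\item If $(x_3,x_1)\in R$, then $x_1,x_2,x_3,x_1$ is a cycle of length three, again contradicting acyclicity.
\item If $x_1\incomp[R] x_3$, we apply \prop{Mirroring} with the incomparable pair $x_3,x_1$ and the fact $(x_3,\cdot)$: since $x_3 \npref x_1$, $x_1\npref x_3$, and $(x_1,x_2)\in R$, \prop{Mirroring} gives $(x_3,x_2)\in R$. Together with $(x_2,x_3)\in R$, this is a cycle of length two, a contradiction.
\end{itemize}
Since $(x_1,x_3)$ and $(x_3,x_1)$ cannot both be absent, and $(x_3,x_1)\in R$ is impossible, we conclude $(x_1,x_3)\in R$. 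Equivalently, one could phrase this through \cref{lem:mirroring-algebraically}, but the pointwise argument is simpler.

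The only delicate step is instantiating \prop{Mirroring} with the correct orientation of the variables, since its roles are asymmetric: it uses $\trace_1\pref\trace_3$ to conclude $\trace_2\pref\trace_3$. Here we take $\trace_1:=x_1$, $\trace_2:=x_3$, $\trace_3:=x_2$. Incomparability is symmetric, so this instantiation is legitimate. Everything else is routine.
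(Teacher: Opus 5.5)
Your proof is correct and is essentially the paper's argument: you use the same three-way case split on $x_1,x_3$, rule out $(x_3,x_1)\in R$ via the cycle $x_1,x_2,x_3,x_1$, and rule out incomparability by applying \prop{Mirroring} to get $(x_3,x_2)\in R$, which yields the cycle $x_2,x_3,x_2$. Your preliminary check that $x_1\neq x_3$ is harmless but not needed, since the case analysis already goes through when $x_1=x_3$.
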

\begin{proof}
  Let $(x_1, x_2) \in R$ and $(x_2, x_3)\in R$.
  Then we must either have $(x_1, x_3)\in R$, $(x_3, x_1)\in R$, or $x_1 \incomp[R] x_3$.
  In the first case, we are done.
  We show that the second and third case are impossible, as they lead to contradictions:
  \begin{description}
    \item[case $(x_3,x_1)\in R$:]
      Then $x_1,x_2,x_3,x_1$ is a cycle,
      contradicting the assumption that $R$ is acyclic.
    \item[case {$x_1 \incomp[R] x_3$}:]
      By \prop{Mirroring} and $(x_1,x_2)\in R$, we have $(x_3,x_2)\in R$.
      Then $x_2,x_3,x_2$ is a cycle,
      again contradicting the assumption that $R$ is acyclic.
  \end{description}
  Hence, transitivity holds.
\end{proof}
In addition, we show that the incomparability relation is an equivalence:
\begin{lemma}
  \label{lem:irreflexive-mirroring-equivalence-incomp}
  Let $R \subseteq X \times X$ be a irreflexive relation (over some set $X$) that satisfies \prop{Mirroring}.
  The incomparability relation $\incomp[R]$ is an equivalence relation.
\end{lemma}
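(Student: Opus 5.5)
We verify the three defining properties of an equivalence relation for $\incomp[R]$ directly from its definition: $x_1 \incomp[R] x_2$ iff $(x_1,x_2)\notin R$ and $(x_2,x_1)\notin R$. Reflexivity and symmetry need no use of \prop{Mirroring}; transitivity is the only step where \prop{Mirroring} enters, and it is the one requiring an argument.

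\emph{Reflexivity.} For any $x\in X$, irreflexivity of $R$ gives $(x,x)\notin R$, so both conjuncts in the definition of $x \incomp[R] x$ hold.

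\emph{Symmetry.} The defining condition is symmetric in $x_1$ and $x_2$, so $x_1 \incomp[R] x_2$ immediately yields $x_2 \incomp[R] x_1$.

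\emph{Transitivity.} Assume $x_1 \incomp[R] x_2$ and $x_2 \incomp[R] x_3$; we must show $(x_1,x_3)\notin R$ and $(x_3,x_1)\notin R$. We argue by contradiction for each conjunct.
\begin{itemize}
\item Suppose $(x_1,x_3)\in R$. Apply \prop{Mirroring} with $\trace_1 := x_1$, $\trace_2 := x_2$, $\trace_3 := x_3$. The premises are that $x_1,x_2$ are incomparable and $x_1 \pref x_3$, so we obtain $(x_2,x_3)\in R$. This contradicts $x_2 \incomp[R] x_3$.
\item Suppose $(x_3,x_1)\in R$. Apply \prop{Mirroring} with $\trace_1 := x_3$, $\trace_2 := x_2$, $\trace_3 := x_1$, using that $x_3$ and $x_2$ are incomparable (by symmetry of $\incomp[R]$). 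We obtain $(x_2,x_1)\in R$, contradicting $x_1 \incomp[R] x_2$.
\end{itemize}
Hence $x_1 \incomp[R] x_3$, and $\incomp[R]$ is an equivalence relation.

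The transitivity step is the only place requiring care: one must choose the right instantiation of \prop{Mirroring} for each of the two conjuncts, in particular which incomparable pair plays the role of $(\trace_1,\trace_2)$. Note that acyclicity is not needed here; irreflexivity alone supplies reflexivity.
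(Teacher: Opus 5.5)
Your proposal is correct and follows essentially the same route as the paper's proof. Reflexivity comes from irreflexivity, symmetry holds by definition, and transitivity is obtained by ruling out $(x_1,x_3)\in R$ and $(x_3,x_1)\in R$ through the same two instantiations of \prop{Mirroring}, which yield contradictions with $x_2 \incomp[R] x_3$ and $x_1 \incomp[R] x_2$ respectively.
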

\begin{proof}
  We consider the individual properties:
  \begin{description}
    \item[reflexivity:] Follows directly from irreflexivity of $R$.
    \item[symmetry:] By definition.
    \item[transitivity:] Let $x_1 \incomp[R] x_2$ and $x_2 \incomp[R] x_3$.
      We must have one of the following three cases:
      \begin{description}
        \item[case $(x_1,x_3)\in R$:] By \prop{Mirroring} and $x_1 \incomp[R] x_2$,
          we have $(x_2, x_3)\in R$,
          contradicting the assumption that $x_2 \incomp[R] x_3$.
        \item[case $(x_3,x_1)\in R$:] By \prop{Mirroring} and $x_2 \incomp[R] x_3$,
          we have $(x_2, x_1) \in R$,
          contradicting the assumption that $x_1 \incomp[R] x_2$.
        \item[case {$x_1 \incomp[R] x_3$}:] This is the only possible case,
          which is precisely what we needed to show.
      \end{description}
  \end{description}
  Hence, $\incomp[R]$ is an equivalence relation.
\end{proof}
The lemma above also applies to mirroring, acyclic relations, as they are in particular irreflexive.
In the next step, we show that an acyclic, mirroring relation~$R$ (over some set~$X$)
can in fact be characterized by a total strict order~$S$ on a partition of~$X$.
To this end, let $[x]$~denote the ${\incomp[R]}$-equivalence class of an element~$x\in X$,
and let $S$~be the relation over these equivalence classes such that
\[
  ([x_1],[x_2])\in S \iff (x_1, x_2)\in R
\]
We first observe:
\begin{lemma}
  \label{lem:eqclass-order-defined}
  The relation $S$ is well-defined.
\end{lemma}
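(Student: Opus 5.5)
To show that $S$ is well-defined, I will show that membership $([x_1],[x_2])\in S$ does not depend on the chosen representatives. Concretely, I take $x_1 \incomp[R] x_1'$ and $x_2 \incomp[R] x_2'$ with $(x_1,x_2)\in R$, and aim to derive $(x_1',x_2')\in R$. The standing assumptions are those of the surrounding discussion: $R$ is acyclic (hence irreflexive) and satisfies \prop{Mirroring}. By \cref{lem:irreflexive-mirroring-equivalence-incomp}, $\incomp[R]$ is then an equivalence relation, so the classes $[x]$ are meaningful.

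I will change one representative at a time. For the first argument, note that $x_1' \incomp[R] x_1$, i.e.\ $(x_1',x_1)\notin R$ and $(x_1,x_1')\notin R$, together with $(x_1,x_2)\in R$. \prop{Mirroring} applied with $\trace_1 = x_1$, $\trace_2 = x_1'$, $\trace_3 = x_2$ gives $(x_1',x_2)\in R$ directly.

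For the second argument, \prop{Mirroring} only transports the \emph{left} component, so I argue by exclusion. I have $(x_1',x_2)\in R$ and $x_2\incomp[R] x_2'$, and must rule out $(x_2',x_1')\in R$ and $x_1' \incomp[R] x_2'$.
\begin{itemize}
\item If $(x_2',x_1')\in R$, then \prop{Mirroring} (with $x_2' \incomp[R] x_2$) gives $(x_2,x_1')\in R$. Together with $(x_1',x_2)\in R$ this forms a 2-cycle, contradicting acyclicity.
\item If $x_1'\incomp[R] x_2'$, then transitivity of $\incomp[R]$ (from \cref{lem:irreflexive-mirroring-equivalence-incomp}) together with $x_2' \incomp[R] x_2$ yields $x_1'\incomp[R] x_2$, contradicting $(x_1',x_2)\in R$.
\end{itemize}
Hence $(x_1',x_2')\in R$.

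The only delicate point is this second-argument step, where mirroring does not apply directly. There acyclicity and the equivalence property of incomparability are needed, rather than mirroring alone. Alternatively, one could use the transitivity from \cref{lem:acyclic-mirroring-transitive} together with the symmetric variant of mirroring, but the case split above suffices.
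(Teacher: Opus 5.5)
Your proof is correct and matches the paper's argument: you first use \prop{Mirroring} to transport the left representative, then exclude $(x_2',x_1')\in R$ via a 2-cycle and $x_1'\incomp[R] x_2'$ via transitivity of incomparability. The converse implication follows by symmetry, as the paper notes.
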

\begin{proof}
  Let $[x_1] = [x_1']$ (i.e., $x_1 \incomp[R] x_1'$) and $[x_2] = [x_2']$ (i.e., $x_2 \incomp[R] x_2'$).
  We have to show that $(x_1,x_2)\in R$ if and only if $(x_1', x_2')\in R$.
  We show the forwards direction; the backwards direction is symmetric.
  
  Hence, let $(x_1,x_2)\in R$.
  By \prop{Mirroring} and $x_1 \incomp[R] x_1'$, we have $(x_1',x_2)\in R$.
  We distinguish three cases:
  \begin{description}
    \item[case $(x_2', x_1')\in R$:]
      By \prop{Mirroring} and $x_2' \incomp[R] x_2$, we have $(x_2,x_1')\in R$.
      This results in a cycle $x_1',x_2,x_1'$, contradicting the fact that $R$ is acyclic.
    \item[case {$x_1' \incomp[R] x_2'$}:]
      Transitivity of ${\incomp[R]}$ implies that $x_1' \incomp[R] x_2$,
      which contradicts the fact that $(x_1',x_2)\in R$.
    \item[case $(x_1',x_2')\in R$:] This is the only possible case,
      which is precisely what we needed to show.
      \qedhere
  \end{description}
\end{proof}
And moreover, the relation~$S$ is a total strict order:
\begin{lemma}
  \label{lem:eqclass-order-strictorder}
  The relation~$S$ is a total strict order,
  i.e., total, acyclic and transitive.
\end{lemma}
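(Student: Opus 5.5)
The plan is to check the three required properties of $S$ (totality, acyclicity, transitivity) one after another. Each check uses the lifting property that \cref{lem:eqclass-order-defined} makes available: $([x_1],[x_2])\in S$ holds if and only if $(x_1,x_2)\in R$, for \emph{any} choice of representatives. The standing assumptions are those of the surrounding development. $R$~is acyclic and satisfies \prop{Mirroring}, hence it is transitive by \cref{lem:acyclic-mirroring-transitive}. It is also irreflexive, since acyclicity excludes loops, so ${\incomp[R]}$~is an equivalence relation by \cref{lem:irreflexive-mirroring-equivalence-incomp}.

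\emph{Totality.} We interpret totality for a strict order as trichotomy: for distinct classes $[x_1]\neq[x_2]$, either $([x_1],[x_2])\in S$ or $([x_2],[x_1])\in S$. Since the classes differ, $x_1 \incomp[R] x_2$ fails by definition of the classes. Hence $(x_1,x_2)\in R$ or $(x_2,x_1)\in R$, and by the definition of~$S$ this gives $([x_1],[x_2])\in S$ or $([x_2],[x_1])\in S$.

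\emph{Transitivity.} Suppose $([x_1],[x_2])\in S$ and $([x_2],[x_3])\in S$. Because $S$~is well-defined, we may use the same representative~$x_2$ in both pairs, which gives $(x_1,x_2)\in R$ and $(x_2,x_3)\in R$. Transitivity of~$R$ then yields $(x_1,x_3)\in R$, that is, $([x_1],[x_3])\in S$.

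\emph{Acyclicity.} Suppose $[y_1],\ldots,[y_n],[y_1]$ were a cycle in~$S$. Choose one representative $y_i$ per class. By the well-definedness lemma each step $([y_i],[y_{i+1}])\in S$ gives $(y_i,y_{i+1})\in R$ for these fixed representatives, and similarly $(y_n,y_1)\in R$. So $y_1,\ldots,y_n,y_1$ is a cycle in~$R$, which contradicts acyclicity of~$R$. Irreflexivity of~$S$ is the special case $n=1$.

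The only step needing care is passing between different representatives of the same class. This is exactly what \cref{lem:eqclass-order-defined} provides, so no step should pose a real obstacle. The remaining point is to read ``total'' as trichotomy on distinct classes, not as reflexive totality.
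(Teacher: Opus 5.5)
Your proposal is correct and follows essentially the same route as the paper. Totality comes from distinct $\incomp[R]$-classes having comparable representatives, acyclicity from lifting an $S$-cycle to an $R$-cycle, and transitivity directly from transitivity of $R$. Your explicit use of a common representative for the middle class just spells out a step the paper leaves implicit.
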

\begin{proof}
  We show each property individually:
  \begin{description}
    \item[totality:] Let $[x_1] \neq [x_2]$ be two equivalence classes of ${\incomp[R]}$.
      As they are distinct, $x_1$ and $x_2$ are comparable,
      i.e., either $(x_1,x_2)\in R$ or $(x_2,x_1)\in R$.
      Hence, we have $([x_1],[x_2])\in S$ or $([x_2],[x_1])\in S$.
    \item[acyclicity:]
      Suppose there existed a cycle $[x_1],\ldots,[x_n]$ with $([x_1],[x_2])\in S$, \ldots, $([x_{n-1},x_n])\in S$ and $([x_n],[x_1])\in S$.
      Then by definition of $S$,
      we have that $x_1,\ldots,x_n,x_1$ is a cycle in $R$,
      contradicting acyclicity of $R$.
    \item[transitivity:]
      Follows directly from the definition of $S$ and transitivity of $R$.
      \qedhere
  \end{description}
\end{proof}
Combining all these ingredients, we prove our key result on rankings:

\end{toappendix}
We say that the \emph{ranking function~$f$ induces the ranking~${\pref}$}.
Every ranking satisfies \prop{Mirroring}, but not necessarily \prop{Maximal Cut}.
Though focusing on rankings might seem like a restriction, it is not:
\begin{theoremrep}
\label{prop:mirroring-maxcut-ranking}
  For a BMA~$B$, if $\rel{B}$~satisfies \prop{Mirroring} and \prop{Maximal Cut},
  then $\rel{B}$~is a ranking.
\end{theoremrep}
\begin{proofsketch}
  Let $\rel{B}$ satisfy \prop{Mirroring} and \prop{Maximal Cut}.
  The relation~$\rel{B}$ must be acyclic, as from any cycle we can construct a formula violating \prop{Maximal Cut}.
  By \prop{Mirroring}, the elements incomparable wrt.~$\rel{B}$ form equivalence classes,
  and $\rel{B}$~corresponds to a total order over these equivalence classes.
  Mapping each element to its equivalence class, we derive the ranking function~$f$ that induces~$\rel{B}$.
\end{proofsketch}
\begin{proof}
  Let $B$~be a BMA such that $\rel{B}$~satisfies \prop{Mirroring} and \prop{Maximal Cut}.
  By \cref{lem:bma-maxcut-acyclic}, we have that $\rel{B}$~is acyclic.
  Thus, by \cref{lem:irreflexive-mirroring-equivalence-incomp}, we know that the incomparability relation~${\incomp[\rel{B}]}$ is an equivalence relation.
  As established by \cref{lem:eqclass-order-defined,lem:eqclass-order-strictorder},
  the corresponding relation~$S$ on the equivalence classes of~${\incomp[\rel{B}]}$ is a total strict order.
  Hence, the function~$f$ with $f(\tau)=[\tau]$ is a ranking function over the totally-ordered set $(\{\,[\tau]\mid\tau\in\Sigma^\omega\,\},S)$
  and induces the ranking~$\rel{B}$.
\end{proof}

\begin{toappendix}
  Conversely, we show the following observation made in \cref{sec:constructions}:
  \begin{observation}
    Every ranking satisfies \prop{Mirroring}.
  \end{observation}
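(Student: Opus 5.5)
The plan is to verify \prop{Mirroring} directly from the definition of a ranking, using only the fact that $<_Y$ is a strict total order. Let ${\pref}$ be a ranking over $(Y,{<_Y})$, induced by the ranking function $f:\Sigma^\omega\to Y$. Fix traces $\tau_1,\tau_2,\tau_3$ with $\tau_1\npref\tau_2$, $\tau_2\npref\tau_1$ and $\tau_1\pref\tau_3$. The goal is to show $\tau_2\pref\tau_3$, i.e.\ $f(\tau_2)<_Y f(\tau_3)$.

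The first step translates the incomparability hypotheses into $Y$. By the definition of a ranking, $\tau_1\npref\tau_2$ means $f(\tau_1)\not<_Y f(\tau_2)$, and $\tau_2\npref\tau_1$ means $f(\tau_2)\not<_Y f(\tau_1)$. Since $<_Y$ is a strict total order, trichotomy then forces $f(\tau_1)=f(\tau_2)$.

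The second step concludes. The hypothesis $\tau_1\pref\tau_3$ gives $f(\tau_1)<_Y f(\tau_3)$. Substituting $f(\tau_2)=f(\tau_1)$ yields $f(\tau_2)<_Y f(\tau_3)$, which is exactly $\tau_2\pref\tau_3$.

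There is no real obstacle. The only point needing care is the reading of ``totally-ordered set'' as a \emph{strict} total order, so that trichotomy is available in the first step. Equivalently, one could use the algebraic form from \cref{lem:mirroring-algebraically}: incomparability under ${\pref}$ coincides with equality of $f$-values, so composing it with ${\pref}$ stays inside ${\pref}$.
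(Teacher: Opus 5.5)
Your proof is correct and follows the paper's argument: incomparability of $\tau_1,\tau_2$ forces $f(\tau_1)=f(\tau_2)$, so $f(\tau_2)=f(\tau_1)<_Y f(\tau_3)$ and hence $\tau_2\pref\tau_3$. The paper leaves the trichotomy step implicit, and you simply spell it out.
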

  \begin{proof}
    Let $f:\Sigma^\omega\to Y$~be the ranking function inducing a ranking~${\pref}$ over the totally-ordered set $(Y,{<_Y})$.
    Then, if $\tau_1 \incomp[\pref] \tau_2$ and $\tau_1 \pref \tau_3$,
    we have $f(\tau_2) = f(\tau_1) <_Y f(\tau_3)$,
    and thus $\tau_2 \pref \tau_3$ as required by \prop{Mirroring}.
  \end{proof}
\end{toappendix}
This result states that, fundamentally,
due to the expressiveness of LTL,
\emph{the only BMA relations guaranteeing full rationality} are rankings:
traces are partitioned into subsets of equi-preferable traces, and these subsets must be \emph{totally} ordered.

While we here state the theorem specifically for LTL and for BMA relations over traces,
the underlying insight is broader.
\Citet{jandson:thesis} shows a representation result between fully-rational AGM contraction operators and \emph{Blade Contraction Functions} induced by relations over complete consistent theories (CCTs), for all boolean Tarskian logics.
In this general setting, \cref{prop:mirroring-maxcut-ranking} carries over to a wide class of expressive logics, the \emph{compendious logics}~\citep{kr25:effectiveAGMContraction}.
We derive the following representation result:
\begin{toappendix}
  In order to prove the general representation result of~\cref{thm:repr-result} over arbitrary \emph{compendious logics},
  we repeat some necessary background information from \citet{jandson:thesis} and \citet{kr25:effectiveAGMContraction}.

  \Citet{jandson:thesis} presents a general formalism for AGM contraction over boolean, Tarskian logics.
  The formalism is based on complete consistent theories (CCTs).
  Epistemic preferences are expressed by a preference relation~$\pref$ over CCTs that satisfies \prop{Mirroring}
  and the following property:
  \begin{description}
    \item[{\propdef[Maximal Cut\textsubscript{CCT}]{MaxCutCCT}}] For every formula~$\varphi$,
      the set~$\compl{\varphi}$, consisting of all CCTs that do not contain~$\varphi$, must have at least one maximal element wrt.~$\pref$.
  \end{description}
  \Citet{jandson:thesis} then shows a representation result between all fully rational AGM contraction operators
  and \emph{Blade Contraction Functions},
  i.e., functions~$\dotmin[\pref]$ of the form
  \[
    \kb \dotmin[\pref] \varphi = \begin{cases}
      \kb \cap \bigcap \max_{\pref} \compl{\varphi} & \textbf{if } \varphi\notin\Cn{\emptyset} \text{ and } \varphi\in\kb\\
      \kb & \textbf{otherwise}
    \end{cases}
  \]
  for some preference relation~${\pref}$ satisfying \prop{Mirroring} and \prop{MaxCutCCT}.

  Our formalism is based on the work of \citet{kr25:effectiveAGMContraction},
  who instantiate the work of \citet{jandson:thesis} for LTL and show an isomorphism between CCTs and ultimately periodic traces.
  \Citet{kr25:effectiveAGMContraction} also describe a general class of expressive (boolean, Tarskian) logics, the \emph{compendious logics},
  of which LTL is an example.
  We refer to~\citep{kr25:effectiveAGMContraction} for the details of the definition.
  Let us merely note that, in extension of what we already stated for LTL above,
  it is shown there that in every such compendious logic,
  there exists an \emph{identifying formula}~$\id{\kb}\in \kb$ for every CCT~$\kb$, such that $\kb$~is the only CCT to contain~$\id{\kb}$.
  We show the following analogue of \cref{lem:bma-maxcut-acyclic}:
  \begin{lemma}
    \label{lem:cct-maxcut-acyclic}
    Consider a compendious logic, and let ${\pref}$~be a binary relation over CCTs of this logic,
    such that ${\pref}$~satisfies \prop{MaxCutCCT}.
    Then ${\pref}$~is acyclic.
  \end{lemma}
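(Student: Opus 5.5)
We argue by contraposition, mirroring the proof of \cref{lem:bma-maxcut-acyclic} but working directly with CCTs, which is simpler. Suppose ${\pref}$ has a cycle $\kb_1 \pref \kb_2 \pref \cdots \pref \kb_n \pref \kb_1$ of CCTs, with $n\geq 1$; the case $n=1$ covers reflexive pairs. We will produce a formula~$\varphi$ for which $\compl{\varphi}$ has no maximal element, which contradicts \prop{MaxCutCCT}.

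Since the logic is compendious, every $\kb_i$ has an identifying formula $\id{\kb_i}\in\kb_i$, and $\kb_i$ is the only CCT containing it. Set $\varphi :\equiv \lnot(\id{\kb_1}\lor\cdots\lor\id{\kb_n})$. The key step is to show that $\compl{\varphi} = \{\kb_1,\ldots,\kb_n\}$. The logic is boolean, so a CCT~$\kb'$ omits $\varphi$ iff it contains $\lnot\varphi$, which is equivalent to $\id{\kb_1}\lor\cdots\lor\id{\kb_n}$. In a boolean logic a CCT is prime: it contains a disjunction iff it contains one of the disjuncts. Hence $\kb'\notin\compl{\varphi}$ fails iff $\id{\kb_i}\in\kb'$ for some~$i$, and by the identifying property that happens iff $\kb'=\kb_i$.

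We expect this equality to be the only delicate point. It depends on the standard facts that CCTs in boolean Tarskian logics are closed under the connectives and prime with respect to $\lor$, and we will invoke these from the background of \citet{jandson:thesis}/\citet{kr25:effectiveAGMContraction}.

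Once the equality holds, every $\kb_i\in\compl{\varphi}$ has a strictly preferred element $\kb_{i+1}$ in the same set (indices taken mod~$n$). So $\max_{\pref}\compl{\varphi}=\emptyset$, violating \prop{MaxCutCCT}. We also note that $\varphi$ is not a tautology, because $\compl{\varphi}$ is nonempty, so the instance is a legitimate one. Unlike in the BMA case, no reduction to ultimately periodic witnesses is needed, since every CCT already has an identifying formula.
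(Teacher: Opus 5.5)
Your proposal is correct and follows the paper's proof essentially verbatim: argue by contraposition from a cycle $\kb_1 \pref \cdots \pref \kb_n \pref \kb_1$, take the formula $\lnot(\id{\kb_1}\lor\cdots\lor\id{\kb_n})$, and observe that the CCTs not containing it are exactly $\kb_1,\ldots,\kb_n$, so that set has no maximal element. Your justification of that set equality via completeness and primeness of CCTs with respect to disjunction fills in a step the paper states without argument.
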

  \begin{proof}
    Suppose that there exists a cycle~$\kb_1 \pref \kb_2 \pref \ldots \pref \kb_n \pref \kb_1$.
    Consider the formula $\varphi :\equiv \lnot (\id{\kb_1} \lor \ldots \lor \id{\kb_n})$.
    The CCTs that do not contain~$\varphi$ are exactly $\compl{\varphi} = \{\kb_1,\ldots,\kb_n\}$.
    As these traces form a cycle in~${\pref}$, we have that $\max_{\pref} \compl{\varphi} = \emptyset$.
    This contradicts \prop{MaxCutCCT}.
  \end{proof}
\end{toappendix}
\begin{theoremrep}
  \label{thm:repr-result}
  Every fully rational AGM contraction operator over a compendious logic \citep{kr25:effectiveAGMContraction}
  is a Blade Contraction Function~\citep{jandson:thesis} induced by a \emph{ranking} over CCTs that satisfies \emph{maximal cut}~\cite{jandson:thesis},
  and vice versa.
\end{theoremrep}
\begin{proof}
  We appeal to \citet{jandson:thesis} for the representation result between fully rational AGM contraction operators and Blade Contraction Functions.
  Hence, it remains only to show that in compendious logics, the preference relations~${\pref}$ over CCTs satisfying \prop{Mirroring} and \prop{MaxCutCCT} are rankings (given that we already observed that rankings satisfy \prop{Mirroring}).
  We proceed analogously to the proof of~\cref{prop:mirroring-maxcut-ranking}.

  Let ${\pref}$~be a relation over CCTs (of a compendious logic) that satisfies \prop{Mirroring} and \prop{MaxCutCCT}.
  By \cref{lem:cct-maxcut-acyclic}, we have that ${\pref}$~is acyclic.
  Thus, by \cref{lem:irreflexive-mirroring-equivalence-incomp}, we know that the incomparability relation~${\incomp[{\pref}]}$ is an equivalence relation.
  As established by \cref{lem:eqclass-order-defined,lem:eqclass-order-strictorder},
  the corresponding relation~$S$ on the equivalence classes of~${\incomp[{\pref}]}$ is a total strict order.
  Hence, the function~$f$ with $f(\kb)=[\kb]$ is a ranking function over the totally-ordered set $(\{\,[\kb]\mid \kb \in \CCT\,\},S)$
  and induces the ranking~${\pref}$.
\end{proof}
\begin{toappendix}
  \subsection{Proofs for BMA Constructions}
  The remainder of this appendix section is dedicated to the different constructions of well-founded BMA rankings.
\end{toappendix}
Consequently, we confine our study of BMA relations to rankings.
We call such rankings \emph{BMA rankings}.
To ensure \prop{Maximal Cut},
we focus on rankings over well-founded sets.
\begin{observation}
  If ${\pref}$ is a ranking over~$(Y,{<_Y})$ and ${<_Y}$ is well-founded,
  then ${\pref}$ is well-founded.
\end{observation}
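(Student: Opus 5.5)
We argue via the chain-based formulation of \prop{Well-Founded}: it suffices to show that there is no infinite sequence $\tau_0 \pref \tau_1 \pref \tau_2 \pref \cdots$ of traces. Let $f:\Sigma^\omega\to Y$ be the ranking function inducing~${\pref}$. The plan is to push any such chain forward along~$f$ and obtain an infinite ascending chain in~$(Y,{<_Y})$.

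Concretely, suppose for contradiction that $\tau_0 \pref \tau_1 \pref \cdots$ exists. By the definition of a ranking, $\tau_i \pref \tau_{i+1}$ holds if and only if $f(\tau_i) <_Y f(\tau_{i+1})$. Hence we obtain $f(\tau_0) <_Y f(\tau_1) <_Y \cdots$. Since ${<_Y}$ is a strict total order and thus transitive and irreflexive, these values are pairwise distinct. The set $\{f(\tau_i)\mid i\in\N\}$ is therefore a nonempty subset of~$Y$ with no maximal element, so ${<_Y}$ is not well-founded in the sense used here.

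Alternatively, one can argue directly with the set formulation. Given nonempty $X\subseteq\Sigma^\omega$, the image $f(X)$ is nonempty and has a ${<_Y}$-maximal element~$y$. Any $\tau\in X$ with $f(\tau)=y$ is then $\pref$-maximal in~$X$, because $\tau\pref\tau'$ would give $f(\tau')>_Y y$ with $f(\tau')\in f(X)$.

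The only point needing care is the convention. Throughout the paper, ``well-founded'' means the ascending chain condition, that is, every nonempty set has a maximal element. We must read well-foundedness of ${<_Y}$ in the same orientation. Beyond this, the argument is immediate and has no real obstacle.
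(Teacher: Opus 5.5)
Your proof is correct, and it is exactly the immediate argument the paper intends. The paper states this observation without proof, and both your chain argument (pushing $\tau_0 \pref \tau_1 \pref \cdots$ forward along $f$) and your set argument (any preimage in $X$ of a $<_Y$-maximal element of $f(X)$ is $\pref$-maximal) are sound. Your caveat about orientation is also the right one: the paper reads well-foundedness of $<_Y$ as the ascending chain condition, for example when it calls the ranking into $\Z_{\leq 0}\cup\{-\infty\}$ in the ``sooner, the better'' construction well-founded.
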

In the remainder of this section, we provide three constructions of BMA relations that represent well-founded rankings,
and one construction to \emph{hierarchically compose} multiple BMA rankings while preserving well-foundedness.
Together, these constructions induce a large class of BMA rankings and corresponding fully-rational contractions.
\paragraph{(1) Lists of Ranks.}
The most straightforward way to ensure \prop{Well-Founded} is to consider finite rankings.
Let therefore $M_1,\ldots, M_n$ be a partition of the set~$\Sigma^\omega$
(i.e., $\bigcup_{i=1}^n M_i=\Sigma^\omega$ and $M_i\cap M_j=\emptyset$ for $i\neq j$).
Assume that the traces in $M_1$ are considered less plausible than those in $M_2$,
those in $M_1 \cup M_2$ are less plausible than those in $M_3$, etc.
The corresponding ranking function~$f$ maps traces to the well-ordered set $(\{1,\ldots,n\}, {<_\N})$,
and satisfies $f(\tau) = i$ if and only if $\tau \in M_i$.
If each rank~$M_i$
is described as a B\"uchi automaton,
the induced ranking can be represented as a BMA.
\begin{toappendix}
  For several of these constructions, we use the following lemma:
  \begin{lemma}
    \label{lem:buchi-product-bma}
    Let $A_1,A_2$ be B\"uchi automata.
    Then there exists a BMA~$B$ with $\rel{B} = \lang{A_1} \times \lang{A_2}$.
  \end{lemma}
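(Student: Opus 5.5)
To prove the statement, I would build the product automaton directly: run $A_1$ on the first component and $A_2$ on the second component of each letter pair, and use a generalised-Büchi-to-Büchi flag to require that both runs are accepting. Write $A_k = (Q^k, \Sigma, \Delta^k, Q_0^k, R^k)$ for $k\in\{1,2\}$. I would define
\[
B := \big(Q^1\times Q^2\times\{1,2\},\ \Sigma_\textnormal{BM},\ \Delta^B,\ Q_0^1\times Q_0^2\times\{1\},\ R^1\times Q^2\times\{1\}\big).
\]
Here $((q_1,q_2,k),(a,b),(q_1',q_2',k'))\in\Delta^B$ holds iff $(q_1,a,q_1')\in\Delta^1$ and $(q_2,b,q_2')\in\Delta^2$. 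The flag switches as $k' = 3-k$ when $q_k\in R^k$, and $k'=k$ otherwise. This is the same flag mechanism already used for composition in \cref{lem:bma-closed-ops}.

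For the inclusion $\rel{B}\subseteq\lang{A_1}\times\lang{A_2}$, I would take an accepting run of $B$ on the pair $(\tau_1,\tau_2)$ and project it onto its first and second coordinates. This gives runs of $A_1$ on $\tau_1$ and of $A_2$ on $\tau_2$, and both start in initial states. The run visits $R^1\times Q^2\times\{1\}$ infinitely often, so the first projection visits $R^1$ infinitely often. Each such visit flips the flag to $2$. The flag only returns to $1$ when the second coordinate lies in $R^2$ while the flag is $2$, and the flag must return to $1$ infinitely often. Hence the second projection also visits $R^2$ infinitely often.

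For the converse inclusion, I would take accepting runs $\rho_1$ of $A_1$ on $\tau_1$ and $\rho_2$ of $A_2$ on $\tau_2$, pair them, and compute the flag deterministically along the paired run. Because the flag is a deterministic function of the history, the result is a run of $B$ on the pair. It remains to show it is accepting. If the flag stayed at $1$ from some point on, the next visit of $\rho_1$ to $R^1$ would switch it to $2$, a contradiction. If it stayed at $2$ from some point on, the next visit of $\rho_2$ to $R^2$ would switch it to $1$, again a contradiction. So the flag alternates infinitely often. Each switch from $1$ to $2$ happens at a state in $R^1\times Q^2\times\{1\}$, so the run is accepting.

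The only delicate point is getting the flag bookkeeping and the acceptance set to match this argument. Everything else is a routine product construction. As an alternative, I could avoid the flag entirely by building Büchi-Mealy automata for the cylinders $\lang{A_1}\times\Sigma^\omega$ and $\Sigma^\omega\times\lang{A_2}$, which simply ignore one component, and then intersecting them using \cref{lem:bma-closed-ops}.
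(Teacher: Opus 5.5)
Your proposal is correct and follows the paper's proof: both build the synchronous product of $A_1$ and $A_2$ over letter pairs, with a two-valued flag that toggles when the currently tracked component visits its recurrence set. The only difference is cosmetic. The paper declares accepting every state whose flagged component is recurrent, i.e.\ every flip in either direction, whereas you use the canonical set $R^1\times Q^2\times\{1\}$. Your two-inclusion argument also supplies the correctness details that the paper only sketches.
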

  \begin{proof}
    For the cartesian product of B\"uchi automata $A_i = (Q_i, \Sigma, \Delta_i, Q_0^i, R_i)$ and $A_j = (Q_j, \Sigma, \Delta_j, Q_0^j, R_j)$,
    we construct a BMA $B=(Q_B, \Sigma_{\textnormal{BM}}, \Delta_B, Q_0^B, R_B)$
    with
    $Q_B=Q_i\times Q_j \times \{0,1\}$,
    $Q_0^B = Q_0^i \times Q_0^j \times \{0\}$,
    $R_B = \{\,\langle q_0, q_1, k\rangle \in Q_B \mid q_k \in R_k \,\}$,
    and $(\langle q_0, q_1, k \rangle, a/b, \langle q_0', q_1', k' \rangle) \in \Delta_B$
    if and only if $(q_0,a,q_0')\in \Delta_i$, $(q_1,b,q_1')\in\Delta_j$,
    and $k'=(k+1)\bmod 2$ if $q_k\in R_k$ or $k'=k$ otherwise.
    Similar to the canonical construction for the intersection of B\"uchi automata,
    the flag~$k$ is used to ensure that both $A_i$~and~$A_j$ pass through one of their respective recurrence states infinitely often.
  \end{proof}
\end{toappendix}
\begin{propositionrep}
  \label{prop:list-rank-bma}
  Let $A_1,\ldots, A_n$ be B\"uchi automata with disjoint languages,
  such that $\bigcup_{i=1}^n\lang{A_i} = \Sigma^\omega$.
  The ranking function~$f$ with $f(\tau)=i$ if and only if~$\tau\in\lang{A_i}$
  induces a well-founded BMA ranking.
\end{propositionrep}
\begin{proof}
  Let ${\pref}$~be the ranking induced by~$f$.
  We observe that this ranking can be written as a finite union of cartesian products:
  \[
    {\pref} = \bigcup_{i=1}^n \bigcup_{j=i+1}^{n} \lang{A_i} \times \lang{A_j}
  \]
  As BMA are closed under union (\cref{lem:bma-closed-ops}),
  and the cartesian product between the languages of two B\"uchi automata can be recognized by a BMA (\cref{lem:buchi-product-bma}),
  we conclude that ${\pref}$~is a BMA ranking.
\end{proof}
It is natural to define the preference relation in terms of the formulae satisfied by the traces.
In particular, we can choose automata~$A_i$ corresponding to LTL formulae.
\begin{example}
\label{ex:next-days-rain-ranking}
Let the atomic proposition~$r$ denote that \quot{it rains}.
We may consider it more plausible that \quot{it will rain tomorrow} (in LTL, $\varphi :\equiv \Next r$) than that \quot{it will not rain tomorrow} $(\lnot\varphi$).
This preference corresponds to a BMA relation, as we can see by setting $A_1 = \ltlbuchi{\lnot\varphi}$, $A_2 = \ltlbuchi{\varphi}$ and $n=2$ in \cref{prop:list-rank-bma}.

LTL also allows us to express more fine-grained temporal preferences.
We can express the attitude that most plausibly, \quot{it will rain today},
somewhat less plausibly \quot{it will rain tomorrow (but not today)},
and even less plausibly \quot{it will eventually rain (but not today or tomorrow)}.
The alternative \quot{it will never rain again} is most implausible.
To express this, we set~$n$ to~$4$ and $A_i:=\ltlbuchi{\varphi_i}$ (for $i=1,\ldots,n$), with
\[
  \varphi_1 \equiv \Globally\lnot r
  \qquad\quad
  \varphi_2 \equiv \Finally r \land (\lnot\Next r) \land \lnot r
  \qquad\quad
  \varphi_3 \equiv (\Next r)\land\lnot r
  \qquad\quad
  \varphi_4 \equiv r
\]
The BMA recognizing the corresponding ranking is shown in~\cref{fig:bma-example} and discussed in \cref{ex:bma-preferences}.
\end{example}
While often useful,
this approach allows us to distinguish only finitely many ranks of equi-preferable traces.
Consider again the example above,
but with the preference that \quot{it is plausible that it rains \emph{soon}}.
Or more precisely, \quot{the sooner it rains (according to some trace), the more plausible}.
This requires us to distinguish infinitely many ranks (\quot{it first rains in $n$ days}, for each~$n\in\N$).
As we show next, BMA are expressive enough to represent such preferences.
\paragraph{(2) The sooner, the better.}
A rich class of infinite rankings can be described by ordering traces according to the earliest time point from which on some plausible LTL formula~$\varphi$ holds.
I.e., we prefer traces satisfying $\varphi$
over traces satisfying only $\Next \varphi$,
over traces satisfying only $\Next[2] \varphi$, etc.
Given a trace~$\tau$,
we write $\earliestFuture{\varphi}{\tau}=i$ for the minimal~$i\in\N$
such that $\tau  \models \Next[i] \varphi$,
and $\earliestFuture{\varphi}{\tau} = \infty$ if no such~$i$ exists.
\begin{propositionrep}
  \label{prop:earliest-future-bma}
  The ranking function ${f:\Sigma^\omega \to \Z_{\leq0}\cup\{-\infty\}}, \tau \mapsto -\earliestFuture{\varphi}{\tau}$ induces a well-founded BMA ranking.
\end{propositionrep}
\begin{proof}
  It is straightforward to see that the ranking is well-founded.
  To construct a BMA recognizing the ranking,
  let $B_1=(Q_1,\Sigma_{\textnormal{BM}}, \Delta_1, Q_0^1, R_1)$ be a BMA recognizing $\rel{B_1} = \lang{\ltlbuchi{\lnot\varphi}}\times \lang{\ltlbuchi{\lnot\varphi}}$,
  and let $B_2 = (Q_2, \Sigma_{\textnormal{BM}}, \Delta_2, q_0^2, R_2)$ be a BMA recognizing $\rel{B_2} = \lang{\ltlbuchi{\lnot\varphi}}\times \lang{\ltlbuchi{\varphi}}$.
  Such BMA exist by~\cref{lem:buchi-product-bma}.
  We assume wlog.\ that $Q_0^1$ and $Q_0^2$ are disjoint.

  For convenience, we construct an \emph{alternating} B\"uchi-Mealy automaton for the ranking induced by~$f$.
  That is, each transition is represented by a tuple $(q, x/y, Q') \in \Delta \subseteq Q \times \Sigma_{\textnormal{BM}} \times \powerset{Q}$,
  with the meaning that from \emph{every} state in~$Q'$, there must be a run visiting a recurrence state infinitely often.
  It is well known that alternating B\"uchi automata (and hence, BMA) have the same expressive power as normal B\"uchi automata (resp.~BMA).

  Specifically, we set $B := (Q_B, \Sigma_{\textnormal{BM}}, \Delta_B, Q_0^B, R_B)$,
  with states $Q_B = \{q_\init\} \cup Q_1 \cup Q_2$,
  a single initial state~$Q_0^B = \{q_\init\}$,
  recurrence states $R_B = R_1 \cup R_2$,
  and the transition relation
  \begin{align*}
    \Delta_B =\null & \big\{\, (q_\init, x/y, \{q_\init, q'\}) \mid x,y\in\Sigma \text{ and } (q_0, x/y, q')\in\Delta_1 \text{ for some } q_0\in Q_0^1 \,\big\}\\
                    & \null \cup \big\{\, (q_\init, x/y, \{q'\}) \mid x,y\in\Sigma \text{ and } (q_0, x/y, q')\in \Delta_2 \text{ for some } q_0\in Q_0^2 \,\big\} \\
                    & \null \cup \Delta_1 \cup \Delta_2
  \end{align*}
  While the transitions from the first set are taken,
  the BMA~$B$ makes sure that neither the first nor the second trace in a pair~$(\tau_1,\tau_2)\in\rel{B}$ satisfy~$\varphi$.
  At a nondeterministically chosen point in time~$i$, $B$~may instead choose a transition from the second set,
  ensuring that at time point~$i$, the trace~$\tau_1$ still does not satisfy~$\varphi$ (and hence, $\earliestFuture{\varphi}{\tau_1}>i$)
  whereas the trace~$\tau_2$ does satisfy~$\varphi$ (i.e., $\earliestFuture{\varphi}{\tau_2} = i$).
  The existence of such a point in time~$i$ corresponds exactly to the condition~$f(\tau_1) < f(\tau_2)$.
  The BMA~$B$ enforces that a transition of the second set is eventually taken (and thus such an~$i$ exists),
  as otherwise there is a run~$q_\init^\omega$ which never visits a recurrence state.
\end{proof}
The induced BMA ranking is denoted as~$\pref_\varphi$.
We invert the sign, as traces where the desired property holds at a \emph{smaller} time index are considered \emph{more} preferable.
Well-foundedness arises from the fact that there does not exist an infinite descending chain of trace indices~$i\in\N$.

~\vspace*{-2em}
\begin{wrapfigure}[5]{r}{0.17\textwidth}
\centering
\vspace*{-1em}
\begin{tikzpicture}[thick]
  \node[draw,circle] (q0) {};
  \node[draw,double,circle,right=10mm of q0] (q1) {};
  \draw[<-] (q0) -- ++(left:7mm);
  \draw[->] (q0) edge[loop above] node[auto,font=\scriptsize,align=center]{$\emptyset/\emptyset$,\ \ $\emptyset/\{s\}$,\\$\{s\}/\emptyset$, \\ $\{s\}/\{s\}$} ();
  \draw[->] (q0) -- node[above,font=\scriptsize] {$\{s\}/\emptyset$} (q1);
  \draw[->] (q1) edge[loop above] node[auto,font=\scriptsize,align=center]{$\emptyset/\emptyset$,\\$\{s\}/\{s\}$} ();
\end{tikzpicture}
\captionsetup{hangindent=0pt,indention=0pt,labelfont=bf,format=plain}
\caption{BMA for the ranking $\pref_{\Globally\lnot s}$.}
\label{fig:bma-first-noswnow}
\end{wrapfigure}
\begin{example}
  The BMA ranking~$\pref_\varphi$, for the LTL formula~$\varphi\equiv r$, expresses the preference \quot{it is plausible that it rains soon} discussed above.
  Our result also applies to more complex temporal statements.
  For instance, \quot{it is plausible that soon, it will never snow again} is expressed by the BMA ranking~$\pref_\psi$, for the LTL formula~$\psi \equiv \Globally\lnot s$,
  where the proposition~$s$ indicates that \quot{it snows} (see \cref{fig:bma-first-noswnow}).
\end{example}
\paragraph{(3) Rankings from Distance Measures.}
A well-known approach to construct epistemic preference relations in the classical setting
is based on defining a \emph{distance} between models (propositional valuations).
The most preferred models are those with the least distance to the models of the knowledge base~$\kb$ from which a formula is contracted.
I.e., models are \emph{ranked} by their distance from~$\kb$, and the closest models are most preferred.
The most prominent representative of this approach is the distance defined by~\citet{Dalal88,DalalTc}.
Many other distance measures fail to ensure rationality~\citep{RibeiroT21} in the classical setting.

We define a well-founded BMA ranking
based on a notion of distance between the traces and the given knowledge base~$\kb$.
The construction is parametric in a distance function~$d : \Sigma\times\Sigma \to \R_{\geq 0} \cup\{+\infty\}$ on propositional valuations in~$\Sigma=\powerset{\AP}$.
It may be useful to think of~$d$ as a distance in the mathematical sense that it satisfies the \emph{metric axioms}%
(symmetry, triangle inequality, and $d(x,y)=0$ iff $x=y$).
For instance, the distance of~\citet{Dalal88,DalalTc} (also known as \emph{Hamming distance}) satisfies these axioms.
Yet, neither rationality nor the constructive realization with BMA depend on such assumptions,
thus allowing for the use of many broader notions of distance that fail rationality in the classical usage.

Observe that the distance~$d(x,y)$ can only take finitely many values, as $\Sigma\times\Sigma$ is finite;
let the range of~$d$ be $\{r_0,r_1,\ldots,r_n\}$
with $r_0 < r_1 < \ldots < r_n$.
The function~$d$ gives rise to a distance on traces:
\begin{equation*}
  d_\infty(A_0A_1\ldots, B_0B_1\ldots) := \textstyle \max_{i\in\N} d(A_i,B_i)
\end{equation*}
\begin{figure}[t]
\centering
\begin{subfigure}{0.5\textwidth}
\centering
\tikzstyle{pv}=[font=\scriptsize,inner sep=2,fill=white]
\begin{tikzpicture}[thick,yscale=0.5]
  \foreach \i in {0,1,...,6}{
    \draw[thin,dotted,gray] (\i,-0.5) -- (\i,5.5);
  }
  \draw[->,very thick,gray] (-0.5,-0.5) -- node[pos=0.9,below,font=\scriptsize]{Time} (6.5,-0.5);

  \node[pv,text=orange] (20) at (0,3.0) {$\{p\}$};
  \node[pv,text=orange] (21) at (1,4.5) {$\{p,r\}$};
  \node[pv,text=orange] (22) at (2,3.0) {$\{p,r\}$};
  \node[pv,text=orange] (23) at (3,1.5) {$\{p,q\}$};
  \node[pv,text=orange] (24) at (4,4.5) {$\{p,s\}$};
  \node[pv,text=orange] (25) at (5,3.0) {$\{p,s\}$};
  \node[pv,text=orange] (26) at (6,3.0) {$\{p,s\}$};
  \node[orange,right=-1mm of 26] {$\mathbf{\cdots}$};
  \node[orange,left=-1.5mm of 20,font=\scriptsize] {$\tau_1=$};

  \draw[orange] (20) -- (21)
        (21) -- (22)
        (22) -- (23)
        (23) -- (24)
        (24) -- (25)
        (25) -- (26);

  \node[pv,text=blue] (10) at (0,0) {$\{q\}$};
  \node[pv,text=blue] (11) at (1,0) {$\{q\}$};
  \node[pv,text=blue] (12) at (2,0) {$\emptyset$};
  \node[pv,text=blue] (13) at (3,0) {$\{q\}$};
  \node[pv,text=blue] (14) at (4,0) {$\{q\}$};
  \node[pv,text=blue] (15) at (5,0) {$\emptyset$};
  \node[pv,text=blue] (16) at (6,0) {$\emptyset$};
  \node[blue,right=-1mm of 16] {$\mathbf{\cdots}$};
  \node[blue,left=-1.5mm of 10,font=\scriptsize] {$\tau_2=$};

  \draw[blue] (10) -- (11)
        (11) -- (12)
        (12) -- (13)
        (13) -- (14)
        (14) -- (15)
        (15) -- (16);

  \draw[red,very thick,<->] (14) -- node[right,font=\scriptsize,pos=0.4,xshift=-0.7mm]{$d_\infty(\tau_1,\tau_2)=3$} (24);
\end{tikzpicture}
\caption{Trace distance~$d_\infty$, where $d$ is the Dalal distance.}
\label{fig:trace-distance}
\end{subfigure}
\hfill
\begin{subfigure}{0.4\textwidth}
\centering
\begin{tikzpicture}[scale=0.8]
  \draw[orange,fill=orange!40!white,very thick] (0,0) ellipse (2 and 1.25);
  \node[orange,font=\scriptsize,xshift=6mm] at (1.5,1.25) {$\{\,\sigma\mid \sigma \models \kb \,\}$};
  \node[inner sep=1] (K')   at (-2,1.5) {$\tau$};
  \node[inner sep=1] (tau)  at (-1.15,0.65) {$\sigma$};
  \node[inner sep=1,xshift=-2mm] (tau') at (1.5,-0.5) {$\sigma'$};
  \draw[red,very thick,<->] (K') -- node[left,rotate=40,font=\scriptsize]{$-f_d^{\kb}(\tau)$} (tau);
  \node[inner sep=1] (K'')  at (3.5,-1) {$\tau'$};
  \draw[red,very thick,<->] (K''.west) -- node[right,rotate=40,pos=0.5,font=\scriptsize]{$-f_d^{\kb}(\tau')$} (tau'.east);
\end{tikzpicture}
\caption{Ranking function $f_d^{\kb}$ derived from the distance~$d_\infty$ to the closest~$\sigma$ resp.\ $\sigma'$. In the induced ranking, $\tau$ is preferred over $\tau'$.}
\label{fig:distance-based-rank}
\end{subfigure}
\caption{Construction of a ranking from a notion of distance~$d$ on propositional valuations.}
\end{figure}
\Cref{fig:trace-distance} illustrates this distance measure on traces.%
If $d$~is a metric, the distance~$d_\infty$ is the corresponding \emph{Chebyshev metric}.
For a consistent theory~$\kb$,
we define the rank~$f_d^\kb(\tau)$ of a trace~$\tau$
via the distance between $\tau$~and the closest trace~$\sigma$ such that $\sigma\models\kb$ holds:
\[
  f_d^\kb(\tau) := -\min \{\, d_\infty(\tau, \sigma) \mid \sigma \models \kb \,\}
\]
We invert the sign such that traces~$\tau$ close to~$\kb$ are preferred.
The minima and maxima in these equations are well-defined, as they operate on nonempty subsets of the finite set $\{r_0,\ldots,r_n\}$.
\Cref{fig:distance-based-rank} illustrates the ranking function.%
And indeed, this ranking (denoted~$\pref_d^\kb$) can be represented as BMA:
\begin{toappendix}
  \begin{lemma}
    \label{lem:bma-exists-projection}
    Let $B$~be a BMA. There exists a B\"uchi automaton~$A$ such that
    \[
      \lang{A} =\{\,\tau\in\Sigma^\omega \mid \exists \tau'\in\Sigma^\omega\,.\, (\tau,\tau')\in\rel{B}\,\}\ .
    \]
  \end{lemma}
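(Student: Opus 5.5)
The plan is to build $A$ directly from $B$ by erasing the second component of every transition label, i.e., by projecting the BMA onto its first track. Let $B=(Q,\Sigma_\textnormal{BM},\Delta,Q_0,R)$. I would define the B\"uchi automaton $A := (Q,\Sigma,\Delta_A,Q_0,R)$ with the same states, initial states and recurrence states, and with
\[
  \Delta_A := \{\,(q,a,q') \mid \exists b\in\Sigma\,.\,(q,(a,b),q')\in\Delta\,\}\ .
\]
This is the same nondeterministic guessing of an intermediate letter $b$ that appears in the composition construction of \cref{lem:bma-closed-ops}, except that here the guessed letter is simply discarded.

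It remains to show the two inclusions. For ``$\supseteq$'', suppose $(\tau,\tau')\in\rel{B}$ with $\tau=a_0a_1\cdots$ and $\tau'=b_0b_1\cdots$. Take an accepting run $q_0q_1\cdots$ of $B$ on $(a_0,b_0)(a_1,b_1)\cdots$. Each step $(q_i,(a_i,b_i),q_{i+1})\in\Delta$ yields a step $(q_i,a_i,q_{i+1})\in\Delta_A$. Hence the same state sequence is a run of $A$ on $\tau$, and it visits $R$ infinitely often.

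For ``$\subseteq$'', take an accepting run $q_0q_1\cdots$ of $A$ on $\tau=a_0a_1\cdots$. For every $i$, the definition of $\Delta_A$ provides some $b_i$ with $(q_i,(a_i,b_i),q_{i+1})\in\Delta$. Fix one such choice per index; countable choice suffices, and since $\Sigma$ is finite, one can simply pick the least letter under a fixed enumeration. Set $\tau':=b_0b_1\cdots$. Then the same state sequence is an accepting run of $B$ on the pair word, so $(\tau,\tau')\in\rel{B}$.

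The only point needing care is this pointwise choice of the $b_i$. It causes no difficulty, because the acceptance condition depends only on the state sequence, which stays the same; the argument therefore needs no König-style compactness.
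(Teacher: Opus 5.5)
Your proposal is correct and takes the same approach as the paper, which likewise builds $A$ by relabelling every transition $x/y$ of $B$ with just $x$ and keeping the states, initial states and recurrence states unchanged. Your verification of both inclusions supplies details the paper leaves implicit: you reuse the same state sequence in each direction and choose the second-track letters pointwise.
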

  \begin{proof}
    We obtain~$A$ from~$B$ by replacing every transition labeled with some pair $x/y$ by a transition labeled only with~$x$.
  \end{proof}
\end{toappendix}
\begin{propositionrep}
  \label{prop:measure-bma}
  If $\kb\in\excBuchi$~is consistent,
  the ranking function~$f_d^{\kb}$ induces a well-founded BMA ranking.
\end{propositionrep}
\begin{proof}
  As $f_d^\kb$~ranges over the totally-ordered finite set~$\{-r_n,\ldots, -r_0\}$,
  the ranking is well-founded.

  To see that the ranking is recognized by a BMA,
  we construct B\"uchi automata $A_0,\ldots,A_n$ such that $\lang{A_i} = \{\,\tau\in\Sigma^\omega \mid f_d^\kb(\tau) = r_i \,\}$.
  It then suffices to apply \cref{prop:list-rank-bma} to the sequence~$A_n,\ldots,A_0$.

  Towards a construction for the~$A_i$,
  let $\kb=\support{A}$ for a B\"uchi automaton~$A$,
  and observe:
  \[
    f_d^{\kb}(\tau) = r_i \iff (\exists \tau' \in \lang{A}\,.\, d_\infty(\tau, \tau') = r_i) \land \bigwedge_{j<i}\lnot\exists \tau''\in\lang{A}\,.\, d_\infty(\tau,\tau'') = r_j
  \]
  We first construct a BMA~$B_i$ such that $\rel{B_i}=\{\,(\tau,\tau') \mid d_\infty(\tau,\tau') = r_i\,\}$.
  Specifically, we set $B_i = (\{r_0,\ldots,r_n\},\Sigma_\textrm{BM}, \Delta_d, \{r_0\}, \{r_i\})$,
  with
  $
    \Delta_d = \{\, (r, x/y, r') \mid r' = \max(r, d(x,y)) \,\}
  $.
  This BMA keeps track of the maximal distance between the already-read prefixes of the two traces~$\tau,\tau'$.
  The maximal distance can only increase (never decrease) while reading a word.
  Hence, an accepting run must eventually reach and then forever remain in the state~$r_i$.
  I.e., there is an index where the traces have distance~$r_i$, and no index where they have greater distance: $d_\infty(\tau,\tau')=r_i$.
  Conversely, if $d_\infty(\tau,\tau')=r_i$, then the state $r_i$ is reached after finitely many steps, and no greater distance is observed,
  so the run for the pair $(\tau,\tau')$ is accepting.

  We apply \cref{lem:buchi-product-bma} for the existence of a B\"uchi automaton recognizing~$\Sigma^\omega \times \lang{A}$.
  By applying \cref{lem:bma-exists-projection} to the intersection of this automaton with~$B_i$,
  we derive a B\"uchi automaton~$A'_i$
  such that $\lang{A'_i} = \{\,\tau \mid \exists \tau'\in\lang{A}\,.\, d_\infty(\tau,\tau') = r_i\,\}$.
  Noting that
  $
    \lang{A_i} = \lang{A'_i} \cap \bigcap_{j<i} \Sigma^\omega \setminus \lang{A'_j}\ ,
  $
  we derive $A_i$ using standard constructions for the intersection and complementation of B\"uchi automata.
\end{proof}
\paragraph{(4) Hierarchical Preferences.}
A powerful tool when describing (ranked) preferences is the ability to express \emph{hierarchical} preferences.
For instance, when discussing the weather,
we might prefer (resp.\ consider as more plausible) traces in which \quot{soon, it will never snow again},
but among two traces that agree on the last snow day,
we might prefer the trace that is closer to our belief base~$\kb$.

Given ranking functions~$f_i : \Sigma^\omega \to Y_i$ over totally-ordered sets~$(Y_i,{<_i})$, for $i\in\{1,2\}$,
a hierarchical preference can be expressed as the ranking function
$
  f_{12} : \Sigma^\omega \to Y_1\times Y_2$ with $f(\tau) = (f_1(\tau), f_2(\tau))
$,
where $Y_1\times Y_2$ is totally ordered by the lexicographical combination of~${<_1},{<_2}$,
i.e., $(y_1,y_2) <_{12} (y_1',y_2')$ if $y_1 <_1 y_1'$, or $y_1=y_1'$ and $y_2 < y_2'$.
BMA rankings are closed under this hierarchical composition.
\begin{propositionrep}
  Let ${\pref_1},{\pref_2}$ be BMA rankings induced by ranking functions~$f_1$ resp.~$f_2$.
  The ranking function~$f_{12}$ induces a BMA ranking~${\pref_{12}}$.
  If ${\pref_1}$~and~${\pref_2}$ are well-founded, so is the ranking~${\pref_{12}}¨$.
\end{propositionrep}
\begin{proof}
  We begin with the following equivalence:
  \begin{align*}
    \tau_1 \pref_{12} \tau_2
    &\null\iff
      f_{12}(\tau_1) = (f_1(\tau_1), f_2(\tau_1)) <_{12} (f_1(\tau_2), f_2(\tau_2)) = f_{12}(\tau_2)\\
    &\null\iff
      f_1(\tau_1) <_1 f_1(\tau_2) \lor \big( f_1(\tau_1) = f_1(\tau_2) \land f_2(\tau_1) <_2 f_2(\tau_2) \big)\\
    &\null\iff
      \tau_1 \pref_1 \tau_2 \lor (\tau_1 \incomp[\pref_1] \tau_2 \land \tau_1 \pref_2 \tau_2)
  \end{align*}
  where ${\incomp[\pref_1]}$~denotes the incomparability relation corresponding to~${\pref_1}$,
  ie., $y \incomp[\pref_1] y'$ holds if and only if $\lnot(y \pref_1 y')$ and $\lnot(y' \pref_1 y)$ both hold.
  Thus, we have ${\pref_{12}} = {\pref_1} \cup ({\incomp[\pref_1]} \cap {\pref_2})$,
  with ${\incomp[\pref_1]} = (\Sigma_\textrm{BM}^\omega \setminus {\pref_1}) \cap (\Sigma_\textrm{BM}^\omega \setminus {\pref_1}^{-1})$.
  Using~\cref{lem:bma-closed-ops}, we conclude that ${\pref_{12}}$~is a BMA relation.

  Finally, to see that ${\pref_{12}}$~is well-founded if ${\pref_1}$~and~${\pref_2}$ are well-founded,
  assume (for purposes of contradiction) that there existed an infinite ascending sequence $\tau_1 \pref_{12} \tau_2 \pref_{12} \cdots$.
  As ${\pref_1}$~is by assumption well-founded,
  from some point~$n$ on all the traces~$\tau_i$ (with $i \geq n$) must be in the same rank wrt.~${\pref_1}$.
  But then it follows that $\tau_n \pref_2 \tau_{n+1} \pref_2 \ldots$ is an infinite ascending sequence wrt.~${\pref_2}$,
  which contradicts our assumption that ${\pref_2}$~is well-founded.
\end{proof}
We denote the composed ranking as~${\pref_1} \triangleright {\pref_2}$.
In combination with the other constructions presented above,
hierarchical preference composition allow us to express complex preference relations.
\begin{example}
  Consider the BMA ranking~$\pref_\psi$, for $\psi = \Globally\lnot s$ (\quot{it is plausible that soon, it will never snow again}),
  and the BMA ranking~$\pref_\textsf{Dalal}^\kb$ induced by the Dalal distance measure.
  By composing the two rankings, we arrive at an epistemic preference relation~${\pref_\psi} \triangleright {\pref_\textsf{Dalal}^\kb}$
  that prefers traces in which the last snow day occurs sooner rather than later;
  and among traces that agree on the last snow day,
  it prefers traces that are closer to~$\kb$ (as defined by the Dalal-Chebyshev distance).
  Through this composition,
  each of the infinitely many ranks of~${\pref_\psi}$
  is sub-divided into a finite number of sub-ranks based on the Dalal distance.
\end{example}

\section{Conclusion}
\label{sec:conclusion}

Epistemic preference relations form the backbone of rational belief change.
We have studied the particular case of effective AGM contraction in an expressive, non-finitary logic, namely linear temporal logic (LTL).
In this setting, preference relations are represented via so-called \emph{B\"uchi-Mealy automata}
and must satisfy two core properties to ensure rationality of the corresponding contraction.
We have shown that while one of the properties, \prop{Mirroring}, is decidable,
the key property ensuring success of the contraction, \prop{Maximal Cut}, is undecidable.
This has far-reaching consequences, as it rules out the possibility of a representation result for effective LTL contraction.
Consequently, effective LTL contraction requires custom design of suitable preference relations for particular applications.

We have proposed a number of constructions for preference relations
that guarantee the required properties \emph{by design};
based on the observation that such relations must necessarily be \emph{rankings},
potentially with an infinite number of ranks.
One of our constructions generalizes an idea from classical belief change,
namely the construction of preferences based on the distance to the models of the current epistemic state.
	Constructing preference relations for classical propositional logics is also a challenge. 
Although some approaches were proposed~\citep{Dalal88,KatsunoM92}, very few achieve full rationality, being Dalal and some \textit{culpability measures}~\citep{RibeiroT21} the few known ones. %
Dalal can only be used on consistent beliefs, %
while \textit{culpability measures} work only for inconsistent beliefs. %
Our approaches easily apply for propositional logics, and provide novel and richer alternatives to obtain preference relations.

The attentive reader may have observed that we have not shown the undecidability of \prop{Maximal Cut} \emph{assuming that \prop{Mirroring} holds},
or equivalently, the undecidability of the conjunction of the two properties.
We leave the full resolution of this question to future work.
However, let us remark that there are strong indications that undecidability indeed persists under such a restriction.
In particular, our proof of undecidability for \prop{Maximal Cut} relies on a reduction from a variation of the uniform halting problem,
i.e., the question whether a given Turing machine \emph{always terminates}.
In the study of \emph{program verification},
a link has long been made between termination and the existence of a (well-founded) \emph{ranking function} (sometimes also called \emph{variant} or \emph{termination measure}).
We suspect that closing the loop to our previous observation that relations satisfying both \prop{Mirroring} and \prop{Maximal Cut} must indeed be rankings
would allow for such a proof (though the technical details remain challenging).

\section*{Declaration on Generative AI}
  The author(s) have not employed any Generative AI tools.

\bibliography{references}

\end{document}